\setlist{itemsep=5pt}
\newcommand{\newreptheorem}[2]{%2
	\newenvironment{rep#1}[2][]{\@begintheorem{Restatement of #2 \ref{##2}\ifthenelse{\isempty{##1}}{}{ (##1)}}{\unskip}}{\@endtheorem}
}
\let\newoldtheorem\newtheorem
\renewcommandx{\newtheorem}[3][2=NoValueAtAll]{
	\ifthenelse{\equal{#2}{NoValueAtAll}}{\newoldtheorem{#1}{#3}}{\newoldtheorem{#1}[#2]{#3}}
	\newreptheorem{#1}{#3}
}
\newtheorem{defn}{Definition}
\newtheorem{thm}{Theorem}
\newtheorem{lem}[thm]{Lemma}
\newtheorem{prop}{Proposition}
\newtheorem{remark}{Remark}
\newtheorem{example}{Example}
\newcommand{\citeN}[1]{\cite{#1}}
\newcommand{\FDtask}[1]{{{#1}}}
\newcommand{\JBtask}[1]{{{#1}}} %blue
\newcommand{\MVtask}[1]{{{#1}}} % green
\newcommand{\TYtask}[1]{{{#1}}} % magenta
\newcommand{\BNFcce}{{\bf ::=}}
\newcommand{\BNFmid}{\;\bigr\rvert\;}
\newcommand{\PROGRAM}{\mathtt{P}}
\newcommand{\FUNCTION}{\mathtt{F}}
\newcommand{\main}{\mathtt{main}}
\newcommand{\s}{\mathtt{s}}
\newcommand{\e}{\mathtt{e}}
\newcommand{\emain}{\e_{\main}}
\newcommand{\fname}{\mathtt{d}}
\newcommand{\xname}{\mathtt{x}}
\newcommand{\yname}{\mathtt{y}}
\newcommand{\zname}{\mathtt{z}}
\newcommand{\oname}{\mathtt{b}}
\newcommand{\ocname}{\mathtt{bc}}
\newcommand{\lvalueSet}{\mathcal{L}}
\newcommand{\anyvalue}{\mathtt{v}}
\newcommand{\lvalue}{\ell}
\newcommand{\fvalue}{\phi}
\newcommand{\fvaluealt}{\psi}
\newcommand{\funvalue}{\mathtt{f}}
\newcommand{\truevalue}{\mathtt{True}}
\newcommand{\falsevalue}{\mathtt{False}}
\newcommand{\dc}{\mathtt{c}}
\newcommand{\dcOf}[2]{#1(#2)}
\newcommand{\auxNAME}{\textit{aux}}
\newcommand{\aux}[1]{\auxNAME(#1)}
\newcommand{\bodyNAME}{\textit{body}}
\newcommand{\body}[1]{\bodyNAME(#1)}
\newcommand{\argsNAME}{\textit{args}}
\newcommand{\args}[1]{\argsNAME(#1)}
\newcommand{\FVname}{\textbf{FV}}
\newcommand{\FV}[1]{\FVname(#1)}
\newcommand{\defK}{\mathtt{def}}
\newcommand{\nbrK}{\mathtt{nbr}}
\newcommand{\repK}{\mathtt{rep}}
\newcommand{\ifK}{\mathtt{if}}
\newcommand{\foldK}{\mathtt{foldhood}}
\newcommand{\letK}{\mathtt{let}}
\newcommand{\inK}{\mathtt{in}}
\newcommand{\eqSymK}[1]{\mathrm{ \{ #1 \} }}
\newcommand{\toSymK}[1]{\mathrm{ \texttt{=>} #1 }}
\newcommand{\selfK}{\mathtt{uid}}
\newcommand{\muxK}{\mathtt{mux}}
\newcommand{\minHoodK}{\mathtt{minHood}}
\newcommand{\snsNumK}{\mathtt{snsNum}}
\newcommand{\minHoodLoc}{\mathtt{minHoodLoc}}
\newcommand{\nbrlt}{\mathtt{nbrlt}}
\newcommand{\type}{\textit{T}}
\newcommand{\ltype}{\textit{L}}
\newcommand{\ftypeOf}[1]{\mathtt{field}(#1)}
\newcommand{\btype}{\mathtt{bool}}
\newcommand{\ntype}{\mathtt{num}}
\newcommand{\tvar}{t}
\newcommand{\surfaceTyping}[3]{
  \begin{array}{l@{\;}c}
    \stackrel{~}{{\tiny \textrm{[#1]}}} & #2 \\ \hline 
    \multicolumn{2}{c}{\rule{0pt}{0.9\normalbaselineskip} #3}
  \end{array}
}
\newcommand{\nullsurfaceTyping}[2]{
  \surfaceTyping{#1}{}{#2}
}
\newcommand{\deviceIdSet}{\textbf{D}}
\newcommand{\builtinop}[3]{\llparenthesis #1 \rrparenthesis_{#2}^{#3}}
\newcommand{\filter}{F}
\newcommand{\Trees}{\Theta}
\newcommand{\emptyseq}{\bullet}
\newcommand{\devset}{I}
\newcommand{\Topo}{\tau}
\newcommand{\Sens}{\Sigma}
\newcommand{\Envi}{\textit{Env}}
\newcommand{\EnviS}[2]{#1,#2}
\newcommand{\SystS}[2]{\langle #1;#2\rangle}
\newcommand{\Field}{\Psi}
\newcommand{\Cfg}{N}
\newcommand{\System}{\mathcal{S}}
\newcommand{\wfn}[1]{\textit{WFE}(#1)}
\newcommand{\senstate}{\sigma}
\newcommand{\nettran}[3]{#1\xrightarrow{#2} #3}
\newcommand{\act}{\textit{act}}
\newcommand{\envact}{\textit{env}}
\newcommand{\envmap}[2]{#1\mapsto #2}
\newcommand{\mapupdate}[2]{#1[#2]}
\newcommand{\proj}[2]{{#1}|_{#2}}
\newcommand{\ruleNameSize}[1]{{\scriptsize #1}}
\newcommand{\domofNAME}{\textbf{dom}}
\newcommand{\domof}[1]{\domofNAME(#1)}
\newcommand{\vtree}{\theta}
\newcommand{\mkvt}[2]{#1 \langle #2 \rangle}
\newcommand{\piB}[1]{\pi^{#1}}
\newcommand{\piBof}[2]{\piB{#1}(#2)}
\newcommand{\piI}[1]{\pi_{#1}}
\newcommand{\piIof}[2]{\piI{#1}(#2)}
\newcommand{\piIofOv}[1]{\overline{\pi}(#1)}
\newcommand{\bsopsem}[4]{#1;#2\vdash #3\Downarrow #4}
\newcommand{\deviceId}{\delta}
\newcommand{\vroot}{\mathbf{\rho}}
\newcommand{\vrootOf}[1]{\vroot(#1)}
\newcommand{\substitution}[2]{#1:=#2}
\newcommand{\applySubstitution}[2]{#1[#2]}
\newcommand{\skiptransition}{\\[10pt]}
\newcommand{\netopsemRule}[3]{\surfaceTyping{#1}{#2}{#3}}
\newcommand{\builtindenot}[2]{\mathcal{#1}\llbracket #2 \rrbracket}
\newcommand\pto{\mathrel{\ooalign{\hfil$\mapstochar$\hfil\cr$\to$\cr}}}
\newcommand{\denottype}[1]{\mathcal{T}\llbracket{#1}\rrbracket}
\newcommand{\denotval}[1]{\mathcal{V}\llbracket {#1} \rrbracket}
\newcommand{\dvalue}[0]{\Phi}
\definecolor{dark-gray}{gray}{0}
\newcommand{\cp}[1]{\left( #1 \right)}
\newcommand{\qp}[1]{\left[ #1 \right]}
\newcommand{\ap}[1]{\langle #1 \rangle}
\newcommand{\bp}[1]{\left\lbrace #1 \right\rbrace}
\newcommand{\vp}[1]{\left\lvert #1 \right\rvert}
\DeclareMathOperator{\dist}{dist}
\begin{document}

% Page heads
\markboth{M. Viroli et al.}{Engineering Resilient Collective Adaptive Systems by Self-Stabilisation}

% Title portion
\title{Engineering Resilient Collective Adaptive Systems by Self-Stabilisation}
\author{Mirko Viroli, Giorgio Audrito, Jacob Beal,\\Ferruccio Damiani, Danilo Pianini}
\date{}

\maketitle

\begin{abstract}
Collective adaptive systems are an emerging class of networked computational systems, particularly suited in application domains such as smart cities, complex sensor networks, and the Internet of Things.
These systems tend to feature large scale, heterogeneity of communication model (including opportunistic peer-to-peer wireless interaction), and require inherent self-adaptiveness properties to address unforeseen changes in operating conditions.
In this context, it is extremely difficult (if not seemingly intractable) to engineer reusable pieces of distributed behaviour so as to make them provably correct and smoothly composable.

Building on the field calculus, a computational model (and associated toolchain) capturing the notion of aggregate network-level computation, we address this problem with an engineering methodology coupling formal theory and computer simulation.
On the one hand, functional properties are addressed by identifying the largest-to-date field calculus fragment generating self-stabilising behaviour, guaranteed to eventually attain a correct and stable final state despite any transient perturbation in state or topology, and including highly reusable building blocks for information spreading, aggregation, and time evolution.
On the other hand, \TYtask{dynamical} properties are addressed by simulation, empirically evaluating the different performances that can be obtained by switching \TYtask{between} implementations of building blocks with provably equivalent functional properties.
Overall, our methodology sheds light on how to identify core building blocks of collective behaviour, and how to select implementations that improve system performance while leaving overall system function and resiliency properties unchanged.
\end{abstract}

%
% The code below should be generated by the tool at
% http://dl.acm.org/ccs.cfm
% Please copy and paste the code instead of the example below. 
%
 
%
% End generated code
%

% We no longer use \terms command
%\begin{bottomstuff}
%Author's addresses:
%M. Viroli {and} D. Pianini, Computer Science Department, Universit\`a di Bologna, Italy, \{mirko.viroli,danilo.pianini\}@unibo.it;
%G. Audrito {and} F. Damiani, Computer Science Department, Universit\`a di Torino, Italy, \{audrito,damiani\}@di.unito.it;
%J. Beal, Raytheon BBN Technologies, USA, jakebeal@ieee.org
%\end{bottomstuff}

\section{Introduction}
\label{sec:introduction}

Collective adaptive systems are an emerging class of networked computational systems situated in the real-world, finding extensive application in domains such as smart cities, complex sensor networks, and the Internet of Things.
The pervasive nature of these systems can potentially fulfill the vision of a fully integrated digital and physical world.
With collective adaptive systems, in the near future one may easily envision ``enhanced'' living and working environments, thanks to computing devices connected to every physical object that provide increasingly powerful capabilities of computing, storage of local data, communication with neighbours, physical sensing, and actuation.
\TYtask{Such environments} pave the way towards implementing any non-trivial pervasive computing service through the inherent distributed cooperation of a large set of devices, so as to address by self-adaptation the unforeseen changes in working conditions that necessarily happen---much in the same way adaptivity and resilience are addressed in complex natural systems at all levels, from molecules and cells to animals, species, and entire ecosystems \cite{ZV-JPCC2011}.

A long-standing aim in computer science has indeed been to find effective engineering methods for exploiting mechanisms for adaptation and resilience in complex, large-scale applications.
Practical adoption, however, poses serious challenges, since such mechanisms need to carefully trade efficiency for resilience, and are often difficult to predictably compose to meet more complex specifications.
Despite much prior work, e.g., in macroprogramming, spatial computing, pattern languages, etc. (as surveyed in~\cite{SpatialIGI2013}), to date no such approach has provided a comprehensive workflow for efficient engineering of complex
self-organising systems.

\MVtask{Recently, however, among the many related works (see Section 2), two key ingredients have been provided toward
such an engineering workflow.}
First, the \emph{computational field calculus}~\cite{VDB-FOCLASA-CIC2013,forte2015} provides \FDtask{a language} for specifying large-scale distributed computations and, critically, a functional programming model for their encapsulation and safely-scoped composition. \MVtask{This framework assumes that the system is composed of a discrete set of devices deployed in a space equipped with a notion of locality: each device works in asynchronous computational rounds producing a result data that is sent to local neighbours\footnote{Hence, we do not specifically deal with continuous functions and with virtual nodes that do not host computation---though they are mechanisms that might be mimicked: e.g., approximation of continuous functions can be developed along the lines of \cite{BVPD-SASO2016}.}.}
Second, \MVtask{a set of sufficient conditions for ``self-stabilisation'' have been identified~\cite{VD-COORD2014-LNCS2014,BV-FOCAS2014,LMCS:selfsabilisation}, guaranteeing that a large class of programs are all self-adaptive systems resilient to changes in their environment}\MVtask{---more precisely, after some period without changes in the computational environment, such a distributed computation reaches a stable state that only depends on inputs and network topology (i.e., the converged state is independent of computational history).}
\MVtask{As an example, such conditions reveal the non-resiliency of gossiping to find the minimum of a given value across a network: since each node continuously exposes that minimum of the values received from neighbours, the system can't recover from the temporary decrease of a value below the minimum \cite{LMCS:selfsabilisation}.}

This paper \TYtask{combines} these two advances \MVtask{with an approach to optimisation of self-organising systems via substitution of equivalent
coordination mechanisms}, guaranteed to result in the same functional behaviour though with different performance characteristics.
\TYtask{Together,} they combine into a workflow for efficient
engineering of complex self-organising systems
% (Figure~\ref{f:workflow}), 
in which, once a distributed system is framed as a computation over fields, then:
\begin{enumerate}
 \item a minimal resilient implementation is created, by composition of building blocks extracted from a library of reusable self-stabilising components or designed ad-hoc;
 \item optimisation of performance is achieved by selective substitution of building block instances with alternate implementations, whose performance is checked by simulation.
\end{enumerate}

\TYtask{This workflow is backed} by pairing formal modelling and simulation of complex distributed systems.
On the one hand, functional properties are addressed by building systems on top of a formally proved language of self-stabilising specifications, which \TYtask{also establishes the functional equivalence} of certain building blocks.
On the other hand, dynamical properties are addressed by simulation, empirically evaluating the different performance of systems in which building blocks are selectively substituted by provably equivalent implementations.
\MVtask{In particular, the use of empirical analysis for the large-scale systems we consider (even though it may result in sub-optimality), is motivated by the fact that finding an optimal combination of alternative distributed systems implementations easily becomes a computationally hard problem \cite{DBLP:conf/emsoft/DarulovaKMS13}, that -- to the best of our knowledge -- has never been addressed in literature.}

The technical contributions of this paper with respect to previous work are the following:
\begin{itemize}
 \item we provide a simplified operational semantics of the first-order field calculus, reducing the formalisation in \cite{DVB-SCP2016} along the lines of the approach proposed in \cite{forte2015} for the higher-order version of the calculus;
 \item by developing on \cite{VBDP-SASO2015}, we provide the largest to date fragment of the calculus that is provably self-stabilising, using a proof methodology showing inevitable reachability of a unique stable state \cite{LMCS:selfsabilisation}---the calculus is shown to include \MVtask{the basic self-organisation building blocks defined in \cite{BV-FOCAS2014}};
 \item we provide alternative implementations of these building blocks (some new and some consolidating existing algorithms), still in the \TYtask{self-stabilising} fragment, and proved equivalent to the original \TYtask{versions};
 \item we empirically evaluate and compare the performance of the alternate versions of the building blocks, characterising the contexts in which a given implementation can be favoured.
\end{itemize}

The remainder of this paper is organised as follows:
Section~\ref{sec:background} reviews related work and discusses the background and motivation for this paper, presenting the methodological workflow in the context of the field calculus;
Section~\ref{sec:calculus} then formalises syntax, semantics and properties of the field calculus, providing building block examples showcasing its expressiveness;
Section~\ref{sec:stabilisation} presents our self-stabilisation framework, \MVtask{with formal definition and methodological implications};
Section~\ref{sec:fragment} provides the self-stabilising fragment, proof of self-stabilisation, proof of membership for the building blocks, \MVtask{and several motivating examples};
Section~\ref{sec:alternatives} defines alternative implementations of the building blocks, and empirically evaluates their performance;
Section~\ref{sec:casestudy} \TYtask{presents} two case studies illustrating the methodology;
%
%Section \ref{sec:related} presents related works before concluding with Section \ref{sec:conclusions}.
%
and Section \ref{sec:conclusions} summarises and concludes.

% % % % % % % % % % % % % % % % % % % % % % % % % % % % % % % % % % % % % % % % % % % % % % % % %

\section{Related Work, Background and Motivation}
\label{sec:background}

The approach we propose in this paper falls under the umbrella of \emph{aggregate computing} \cite{BPV-COMPUTER2015}, a framework for designing resilient distributed systems based on the idea of abstracting away from \TYtask{the behaviour of individual devices}: system design focusses instead on the global, aggregate behaviour of the collection of all (or a subset of) the devices.
Put in other words, aggregate computing considers as ``abstract computing machine'' the whole set of devices seen as a single ``body.''
Coupled with a formal computational model, this approach has the goal of providing smooth composition of distributed behaviour, and trading off expressiveness with the ability to control the outcome of system design. This is done 
by relying on a formal model guaranteeing functional properties and relying on other means (such as simulation) for assessing \TYtask{dynamical} properties.

\subsection{Relationship to Prior Work}

The work presented in this paper builds on two well-developed areas of prior work: 
aggregate programming languages, which address the challenges of programming collectives of devices,
and self-stabilisation, which formalises a useful class of resilient system behaviours.

\subsubsection*{Aggregate programming}

Aggregate programming methods of many sorts have been developed across a wide variety of applications and fields.
A thorough review may be found in~\cite{SpatialIGI2013}, in which four main approaches to aggregate programming are identified.
First, many ``bottom-up'' methods aim to simplify aggregate programming by focusing on abstracting and simplifying the programming of individual networked devices.
These methods include TOTA~\cite{tota}, the Hood sensor network abstraction~\cite{hood}, 
the chemical models by~\citeN{VPMSZ-SCP2015}, Butera's ``paintable computing'' hardware model~\cite{butera}, 
and Meld~\cite{Meld}.
\MVtask{In the context of parallel computing, the Bulk Synchronous Parallel (BSP) model~\cite{valiant1990bsp} facilitates programming by enacting synchronisation barriers that allow multiple processors to synchronise, e.g., up to allow for computation to proceed on system-wide rounds.}
\JBtask{Similarly, a number of cloud computing models (e.g., MapReduce~\cite{dean2008mapreduce}) provide bulk programming models that abstract away network structure completely or nearly so.}

Three families of ``top-down'' approaches are complementary to these bottom-up methods.
These higher-level approaches specify tasks for aggregates and then translate, by means of compilers or similar software, from aggregate specifications into a set of individual local actions that can implement the desired aggregate behaviour.
These approaches also tend to build in at least some notion of implicit resilience, though the specifics vary wildly from approach to approach.

One of these families focusses on creating geometric and topological patterns, such as the topological networks of Growing Point Language~\cite{coorephd}, the geometric patterns of Origami Shape Language~\cite{nagpalphd}, the self-healing geometries by~\citeN{clement2003self} and~\citeN{kondacs}, or Yamins' universal patterns~\cite{yamins}.
Another largely disjoint family instead aims at summarisation and streaming of information over regions of space and time.
Examples include sensor-network query languages like TinyDB~\cite{Madden:SIGOPS-2002}, Cougar~\cite{Yao02thecougar},
TinyLime~\cite{Curino05mobiledata}, and Regiment~\cite{regiment}.

The third family, generalising over all of the prior approaches, are general purpose space-time computing models.
Some of these are spatial parallel computing models, such as StarLisp~\cite{starlisp} and systolic computing
(e.g., the works by \citeN{SDEF} and \citeN{ReLaCS}), which use parallel shifting of data on a structured network.
Others, such as MGS~\cite{GiavittoMGS02,GiavittoMGS05}, are more topological in nature.
Because of their generality, this class of computing models can form the basis of a layered approach to the construction of distributed adaptive systems, as in our previous work on field calculus~\cite{DVB-SCP2016,forte2015} and the generalised framework of aggregate programming~\cite{BPV-COMPUTER2015,VBDP-SASO2015}.

\subsubsection*{Self-stabilisation} The primary focus of the work in this paper is to find sufficient conditions for identifying a large class of complex network computations whose outcome is predictable \TYtask{despite} transient changes in \TYtask{their} environment \TYtask{or} inputs, and to express this class in terms of a language of resilient programs that can be used to create such systems by construction.
The notion we focus on requires a unique global state (being reached in finite time) depending only on the state of the environment (topology and sensors), that is,  independent of the initial state.
We speak of this property as \emph{self-stabilisation} since it is contained within the notion of self-stabilisation to \emph{correct} states for distributed systems \cite{dolev}, defined in terms of a set $C$ of correct states into which the system eventually enters in finite time, and then never escapes from---in our case, $C$ is made up of only the single state \MVtask{eventually reached} and corresponding to the intended result of computation, obtained as a function from inputs and environment.

Several versions of the notion of self-stabilisation can be found in literature, surveyed by \citeN{S93c}, from works by \citeN{D73b} to more  abstract ones \cite{AG93}, usually depending on the reference model for the system to study---protocols, state machines, and so on.
In our case, self-stabilisation is studied for computational fields, which can be considered as data structures distributed over space.
\TYtask{However,} since previous work trying to identify general conditions for self-stabilisation (e.g., by \citeN{HP01}) only considers very specific models (e.g. heap-like data structures in a non-distributed settings), it is difficult to make a precise connection with those prior results.

Some variations of the definition of self-stabilisation also deal with different levels of quality (e.g., fairness, performance).
For instance, the notion of superstabilisation \cite{Dolev:1997:SPD:866056} extends the standard self-stabilisation definition by adding a requirement on a ``passage predicate'' that should hold while a system recovers from a specific topological change.
Our work does not address this particular issue, since we  completely equate the treatment of topological changes and changes to the inputs (\TYtask{e.g.}, sensors), and do not address specific performance requirements formally.
Performance is also affected by the fairness assumption adopted: we relied on a notion abstracting from more concrete ones typically used \cite{KC98}\TYtask{---these more concrete models could be applied with our work as well, but would reduce the generality of our results}.
\TYtask{Instead}, we address performance issues in a rather different way: we \TYtask{allow for multiple} different implementations of given building block functions\TYtask{, trading} off reactiveness to different kinds of changes in different ways\TYtask{,} proved equivalent in their final result, and selected based on empirical evaluation.

Concerning specific results on self-stabilisation, some approaches have achieved results \TYtask{that} more closely relate to ours.
\citeN{dolev} introduced  a hop-count gradient (\MVtask{computing minimum hop-by-hop distance from a source node}) that is known to self-stabilise and used it as a preliminary step in the creation of the spanning tree of a graph.
Other authors attempt to devise general methodologies.
\citeN{AV91} depicted a compiler turning any protocol into a self-stabilising one. Though this is technically unrelated to our solution, it shares the philosophy of hiding the details of how self-stabilisation is achieved under the hood of the execution platform: 
in our case in fact, the designer \TYtask{is intended} to focus on the macro-level specification, 
trusting that components behave and interact so as to achieve the global outcome in a self-stabilising way.
\TYtask{Similarly,} \citeN{GH91} suggested that hierarchical composition of self-stabilising programs is self-stabilising\TYtask{---}an idea that is key here to construct a whole functional language of self-stabilising programs.

Concerning the specific technical result achieved here in the context of the field calculus, and apart form the work by \citeN{VBDP-SASO2015} that we extend here, the closest prior work appears to be the work by \citeN{LMCS:selfsabilisation} which, to the best of out knowledge, is the first attempt \TYtask{at} providing a notion of self-stabilisation directly connected to the problem of engineering self-organisation.
As in the present work, self-stabilisation is not proved  for a specific algorithm or system, but is proved for all fields inductively obtained by functional composition of fixed fields (sensors, values) and by a \MVtask{spanning-tree}-inspired spreading process. In this paper we consider a more liberal programming language and also address dynamical properties by simulation.
Finally, an alternative approach to prove self-stabilisation for computational fields is developed in \cite{DBLP:journals/corr/Lluch-LafuenteL16}, in which it is seen in terms of a fix-point semantics, and currently includes only structures \MVtask{based on spanning trees}.

\subsection{Computing with Fields}

The computational model of aggregate computing uses as basic unit of data a dynamically changing
{\em computational field} (or field for short) of values held across many devices in the network.
More precisely, a \emph{field value} $\phi$ is a function
\mbox{$\phi: D \rightarrow \lvalueSet$} that maps each device
$\deviceId$ in domain $D$ to a local value $\lvalue$
in range $\lvalueSet$.
\MVtask{Similarly, a \emph{field evolution} is a dynamically changing field value, and a field computation can be seen as taking field evolutions as input (e.g., from sensors or user inputs) and producing a field evolution as output, from which field values are (distributed) snapshots.}
%
%Similarly, a \emph{field evolution} is a dynamically changing field value, i.e., a function mapping each point in time to a field value (evolution is used here in the physics sense of ``time evolution'').
%
%A field computation can be seen as taking field evolutions as input (e.g., from sensors or user inputs) and producing a field evolution as output, from which field values are snapshots.
%
For example, given an input of a Boolean field mapping certain devices
of interest to {\em true}, an output field of estimated distances to
the nearest such device can be constructed by iterative
aggregation and spreading of information, such that as the input
changes the output changes to match\MVtask{---this computation is referred to in this paper as \texttt{distanceTo}; it is also sometimes elsewhere referred to as a \emph{gradient computation} (e.g., \cite{crf,Beal:FLEX})}.
Note that while \TYtask{the computational field} model maps most intuitively onto
spatially-embedded systems, it can be used for
any distributed computation (though it tends to be best suited for
sparse networks, of which spatially-embedded systems are an example).

Critically to the approach, any field computation can be properly turned into an equivalent single device behaviour, to be iteratively executed by all devices in the network.
Namely, this is carried on in \MVtask{(per-device)} \emph{computation rounds}: sense-eval-broadcast iterations, in which information coming from neighbours and from local sensors are collected \MVtask{in a device}, the computation is evaluated against the \TYtask{device's} local state, and a result of computation is broadcast to neighbours (which will \TYtask{collect and use that state} in their own \TYtask{future} rounds of computation).

\subsection{Proposed workflow}\label{sec:workflow}
   
Our proposed workflow is % illustrated in Figure~\ref{f:workflow}. It is
based on computational field calculus~\cite{forte2015} (or field calculus for short), a tiny \FDtask{functional}
language, in which any distributed computation can be expressed,
encapsulated, and safely composed.
\FDtask{Field calculus is a general-purpose language in which it is possible to express both resilient and non-resilient computations.}
\JBtask{For example, field calculus can express computing the minimum value in a network by gossip or by directed aggregation: the gossip implementation is non-resilient, because it cannot track a rising minimum, while the directed aggregation implementation is resilient and can track both rising and falling minimum values.}
Field calculus can, however, be restricted to a sub-language in which
all programs are guaranteed resilient in the \TYtask{sense} of
self-stabilisation, as discussed in the following.

The succinctness of field calculus that makes formal proofs tractable, however, 
is not well suited for the practical engineering of self-organising systems, especially when one needs to scale to complex designs.
This can be mitigated by highly reusable ``building block''
operators capturing common coordination
patterns~\cite{BV-FOCAS2014}, thus raising the abstraction level
and allowing programmers to work with general-purpose functionalities or user-friendly APIs capturing
common use patterns.

These building blocks, despite their desirable resilience properties,
may not be particularly efficient or have desirable dynamical
properties in the specific application at hand.
We thus incorporate a new insight: due to the functional composition model
and modular proof used in establishing the self-stabilising calculus,
any coordination mechanism that is guaranteed to self-stabilise to the
same result as a building block can be substituted for that building
block without affecting the self-stabilisation of the overall program, 
including its final output.
This allows us to include \TYtask{alternative implementations} in our ``library of self-stabilising blocks:'' blocks \TYtask{that} are functionally equivalent but trade off performance
in different ways or have more desirable dynamics
(typically specialised for particular applications of the building
blocks, as the base operators are extremely generic).

Together, these insights enable a two-stage engineering workflow
%(Figure~\ref{f:workflow}),
  \TYtask{that} progressively treats complex
specification, resilience, and efficiency. The starting point for the workflow is a specification of the
desired aggregate behaviour to be implemented by the self-organising
system.  Following this:

\begin{enumerate}
\item The specification is expressed \MVtask{as a} composition of coordination
  patterns (e.g., information spreading, information collection, state
  tracking) that can be mapped onto building block operators.  The
  result is a ``minimal resilient implementation'' 
  guaranteed to be self-stabilising but possibly far
  from optimal.

\item Each building block is then considered for replacement with
  a mechanism from the
  substitution library expected to provide better performance,
  confirming the improvement by analysis or simulation, then iterating, 
  until a satisfying level of performance is achieved. %no further improvement can be made.
\end{enumerate}

Finally, given the intrinsic extensibility of our approach, our library of building blocks can be naturally extended with new blocks and/or alternative block implementations, as will likely be needed when addressing some novel application scenarios. %---see Figure \ref{f:workflow}.

% % % % % % % % % % % % % % % % % % % % % % % % % % % % % % % % % % % % % % % % % % % % % % % % %

\section{Field Calculus}
\label{sec:calculus}

\FDtask{In this section}, we present the first-order Field Calculus~\cite{DVB-SCP2016}, with a syntax inspired by recent DSLs implementing the constructs of the calculus \cite{Casadei:PMLDC16} 
(in Section~\ref{sec:syntax}),\footnote{The original formulation of the Field Calculus~\cite{VDB-FOCLASA-CIC2013,DVB-SCP2016} uses a Scheme-like syntax reflecting earlier implementations~\cite{ProtoSemantics12}.} 
  its operational semantics  (in Section~\ref{sec:semantics}), 
 %some properties (in Section~\ref{sec:properties}),  
 a convenient  minimal extension allowing for functional parameters  (in Section~\ref{sec:parametrisation})   and examples including key building blocks for the paper (in Section~\ref{sec:blocks}).

\FDtask{Our formulation assumes a denumerable set of device identifiers, ranged over by $\deviceId$, such that each device has a distinguished identifier.
In the rest of the paper each device is represented by its identifier---our formalisation does not provide (and does not need) a syntax for  devices.}

\subsection{Syntax}\label{sec:syntax}

\begin{figure}[!t]
\centering
\centerline{\framebox[\linewidth]{$
	\begin{array}{lcl@{\hspace{5pt}}r}
		\PROGRAM & \BNFcce & \overline{\FUNCTION}  \; \e
																																		&{\footnotesize \mbox{program}} \\[3pt]
		\FUNCTION & \BNFcce &  \defK \; \fname (\overline{\xname}) \; \eqSymK{\e}
																																		&{\footnotesize \mbox{function declaration}} \\[3pt]
		\e & \BNFcce &  \xname \; \BNFmid \; \anyvalue 
                \; \BNFmid \; \letK \; \xname = \e \; \inK \; \e
               \; \BNFmid \; \funvalue(\overline\e) 
&{\footnotesize \mbox{expression}} \\[3pt]
&&  \; \BNFmid \;
		\ifK (\e) \{ \e \} \{ \e \}  \; \BNFmid \; \nbrK\{\e\} \; \BNFmid \; \repK(\e)\{ (\xname) \toSymK{} \e \} 
 \\[3pt]																												
		\anyvalue & \BNFcce &  \lvalue \; \BNFmid \; \fvalue
																																		&{\footnotesize \mbox{value}} \\[3pt]
		\lvalue & \BNFcce &  \dcOf{\dc}{\overline\lvalue}
																																		&{\footnotesize \mbox{local value}} \\[3pt]
		\fvalue & \BNFcce &  \envmap{\overline\deviceId}{\overline\lvalue}
																																		&{\footnotesize \mbox{neighbouring field value}} \\[3pt]
		\funvalue & \BNFcce &  \fname \; \BNFmid \; \oname
																																		&{\footnotesize \mbox{function name}} \\[3pt]
	\end{array}
	$}
}
\caption{Syntax of field calculus.}
\label{fig:syntax}%\vspace{-10pt}
\end{figure}

Figure~\ref{fig:syntax} presents the syntax of the field calculus. Following~\cite{FJ}, the overbar notation denotes metavariables over sequences and the empty sequence is denoted  by $\emptyseq$. E.g., for expressions, we let $\overline\e$ range over sequences of expressions, written $\e_1,\,\e_2,\,\ldots\,,\e_n$ $(n\ge 0)$. Similarly, formulas containing one or more sequences in overbar notation are supposed to be duplicated for each element of the sequences (which are assumed to share the same length): e.g., $\overline\funvalue(\e) = \overline\anyvalue$ is a shorthand for $\funvalue_i(\e) = \anyvalue_i$ for $i=1 \ldots |\anyvalue|$.

A program  $\PROGRAM$ consists of a sequence of function declarations and of a main expression $\e$. A function declaration $\FUNCTION$ defines a (possibly recursive) function, where $\fname$ is the function name, $\overline\xname$ are the parameters and $\e$ is the body. 
Expressions $\e$ are the main entities of the calculus, modelling a whole field (that is, an expression $\e$ evaluates to a value on every device in the network, thus producing a computational field). An expression can be:
\begin{itemize}
	\item a variable $\xname$, declared either as function formal parameter or as local to a  $\letK$- or $\repK$-expression;
	\item a value, which in turn could be either a \emph{local value} (associating each device to a computational value -- numbers, literals, and so on -- defined through data constructors $\dc$) or a \emph{neighbouring field value} $\fvalue$ (associating each device to a map from neighbours to local values---note \TYtask{that such values are} allowed to appear in intermediate computations but not in source programs);
	%\footnote{We assume that all data values have finite precision, a property that will be important later in proving convergence.}
       \item a $\letK$-expression $\letK \, \xname = \e_0 \, \inK \, \e$, which is evaluated by first computing the value $\anyvalue_0$ of $\e_0$ and then yelding as result the value of the expression obtained from $\e$ by replacing all the occurrences of the variable $\xname$ with the value $\anyvalue_0$;
	\item a function call $\funvalue(\overline\e)$, where $\funvalue$ can be either a \emph{declared function} $\fname$ or a \emph{built-in function} $\oname$ (\TYtask{such as accessing} \MVtask{sensors,} mathematical and logical operators\TYtask{, or data structure operations});
	\item a conditional $\ifK (\e_1) \{ \e_2\} \{ \e_3\}$, which is evaluated by splitting the computation into two sub-networks working in isolation: the devices that evaluated $\e_1$ to $\truevalue$ altogether compute expression $\e_2$, the devices that evaluated $\e_1$ to $\falsevalue$ compute $\e_3$;
	\item a $\nbrK$-expression $\nbrK\{\e\}$, modelling neighbourhood interaction and producing a neighbouring field value $\fvalue$ that represents an ``observation map'' of neighbour's values for expression $\e$, namely, associating each device to a map from neighbours to their latest evaluation of $\e$;
	\item or a $\repK$-expression $\repK(\e_1)\{(\xname) \toSymK{} \e_2\}$, \FDtask{evolving a local state through time by %repeatedly 
evaluating} an expression $\e_2$, substituting the variable $\xname$ with the value calculated for the whole $\repK$-expression at the previous computational round (in the first computation round $\xname$ is substituted with the value of $\e_1$). 
Although the calculus does not model anonymous functions, the syntax  $(\xname) \toSymK{} \e_2$ can be understood as defining an anonymous function with parameter $\xname$ and body $\e_2$.
\end{itemize}
The set of free variables in an expression $\e$ is denoted by 
$\FV{\e}$. As usual, we say that an expression
$\e$ is \emph{closed} iff $\FV{\e}$ is empty.

Values associated to data constructors $\dc$  of arity zero are written by omitting the empty parentheses, i.e.,  we write $\dc$ instead of $\dcOf{\dc}{}$. We assume a constructor for each literal value
 (e.g., $\falsevalue$, $\truevalue$, $0$, $1$, $-1$,...) and a built-in function $\ocname$ for every data constructor $\dc$ of arity $n\ge 1$, i.e., such that $\ocname(\e_1,...,\e_n)$ evaluates to  $\dcOf{\dc}{\lvalue_1,...,\lvalue_n}$ where each $\lvalue_i$ is the value of $\e_i$.
In case $\oname$ is a binary built-in operator, we allow infix notation to enhance readability: i.e., we shall \TYtask{sometimes} write $1+2$ for $+(1, 2)$.
To simplify notation (and following features present in concrete implementations of field calculus \cite{ProtoSemantics12}, \cite{Protelis15}), we shall also overload each (user-defined or built-in) function with local arguments to accept any combination of local and neighbouring field values: the intended meaning is then to apply the given function \emph{pointwise} to its arguments. For example, let $\fvalue$ be the neighbouring field $\envmap{\deviceId_1}{1},\envmap{\deviceId_2}{2},\envmap{\deviceId_3}{3}$ and $\fvaluealt$ be $\envmap{\deviceId_1}{10},\envmap{\deviceId_2}{20},\envmap{\deviceId_3}{30}$, we shall use $\fvalue + \fvaluealt$ for the pointwise sum of the two numerical fields giving the neighbouring field $\envmap{\deviceId_1}{11},\envmap{\deviceId_2}{22},\envmap{\deviceId_3}{33}$, or $1 + \fvalue$ for the field obtained incrementing by $1$ each value in $\fvalue$, namely, $\envmap{\deviceId_1}{2},\envmap{\deviceId_2}{3},\envmap{\deviceId_3}{4}$.

\FDtask{In the following we assume that the calculus is equipped with the type system defined by~\citeN{DVB-SCP2016}, which is  variant of the Hindley-Milner type system~\cite{Damas-Milner:POPL-1982} that has two kinds of types: local types (for local values) and field types (for neighbouring field values). This system associates to each local value a type $\ltype$, and type $\ftypeOf{\ltype}$ to a neighbouring field of elements of type $\ltype$, and correspondingly a type $\type$ to any expression.}

\begin{figure}[t]{
\small
\centerline{\framebox[\textwidth]{ $
\begin{array}{l@{\hspace{-0.2cm}}c@{\hspace{-0.6cm}}r}
\textrm{Built-in Function} & \textrm{Type Signature} & \textrm{Meaning} \\
\hline
\selfK() & () \to \ntype  
& \textit{\footnotesize  device identifier} \\
\texttt{+}, \texttt{-}, \texttt{*}, \texttt{/} & (\ntype,\ntype)\to\ntype
& \textit{\footnotesize  arithmetical operators} \\
\texttt{<}, \texttt{<=}, \texttt{=}, \texttt{>=}, \texttt{>} & (\ntype,\ntype)\to\btype
& \textit{\footnotesize  comparison operators} \\
\texttt{\&\&}, \texttt{||} & (\btype,\btype)\to\btype
& \textit{\footnotesize  boolean operators} \\
\texttt{mux}(\texttt{b}, \lvalue, \lvalue) & \forall\tvar. (\btype,\tvar,\tvar) \to \tvar  
& \textit{\footnotesize  multiplex selection} \\
\texttt{pair}(\lvalue, \lvalue) & \forall\tvar_1\tvar_2.(\tvar_1,\tvar_2)\rightarrow \texttt{tuple}(\tvar_1, \tvar_2)
& \textit{\footnotesize  pair construction} \\
\qp{\,\overline\lvalue\,} & \forall \overline\tvar. (\overline\tvar)\rightarrow \texttt{tuple}(\overline\tvar)
& \textit{\footnotesize  tuple construction} \\
\texttt{1st}(\lvalue), \texttt{2nd}(\lvalue), \texttt{3rd}(\lvalue) & \forall \overline\tvar. (\texttt{tuple}(\overline\tvar)) \rightarrow \tvar_i ~ (i=1,2,3)
& \textit{\footnotesize  tuple element access} \\
\texttt{pickHood}(\fvalue) & \forall\tvar.(\ftypeOf{\tvar})\to\tvar
& \textit{\footnotesize  value in current device} \\
\texttt{foldHood}(\fvalue, \lvalue)(\funvalue) & \forall\tvar. \cp{\ftypeOf{\tvar}, \tvar, (\tvar, \tvar) \to \tvar} \to \tvar
& \textit{\footnotesize  general neighbour aggregation} \\
\texttt{meanHood}(\fvalue) & \forall\tvar.(\ftypeOf{\tvar})\to\tvar
& \textit{\footnotesize  average of neighbour values} \\
\texttt{maxHood}(\fvalue), \texttt{maxHood+}(\fvalue) & \forall\tvar.(\ftypeOf{\tvar})\to\tvar
& \textit{\footnotesize  maximum of neighbour values} \\
\texttt{minHood}(\fvalue), \texttt{minHood+}(\fvalue) & \forall\tvar.(\ftypeOf{\tvar})\to\tvar
& \textit{\footnotesize  minimum of neighbour values} \\
\texttt{minHoodLoc}(\fvalue, \lvalue) & \forall\tvar.(\ftypeOf{\tvar}, \tvar)\to\tvar
& \textit{\footnotesize  minimum of neighbor \& local values} \\
\texttt{nbrRange}(), \texttt{nbrLag}() & ()\to\ftypeOf{\ntype}
& \textit{\footnotesize  space-time distance from neighbours} \\
\texttt{snsNum}() & ()\to \ntype
& \textit{\footnotesize  generic numeric sensor} \\
\texttt{sns\_interval}() & ()\to \ntype
& \textit{\footnotesize  interval with previous round} \\
\end{array}
$}}}
\caption{\FDtask{Built-in functions used throughout this paper, with types and meaning.}}
\label{fig:builtin}
\end{figure}

\FDtask{\begin{remark}
Figure \ref{fig:builtin} presents the collection of built-in functions and operators used in this paper (a small subset of possible built-in functions covered by this calculus).
A few notes regarding these functions:
\begin{itemize}
\item Recall that each built-in function with local arguments is overloaded to work on fields on a pointwise basis. 
\item The multiplex operator $\muxK$ selects between its second and third arguments based on the value of the first one. This is similar to the $\ifK$ keyword but not equivalent: $\muxK$ evaluates both of these arguments everywhere, whereas $\ifK$ only evaluates each on the subspace with the matching Boolean value.
\item A special role is played by the second-order operator \texttt{foldHood} and its specialisations for different aggregation functions (\texttt{minHood}, \texttt{maxHood} and so on) that collapse a field value into a local value (reminiscent of ``reduce'' functions common in parallel programming frameworks like MPI). 
The versions of these operators ending in \texttt{+} also aggregate the value corresponding to the current device (which is otherwise ignored), while the versions ending in \texttt{Loc} also aggregate a given local value in place of the value corresponding to the current device. 
\end{itemize}
\end{remark}
}

\begin{example}\label{exa:distanceTo-informal}
As an example showcasing all \TYtask{classes of construct} at work, consider the following definition of a \texttt{distanceToWithObs} function, mapping each device to an estimated distance to a \texttt{source} area, computed as length of a minimum path that circumvents an \texttt{obstacle} area:

\begin{Verbatim}[fontsize=\fontsize{7pt}{8pt}, frame=single, commandchars=\\\{\}, codes={\catcode`$=3\catcode`^=7\catcode`_=8}]
\km{def} distanceToWithObs(source, obstacle) \{
  \km{if}(obstacle) \{ \pr{infinity} \} \{ distanceTo(source) \}
\}

\km{def} distanceTo(source) \{
   \pr{mux}( source, 0, 
      \km{rep} (\pr{infinity}) \{ (x) => \pr{minHood}(\km{nbr}\{x\} + \pr{nbrRange}())\}
   )
\}
\end{Verbatim}

In the body of function  \texttt{distanceToWithObs}, construct \texttt{if} divides the space in two regions, where \texttt{obstacle} is $\truevalue$ and where it is $\falsevalue$: in the former the output is \texttt{infinity}, in the latter we compute---isolated from the devices in the former area, hence ``circumventing it''---distance estimation by  calling function \texttt{distanceTo}.
 
In the body of function \texttt{distanceTo}, via a purely functional \texttt{mux} built-in operator, we give $0$ on sources (i.e., on devices evaluating \texttt{source} to $\truevalue$).
On other devices, we compute the estimated \TYtask{as} distance being \texttt{infinity} at the beginning, \TYtask{then evolving the distance esimate} by taking the minimum value (\texttt{minHood(field)} is a built-in which returns the minimum value in \texttt{field} or \FDtask{$\infty$} if the field is empty\TYtask{) across neighbour estimates added pointwise to the estimated distance to each neighbor} (obtained by built-in \texttt{nbrRange} modeling a local range sensor).
\end{example}

%To improve readabiliy, in the following, we sometimes write  ``\texttt{let} $\xname=\e_0$ \texttt{in} $\e$'' as syntactic sugar for ``$\fname(\e_0,\overline{\xname})$'' 
%where $\fname$ is a fresh (not already used nowhere in the program) name of a new function defined as ``$\defK \; \fname (\xname,\overline{\xname}) \; \eqSymK{\e}$'' 
%and $\overline{\xname}$ are the free variables of $\e$ different from $\xname$.

\subsection{Operational Semantics} \label{sec:semantics}

We now present a formal semantics that can serve both as a specification for implementation of programming languages based on the calculus and for reasoning about its properties.
%We assume that 
\MVtask{Differently from models like BSP \cite{valiant1990bsp} that can enact system-wide synchronous rounds in which each device computes exactly once, in our model~}\FDtask{individual devices undergo computation in (local) rounds,}~\MVtask{which are sequential for each device, and interleaved among different devices.}
In each round, a device sleeps for some time, wakes up, gathers information about messages received from neighbours while sleeping, performs an evaluation of the program, and finally emits a message to all neighbours with information about the outcome of computation before going back to sleep.
The scheduling of such rounds across the network is \FDtask{fair and asynchronous---the considered notion of fairness is explained in Section~\ref{ssec:ss_def},}~\MVtask{and basically amounts to the eventual existence of another round for each device and for each moment of time.}
To simplify the notation, we shall assume a fixed program $\PROGRAM$.
We say that ``device $\deviceId$ \emph{fires}'', to mean that the main expression $\emain$ of $\PROGRAM$ is evaluated on $\deviceId$ at a particular round. 

Network evolution is modelled (in Section~\ref{sec:small-step}) by a small-step semantics, given as a transition system $\nettran{}{\act}{}$ on network configurations $\Cfg$, where actions can either be firings of a device or network configuration changes. The semantics of a firing action is defined in terms of the computation that takes place on an individual device,  which is modelled  (in Section~\ref{sec:big-step}) by a big-step semantics. \MVtask{Note that we use small-step semantics in network transitions to capture the step-by-step evolution of a network, while the more abstract big-step semantics is used in individual devices since in that case only the final result of round computation matters---and is in fact unique.}

\subsubsection{Device Semantics}\label{sec:big-step}

The computation that takes place on a single device is formalised  by a big-step semantics, expressed by the judgement $\bsopsem{\deviceId}{\Trees}{\emain}{\vtree}$, to be read ``expression $\emain$ evaluates to $\vtree$ on device $\deviceId$  with respect to environment $\Trees$''.
The result of evaluation is a \emph{value-tree} $\vtree$, which is an ordered tree of values that tracks the results of all evaluated subexpressions of $\emain$. Such a result is made available to $\deviceId$'s neighbours for their subsequent firing (including $\deviceId$ itself, so as to support a form of state across computation rounds). The recently-received value-trees of neighbours are then collected into a \emph{value-tree environment} $\Trees$, implemented as a map from device identifiers to value-trees (written $\envmap{\overline\deviceId}{\overline\vtree}$ as short for $\envmap{\deviceId_1}{\vtree_1},\ldots,\envmap{\deviceId_n}{\vtree_n}$).
Intuitively, the outcome of the evaluation will depend on those value-trees. Figure~\ref{fig:deviceSemantics} (top) defines value-trees  and  value-tree \FDtask{environments---the syntax of values $\anyvalue$ is given in Fig.~\ref{fig:syntax}.}

\begin{example}\label{exa:value-trees}
\FDtask{The graphical representation of the  value trees $\mkvt{5}{\mkvt{2}{},\mkvt{3}{}}$ and $\mkvt{5}{\mkvt{2}{},\mkvt{3}{\mkvt{7}{},\mkvt{1}{},\mkvt{4}{}}}$ is as follows:}
\begin{verbatim}
      5                   5
     / \                 / \
    2   3               2   3
                           /|\
                          7 1 4
\end{verbatim}
\end{example}

\FDtask{In the following, for sake of readability, we sometimes write the value $\anyvalue$ as short for the value-tree $\mkvt{\anyvalue}{}$.
 Following this convention, the value-tree 
$\mkvt{5}{\mkvt{2}{},\mkvt{3}{}}$
is shortened to $\mkvt{5}{2,3}$,  and the value-tree $\mkvt{5}{\mkvt{2}{},\mkvt{3}{\mkvt{7}{},\mkvt{4}{},\mkvt{4}{}}}$ is shortened to $\mkvt{5}{2,\mkvt{3}{7,1,4}}$.
}

Figure~\ref{fig:deviceSemantics} (bottom) defines the judgement $\bsopsem{\deviceId}{\Trees}{\e}{\vtree}$, where:
\emph{(i)} $\deviceId$ is the identifier of the current device;
\emph{(ii)} $\Trees$ is the neighbouring field of the value-trees produced by the most recent evaluation of (an expression corresponding to) $\e$ on $\deviceId$'s neighbours;
\emph{(iii)} $\e$ is a closed run-time expression (i.e., a closed  expression that may contain neighbouring field values);
\emph{(iv)} the value-tree $\vtree$  represents the values computed for all the expressions encountered during the evaluation of $\e$---in particular the root of the value tree $\vtree$, denoted by $\vrootOf{\vtree}$, is the  value computed for expression $\e$. \FDtask{The auxiliary function $\vroot$ is defined in Figure~\ref{fig:deviceSemantics} (second frame).}

The operational semantics rules are based on rather standard rules for functional languages, extended so as to be able to evaluate a subexpression $\e'$ of $\e$ with respect to\ the value-tree environment $\Trees'$ obtained from $\Trees$ by extracting the corresponding subtree (when present) in the value-trees in the range of $\Trees$. This process, called \emph{alignment}, is modelled by the auxiliary function $\pi$ \FDtask{defined in Figure~\ref{fig:deviceSemantics} (second frame). This function} has two different behaviours (specified by its subscript or superscript): $\piIof{i}{\vtree}$ extracts the $i$-th subtree of $\vtree$; while $\piBof{\lvalue}{\vtree}$ extracts the last subtree of $\vtree$, \emph{if} the root of the first subtree of $\vtree$ is equal to the local (boolean) value $\lvalue$ (thus implementing a filter specifically designed for the $\ifK$ construct). 
%The value associated to the current expression is then extracted with the auxiliary function $\vroot$, which extracts the root of a given value tree. 
Auxiliary functions $\vroot$ and $\pi$  \FDtask{apply pointwise on value-tree environments, as defined in Figure~\ref{fig:deviceSemantics} (second frame).}

\begin{figure}[!t]{
\small
 \framebox[1\textwidth]{
 $\begin{array}{l}
 \textbf{Value-trees and value-tree environments:}\\
\begin{array}{lcl@{\hspace{6cm}}r}
\vtree & \BNFcce &  \mkvt{\anyvalue}{\overline{\vtree}}    &   {\footnotesize \mbox{value-tree}} \\
\Trees & \BNFcce & \envmap{\overline{\deviceId}}{\overline{\vtree}}   &   {\footnotesize \mbox{value-tree environment}}
\end{array}\\[10pt]
\hline\\[-8pt]
%%%%  AUX
\FDtask{\textbf{Auxiliary functions:}}\\
\begin{array}{l}
\begin{array}{l@{\hspace{0.4cm}}l}
\vrootOf{\mkvt{\anyvalue}{\overline{\vtree}}}  =   \anyvalue
&
\\
\piIof{i}{\mkvt{\anyvalue}{\vtree_1,\ldots,\vtree_n}}  =   \vtree_i
\quad \mbox{if} \; 1\le i \le n
&
\piBof{\lvalue}{\mkvt{\anyvalue}{\vtree_1,\vtree_2}}  =   \vtree_2
\quad \mbox{if} \;  \vrootOf{\vtree_1} = \lvalue
\\
\piIof{i}{\vtree}  =   \emptyseq \quad \mbox{otherwise} 
&
 \piBof{\lvalue}{\vtree}  =   \emptyseq \quad \mbox{otherwise}
\\  
\end{array}
\\
\mbox{For } \auxNAME\in\rho,\piI{i},\piB{\lvalue}:
\quad 
\left\{\begin{array}{lcll}
 \aux{\envmap{\deviceId}{\vtree}}  & =  & \envmap{\deviceId}{\aux{\vtree}} & \quad \mbox{if} \; \aux{\vtree} \not=\emptyseq  
\\
\aux{\envmap{\deviceId}{\vtree}}  & =   & \emptyseq  & \quad \mbox{if} \; \aux{\vtree}=\emptyseq  
\\
\aux{\Trees,\Trees'}  & =  &  \aux{\Trees},\aux{\Trees'}
\end{array}\right.   
\\
\begin{array}{l@{\hspace{1.28cm}}l}
\args{\fname} = \overline{\xname} \quad \mbox{if } \, \defK \; \fname (\overline{\xname}) \; \{\e\}
&
\body{\fname} = \e  \quad \mbox{if } \, \defK \; \fname (\overline{\xname}) \; \{\e\}
%\\
%\args{ (\overline{\xname}) \; \toSymK \; \e} = \overline{\xname}
%&
%\body{(\overline{\xname}) \; \toSymK \; \e} = \e
\end{array}
\end{array}\\
\hline\\[-10pt]
\FDtask{\textbf{Syntactic shorthands:}}\\
\begin{array}{l@{\hspace{2pt}}l@{\hspace{2pt}}l}
\bsopsem{\deviceId}{\piIofOv{\Trees}}{\overline{\e}}{\overline{\vtree}}
&
  \textrm{where~~} |\overline{\e}|=n
&
  \textrm{for~~}
  \bsopsem{\deviceId}{\piIof{1}{\Trees}}{\e_1}{\vtree_1}
  \;
    \cdots
    \;
    \bsopsem{\deviceId}{\piIof{n}{\Trees}}{\e_n}{\vtree_n}\\
\vrootOf{\overline{\vtree}}
&
  \textrm{where~~} |\overline{\vtree}|=n
  & \textrm{for~~}
\vrootOf{\vtree_1},\ldots,\vrootOf{\vtree_n}\\
\substitution{\overline{\xname}}{\vrootOf{\overline{\vtree}}}
&   \textrm{where~~} |\overline{\xname}|=n
  &
  \textrm{for~~}
\substitution{\xname_1}{\vrootOf{\vtree_1}}~\ldots\quad\substitution{\xname_n}{\vrootOf{\vtree_n}}
\end{array}\\
\hline\\[-10pt]
%%%  EVALUATION RULES
\textbf{Rules for expression evaluation:} \hspace{4.5cm} %\hfill
%\vspace{-0.2cm}
  \boxed{\bsopsem{\deviceId}{\Trees}{\e}{\vtree}}
\skiptransition%[-5pt]
\begin{array}{c}
%\vspace{-0.1cm}
\nullsurfaceTyping{E-LOC}{
\bsopsem{\deviceId}{\Trees}{\lvalue}{\mkvt{\lvalue}{}}
}
\qquad\qquad
\surfaceTyping{E-FLD}{\qquad \fvalue' = \proj{\fvalue}{\domof{\Trees}\cup\{\deviceId\}}}{
\bsopsem{\deviceId}{\Trees}{\fvalue}{\mkvt{\fvalue'}{}}
}
\skiptransition\\[-8pt]
\surfaceTyping{E-LET}{ \quad
\begin{array}{c}
  \bsopsem{\deviceId}{\piIof{1}{\Trees}}{\e_1}{\vtree_1} \qquad 
  \bsopsem{\deviceId}{\piIof{2}{\Trees}}{\applySubstitution{\e_2}{\substitution{\xname}{\vrootOf{\vtree_1}}}}{\vtree_2}
\end{array}
 }{
\bsopsem{\deviceId}{\Trees}{\letK \; \xname = \e_1 \; \inK \; \e_2}{\mkvt{\vrootOf{\vtree_2}}{\vtree_1,\vtree_2}}
}
\skiptransition\\[-6pt]
\surfaceTyping{E-B-APP}{  \quad
\begin{array}{c}
  \bsopsem{\deviceId}{\piIofOv{\Trees}}{\overline{\e}}{\overline{\vtree}}
  \qquad \anyvalue=\builtinop{\oname}{\deviceId}{\Trees}(\vrootOf{\overline{\vtree}})
\end{array}
 }{
\bsopsem{\deviceId}{\Trees}{\oname(\overline{\e})}{\mkvt{\anyvalue}{\overline{\vtree}}}
}
\skiptransition\\[-6pt]
\surfaceTyping{E-D-APP}{ \quad
\begin{array}{c}
  \bsopsem{\deviceId}{\piIofOv{\Trees}}{\overline{\e}}{\overline{\vtree}} \qquad 
  \bsopsem{\deviceId}{\Trees}{\applySubstitution{\body{\fname}}{\substitution{\args{\fname}}{\vrootOf{\overline{\vtree}}}}}{\vtree'}
\end{array}
 }{
\bsopsem{\deviceId}{\Trees}{\fname(\overline{\e})}{\mkvt{\vrootOf{\vtree'}}{\overline{\vtree},\vtree'}}
}
\skiptransition\\[-5pt]
\surfaceTyping{E-NBR}{
         \qquad
     \bsopsem{\deviceId}{\piIof{1}{\Trees}}{\e}{\vtree}
\qquad
 \fvalue=\mapupdate{\vrootOf{\piIof{1}{\Trees}}}{\envmap{\deviceId}{\vrootOf{\vtree}}}
 }{
\bsopsem{\deviceId}{\Trees}{\nbrK\{\e\}}{\mkvt{\fvalue}{\vtree}}
}
%
% \skiptransition\\[-5pt]
% %
% \surfaceTyping{E-NBR}{
%          \qquad
% \Trees_1=\piIof{1}{\Trees}
% \qquad
%      \bsopsem{\deviceId}{\Trees_1}{\e}{\vtree_1}
% \qquad
%  \fvalue=\mapupdate{\vrootOf{\Trees_{1}}}{\envmap{\deviceId}{\vrootOf{\vtree_1}}}
%  }{
% \bsopsem{\deviceId}{\Trees}{\nbrK\{\e\}}{\mkvt{\fvalue}{\vtree_1}}
% }
%
\skiptransition\\[-6pt]
\surfaceTyping{E-REP}{
	\;
	\begin{array}{l}
     \bsopsem{\deviceId}{\piIof{1}{\Trees}}{\e_1}{\vtree_1} \\
     \bsopsem{\deviceId}{\piIof{2}{\Trees}}{\applySubstitution{\e_2}{\substitution{\xname}{\lvalue_0}}}{\vtree_2}~~
	\end{array}
	\lvalue_0 \! = \!\left\{\begin{array}{ll}
                             \vrootOf{\piIof{2}{\Trees}}(\deviceId) & \mbox{if} \;  \deviceId \in \domof{\Trees} \\
                             \vrootOf{\vtree_{1}} & \mbox{otherwise}
                           \end{array}\right.
 }{
\bsopsem{\deviceId}{\Trees}{\repK(\e_1)\{(\xname) \; \toSymK \; \e_2\}}{\mkvt{\vrootOf{\vtree_{2}}}{\vtree_1,\vtree_2}}
}
% %
% \skiptransition\\[-6pt]
% %
% \surfaceTyping{E-REP}{
%         \quad
%         \begin{array}{ll}
%      \bsopsem{\deviceId}{\piIof{1}{\Trees}}{\e_1}{\vtree_1} & \lvalue_1=\vrootOf{\vtree_{1}}\\
%      \bsopsem{\deviceId}{\piIof{2}{\Trees}}{\applySubstitution{\e_2}{\substitution{\xname}{\lvalue_0}}}{\vtree_2}~~& \lvalue_2=\vrootOf{\vtree_{2}}
%         \end{array}
%         \quad
%         \lvalue_0 = \left\{\begin{array}{lc}
%                              \vrootOf{\piIof{2}{\Trees}}(\deviceId) & \mbox{if} \;  \deviceId \in \domof{\Trees} \\
%                              \lvalue_1 & \mbox{otherwise}
%                            \end{array}\right.
%  }{
% \bsopsem{\deviceId}{\Trees}{\repK(\e_1)\{(\xname) \; \toSymK \; \e_2\}}{\mkvt{\lvalue_2}{\vtree_1,\vtree_2}}
% }
%
\skiptransition\\[-4pt]
\surfaceTyping{E-IF}{
\quad
     \bsopsem{\deviceId}{\piIof{1}{\Trees}}{\e}{\vtree_1}
\quad
\vrootOf{\vtree_{1}}\in\{\truevalue,\falsevalue\}
\quad
     \bsopsem{\deviceId}{\piBof{\vrootOf{\vtree_{1}}}{\Trees}}{\e_{\vrootOf{\vtree_{1}}}}{\vtree}
 }{
\bsopsem{\deviceId}{\Trees}{\ifK (\e) \{\e_\truevalue\} \{\e_\falsevalue\}}{\mkvt{\vrootOf{\vtree}}{\vtree_1,\vtree}}
}
%
% \skiptransition\\[-4pt]
% %
% \surfaceTyping{E-THEN}{
% \qquad
%      \bsopsem{\deviceId}{\piIof{1}{\Trees}}{\e_1}{\vtree_1}
% \qquad
% \vrootOf{\vtree_{1}}=\truevalue
% \qquad
%      \bsopsem{\deviceId}{\piBof{\truevalue}{\Trees}}{\e_2}{\vtree_2}
%  }{
% \bsopsem{\deviceId}{\Trees}{\ifK (\e_1) \{\e_2\} \{\e_3\}}{\mkvt{\vrootOf{\vtree_2}}{\vtree_1,\vtree_2}}
% }
% %
% \skiptransition\\[-4pt]
% %
% \surfaceTyping{E-ELSE}{
% \qquad
%      \bsopsem{\deviceId}{\piIof{1}{\Trees}}{\e_1}{\vtree_1}
% \qquad
% \vrootOf{\vtree_{1}}=\falsevalue
% \qquad
%      \bsopsem{\deviceId}{\piBof{\falsevalue}{\Trees}}{\e_3}{\vtree_3}
%  }{
% \bsopsem{\deviceId}{\Trees}{\ifK (\e_1) \{\e_2\} \{\e_3\}}{\mkvt{\vrootOf{\vtree_3}}{\vtree_1,\vtree_3}}
% }
%
\end{array}
\end{array}$}
}
\vspace{-0.1cm}
 \caption{Big-step operational semantics for expression evaluation.} \label{fig:deviceSemantics}
\end{figure}

Rules \ruleNameSize{[E-LOC]} and \ruleNameSize{[E-FLD]} model the evaluation of expressions that are either a local value or a neighbouring field value, respectively: note that in \ruleNameSize{[E-FLD]} we take care of restructing the domain of a neighbouring field value to the only set of neighbour devices as reported in $\Trees$.

Rule \ruleNameSize{[E-LET]} is fairly standard: it first evaluates  $\e_1$ and then evaluates  the expression obtained from $\e_2$ by replacing all the occurrences of the variable $\xname$ with the value of $\e_1$.

\FDtask{Rule \ruleNameSize{[E-B-APP]} models the application of built-in functions.
It is used to evaluate expressions of the form $\oname(\e_1 \cdots \e_n)$, where $n\ge 0$. It produces the value-tree $\mkvt{\anyvalue}{\vtree_{1},\ldots,\vtree_{n}}$, where  $\vtree_{1},\ldots,\vtree_{n}$ are the value-trees produced by the evaluation of the actual parameters  $\e_{1},\ldots,\e_{n}$  and $\anyvalue$ is the value returned by the function.}
The rule exploits the special auxiliary function $\builtinop{\oname}{\deviceId}{\Trees}$, whose actual definition is abstracted away. This is such that $\builtinop{\oname}{\deviceId}{\Trees}(\overline\anyvalue)$ computes the result of applying built-in function $\oname$ to values $\overline\anyvalue$ in the current environment of the device $\deviceId$.
\FDtask{In particular: the built-in 0-ary function $\selfK$ gets evaluated to the current device identifier (i.e.,  $\builtinop{\selfK}{\deviceId}{\Trees}() =\deviceId$), and  mathematical operators have their standard meaning, which is independent from $\deviceId$ and $\Trees$ (e.g., $\builtinop{+}{\deviceId}{\Trees}(2,3)=5$).
\begin{example}
Evaluating the expression $\mathtt{+(2, 3)}$ produces the value-tree 
$\mkvt{5}{2,3}$.
The value of the whole expression, $5$, has been computed by using rule \ruleNameSize{[E-B-APP]} to evaluate the application of the sum operator $+$ to the values $2$  (the root of the first subtree of the
 value-tree) and $3$  (the root of the second subtree of the value-tree). 
\end{example}}

\FDtask{The $\builtinop{\oname}{\deviceId}{\Trees}$ function also encapsulates measurement variables such as \texttt{nbrRange} and interactions with the external world via sensors and actuators.}

Rule \ruleNameSize{[E-D-APP]} models the application of a user-defined function. \FDtask{It is used to evaluate expressions of the form $\fname(\e_1 \ldots \e_n)$, where $n\ge 0$.} It resembles rule \ruleNameSize{[E-B-APP]} while producing a value-tree with one more subtree $\vtree'$, which is produced by evaluating the body of the function $\fname$ (denoted by $\body{\fname}$)  substituting the formal  parameters of the function (denoted by $\args{\fname}$) with the values obtained evaluating $\e_1, \ldots \e_n$.

Rule \ruleNameSize{[E-REP]} implements internal state evolution through computational rounds: $\repK(\e_1)\{(\xname) \toSymK{} \e_2\}$ evaluates to $\applySubstitution{\e_2}{\substitution{\xname}{\anyvalue}}$ where $\anyvalue$ is obtained from $\e_1$ on the first firing of a device, from the previous value of the whole $\repK$-expression otherwise.

\FDtask{\begin{example}
To illustrate rule \ruleNameSize{[E-REP]}, as well as computational rounds, we consider program \mbox{\texttt{rep(0)\{(x) => +(x, 1)\}}}.
The first firing of a device $\deviceId$  is performed against the empty tree environment. Therefore, according to rule  \ruleNameSize{[E-REP]}, 
to evaluate  \mbox{\texttt{rep(0)\{(x) => +(x, 1)\}}} means to evaluate the subexpression \mbox{\texttt{+(0, 1)}}, obtained from \mbox{\texttt{+(x, 1)}} 
by replacing \mbox{\texttt{x}} with \mbox{\texttt{0}}. This produces the  value-tree $\vtree=\mkvt{1}{0, \mkvt{1}{0,1}}$, where root $1$ is the overall result as usual,
 while its sub-trees are the result of evaluating the first and second argument respectively.
Any subsequent firing of the device $\deviceId$ is performed with respect to\ a tree environment $\Trees$ that associates to $\deviceId$ the outcome $\vtree$ of the most recent firing of $\deviceId$. 
Therefore, evaluating    \mbox{\texttt{rep(0)\{(x) => +(x, 1)\}}} at the second firing means to evaluate the subexpression \mbox{\texttt{+(1, 1)}},
 obtained from \mbox{\texttt{+(x, 1)}} by replacing \mbox{\texttt{x}} with \mbox{\texttt{1}}, which is the root of $\vtree$.
 Hence the results of computation are $1$, $2$, $3$, and so on.
\end{example}}

Rule \ruleNameSize{[E-NBR]} models device interaction. It first collects neighbour's values for expressions $\e$ as $\TYtask{\fvalue} = \vrootOf{\piIof{1}{\Trees}}$, then evaluates $\e$ in $\deviceId$ and updates the corresponding entry in $\TYtask{\fvalue}$.

\FDtask{\begin{example}\label{exa-NBR-device-semantics}
To illustrate rule \ruleNameSize{[E-NBR]}, consider 
the program:
\[\e'=\mathtt{minHood}(\nbrK\{\snsNumK()\})\]
where the 1-ary built-in function $\minHoodK$ returns the lower limit of values in the range of its neighbouring field argument, and  the 0-ary built-in
 function $\snsNumK$ returns the numeric value measured by a sensor. Suppose that the program runs  on a network of three devices
 $\deviceId_A$,  $\deviceId_B$, and $\deviceId_C$ where:
 \begin{itemize}
 \item
  $\deviceId_B$ and  $\deviceId_A$ are mutually connected,  $\deviceId_B$ and $\deviceId_C$ are mutualy connected, while  $\deviceId_A$ and  $\deviceId_C$ are not connected;
 \item
  $\snsNumK$ returns  \texttt{1} on $\deviceId_A$,  \texttt{2} on $\deviceId_B$, and \texttt{3} on $\deviceId_C$; and
  \item
   all devices have an initial empty tree-environment $\emptyset$.
  \end{itemize}
 Suppose that device $\deviceId_A$ is the first device that fires:
the evaluation of  $\snsNumK()$ on $\deviceId_A$ yields $1$ (by rules  \ruleNameSize{[E-LOC]} and \ruleNameSize{[E-B-APP]}, since
 $\builtinop{\snsNumK}{\deviceId_A}{\emptyset}()=1$); the evaluation of  $\nbrK\{\snsNumK()\}$ on $\deviceId_A$ yields $\mkvt{(\envmap{\deviceId_A}{1})}{2}$ 
(by rule \ruleNameSize{[E-NBR]}); and the evaluation of  $\e'$ on $\deviceId_A$ yields
\[
\begin{array}{l@{\quad}c@{\quad}l}
    \vtree_A & = & \mkvt{1}{\mkvt{(\envmap{\deviceId_A}{1})}{1}} 
	%\vtree_A & = & \mkvt{1}{\mkvt{(\envmap{\deviceId_A}{1})}{1}}
\end{array}
\] 
(by rule \ruleNameSize{[E-B-APP]}, since $\builtinop{\minHoodK}{\deviceId_A}{\emptyset}(\envmap{\deviceId_A}{1})=1$). Therefore, at its first fire, device $\deviceId_A$ produces the value-tree $\vtree_A$.
 Similarly, if device $\deviceId_C$ is the second device that fires, it produces the value-tree
\[
\begin{array}{l@{\quad}c@{\quad}l}
	%\vtree_B & = & \mkvt{2}{\mkvt{(\envmap{\deviceId_B}{2})}{2}} \\
	\vtree_C & = & \mkvt{3}{\mkvt{(\envmap{\deviceId_C}{3})}{3}}
\end{array}
\]
Suppose that device $\deviceId_B$ is the third device that  fires. Then the  evaluation of $\e'$ on $\deviceId_B$ is performed with respect to the value tree 
environment \mbox{$\Trees_{B} = (\envmap{\deviceId_A}{\vtree_A},\;\envmap{\deviceId_C}{\vtree_C})$} and the evaluation of its subexpressions  
$\nbrK\{\snsNumK()\}$  and $\snsNumK()$ is performed, respectively, with respect to the  following  value-tree environments  obtained from $\Trees_{B}$ by alignment:
\[
\begin{array}{l}
	\Trees'_{B}  \; = \;  \piIof{1}{\Trees_{B}} \; = \; 
(\envmap{\deviceId_A}{\mkvt{(\envmap{\deviceId_A}{1})}{1}},\;\;\envmap{\deviceId_C}{\mkvt{(\envmap{\deviceId_C}{3})}{3}})
	\\
	\Trees''_{B}     \; = \;  \piIof{1}{\Trees'_{B}}  \; = \;  
 (\envmap{\deviceId_A}{1},\;\;\envmap{\deviceId_C}{3})
\end{array}
\] 
We thus have that $\builtinop{\snsNumK}{\deviceId_B}{\Trees''_B}()=2$; the evaluation of  $\nbrK\{\snsNumK()\}$ on $\deviceId_B$ with respect to\ $\Trees'_B$ produces the value-tree
$\mkvt{\phi}{2}$ where $\phi = (\envmap{\deviceId_A}{1},\envmap{\deviceId_B}{2},\envmap{\deviceId_C}{3})$; and $\builtinop{\minHoodK}{\deviceId_B}{\Trees_B}(\phi)=1$. 
Therefore the  evaluation of $\e'$ on $\deviceId_B$ produces the value-tree $\vtree_B  = \mkvt{1}{\mkvt{\phi}{2}}$.
Note that, if the network topology and the values of the sensors will not change, then: any subsequent fire of device $\deviceId_B$ will yield a value-tree with root $1$ (which is the minimum of $\snsNumK$ across $\deviceId_A$,  $\deviceId_B$ and $\deviceId_C$); any subsequent fire of device $\deviceId_A$ will yield a value-tree with root $1$ (which is the minimum of $\snsNumK$ across $\deviceId_A$ and  $\deviceId_B$); and any subsequent fire of device $\deviceId_C$ will yield a value-tree with root $2$ (which is the minimum of $\snsNumK$ across $\deviceId_B$ and  $\deviceId_C$).
\end{example}}

Rule \ruleNameSize{[E-IF]} is almost standard, except that it performs domain restriction $\piBof{\truevalue}{\Trees}$ (resp. $\piBof{\falsevalue}{\Trees}$) in order to guarantee that subexpression $\e_\truevalue$ is not matched against value-trees obtained from $\e_\falsevalue$ (and vice-versa).

\subsubsection{Network Semantics}\label{sec:small-step}

The overall network evolution is formalised by the small-step operational semantics given in Figure~\ref{fig:networkSemantics} as a transition system on network configurations $\Cfg$.
Figure \ref{fig:networkSemantics} (top) defines key syntactic elements to this end.
$\Field$ models the overall status of the devices in the network at a given time, as a map from device identifiers to value-tree environments. From it, we can define the state of the field at that time by summarising the current values held by devices.
$\Topo$ models \emph{network topology}, namely, a directed neighbouring graph, as a map from device identifiers to set of identifiers (denoted as $I$).
$\Sens$ models \emph{sensor (distributed) state}, as a map from device identifiers to (local) sensors (i.e., sensor name/value maps denoted as $\senstate$).
Then, $\Envi$ (a couple of topology and sensor state) models the system's environment.
So, a whole network configuration $\Cfg$ is a couple of a status field and environment.

\begin{figure}[!t]{
\small
 \framebox[1\textwidth]{
 $\begin{array}{l}
 %%%  SYNTAX
 \textbf{System configurations and action labels:}\\
\begin{array}{lcl@{\hspace{5.5cm}}r}
\Field & \BNFcce &  \envmap{\overline\deviceId}{\overline\Trees}    &   {\footnotesize \mbox{status field}} \\
\Topo & \BNFcce &  \envmap{\overline\deviceId}{\overline\devset}    &   {\footnotesize \mbox{topology}} \\
\Sens & \BNFcce &  \envmap{\overline\deviceId}{\overline\senstate}    &   {\footnotesize \mbox{sensors-map}} \\
\Envi & \BNFcce &  \EnviS{\Topo}{\Sens}    &   {\footnotesize \mbox{environment}} \\
\Cfg & \BNFcce &  \SystS{\Envi}{\Field}    &   {\footnotesize \mbox{network configuration}} \\
\act & \BNFcce &  \deviceId \;\BNFmid\; \envact    &   {\footnotesize \mbox{action label}} 
\\[8pt]
\end{array}\\
\hline\\[-4pt]
\textbf{Environment well-formedness:}\\
\begin{array}{l}
\wfn{\EnviS{\Topo}{\Sens}} \textrm{~~holds if $\Topo,\Sens$ have same domain, and $\Topo$'s values do not escape it.}
\\
\end{array}\\[10pt]
\hline\\[-9pt]
%%%%%%%%%%
%%%  REDUCTION RULES
\textbf{Transition rules for network evolution:} \hfill
  \boxed{\nettran{\Cfg}{\act}{\Cfg}}
  \\[0.2cm]
\vspace{0.5cm}
\begin{array}{c}
\netopsemRule{N-FIR}
                 {
                 \begin{array}{l}
	                 \Envi=\EnviS{\Topo}{\Sens} \\
    	              \Topo(\deviceId)= \overline\deviceId 
                 \end{array}
                  ~ \bsopsem{\deviceId}{\filter(\Field)(\deviceId)}{\emain}{\vtree} \; (\mbox{w.r.t.} \; \Sens(\deviceId))
                  \quad
                 \Field_1=\envmap{\overline\deviceId}{\{\envmap{\deviceId}{\vtree}\}}}
                 {\nettran{\SystS{\Envi}{\Field}}{\deviceId}{\SystS{\Envi}{\mapupdate{\filter(\Field)}{\Field_1}}}
                 }
\skiptransition
\netopsemRule{N-ENV}
                 {\qquad \wfn{\Envi'}\qquad \Envi'=\EnviS{\Topo}{\envmap{\overline\deviceId}{\overline\senstate}} \qquad
                  \Field_0=\envmap{\overline\deviceId}{\emptyset}
                 }
                 {\nettran{\SystS{\Envi}{\Field}}{\envact}{\SystS{\Envi'}{\mapupdate{\Field_0}{\Field}}}
                 }\\[-10pt]
\end{array}\\
\end{array}$}
}
\caption{Small-step operational semantics for network evolution.} \label{fig:networkSemantics}
\end{figure}

We use the following notation for status fields. Let $\envmap{\overline\deviceId}{\Trees}$ denote a map from device identifiers $\overline\deviceId$ to the same value-tree environment $\Trees$. Let $\mapupdate{\Trees_0}{\Trees_1}$ denote the value-tree environment with domain $\domof{\Trees_0} \cup \domof{\Trees_1}$ coinciding with $\Trees_1$ in the domain of $\Trees_1$ and with $\Trees_0$ otherwise. Let $\mapupdate{\Field_0}{\Field_1}$ denote the status field with the \emph{same domain} as $\Field_0$ made of $\envmap{\deviceId}{\mapupdate{\Field_0(\deviceId)}{\Field_1(\deviceId)}}$ for all $\deviceId$ in the domain of $\Field_1$, $\envmap{\deviceId}{\Field_0(\deviceId)}$ otherwise.

We consider transitions $\nettran{\Cfg}{\act}{\Cfg'}$ of two kinds: firings, where $\act$ is the corresponding device identifier, and environment changes, where $\act$ is the special label $\envact$. This is formalised in Figure \ref{fig:networkSemantics} (bottom).
Rule \ruleNameSize{[N-FIR]} models a computation round (firing) at device $\deviceId$: it takes the local value-tree environment filtered out of old values $\filter(\Field)(\deviceId)$;\footnote{Function $\filter(\Field)$ in rule \ruleNameSize{[N-FIR]} models a filtering operation that clears out old stored values from the value-tree environments in $\Field$, implicitly based on space/time tags.} then by the single
 device semantics it obtains the device's value-tree $\vtree$,\footnote{\FDtask{We shall assume that any device firing is guaranteed to terminate in any environmental condition. 
Termination of a device firing is clearly not decidable, but we shall assume---without loss of generality for the results of this paper---that a decidable subset of the termination fragment can be identified 
(e.g., by ruling out recursive user-defined functions or by applying standard static analysis techniques for termination).}} which is used to update the system configuration of  $\deviceId$ and of $\deviceId$'s neighbours.

Rule \ruleNameSize{[N-ENV]} takes into account the change of the environment to a new \emph{well-formed} environment $\Envi'$---environment well-formedness is specified by the predicate $\wfn{\Envi}$  in Figure~\ref{fig:networkSemantics} (middle). Let $\overline\deviceId$ be the domain of $\Envi'$. We first construct a status field $\Field_0$ associating to all the devices of $\Envi'$ the empty context $\emptyset$. Then, we adapt the existing status field $\Field$ to the new set of devices: $\mapupdate{\Field_0}{\Field}$ automatically handles removal of devices, map of new devices to the empty context, and retention of existing contexts in the other devices.

\FDtask{
\begin{example}\label{exa:network-semantics}
Consider a network of devices with the program \[\e'=\minHoodK(\nbrK\{\snsNumK()\})\] introduced in Example~\ref{exa-NBR-device-semantics}. The network configuration illustrated at the beginning of Example~\ref{exa-NBR-device-semantics} can be generated by applying rule \ruleNameSize{[N-ENV]} to the empty network configuration. I.e., we have
\[
\nettran{\SystS{\EnviS{\emptyset}{\emptyset}}{\emptyset}}{\envact}{\SystS{\Envi_0}{\Field_0}}
\]
where
\begin{itemize}
\item
$\Envi_0=\EnviS{\Topo_0}{\Sens_0}$,
\item
$\Topo_0=(\envmap{\deviceId_A}{\{\deviceId_B\}},\envmap{\deviceId_B}{\{\deviceId_A,\deviceId_C\}},\envmap{\deviceId_C}{\{\deviceId_B\}})$,
\item
$\Sens_0=(\envmap{\deviceId_A}{(\envmap{\snsNumK}{1})},\envmap{\deviceId_B}{(\envmap{\snsNumK}{2})},\envmap{\deviceId_C}{(\envmap{\snsNumK}{3})})$, and
\item
$\Field_0=(\envmap{\deviceId_A}{\emptyset},\envmap{\deviceId_B}{\emptyset},\envmap{\deviceId_C}{\emptyset})$.
\end{itemize}
Then, the tree fires of devices $\deviceId_A$, $\deviceId_C$ and $\deviceId_B$ illustrated in Example~\ref{exa-NBR-device-semantics}  correspond to  the following transitions, respectively.
\begin{enumerate}
\item
$
\nettran{\SystS{\Envi_0}{\Field_0}}{\deviceId_A}{\SystS{\Envi_0}{\Field'}}
$,
where
\begin{itemize}
\item
$\Field'=(\envmap{\deviceId_A}{(\envmap{\deviceId_A}{\vtree_A})},\;\envmap{\deviceId_B}{(\envmap{\deviceId_A}{\vtree_A})},\;\envmap{\deviceId_C}{\emptyset})$, and
\item
$\vtree_A  =  \mkvt{1}{\mkvt{(\envmap{\deviceId_A}{1})}{1}}$;
\end{itemize}
\item
$
\nettran{\SystS{\Envi_0}{\Field'}}{\deviceId_C}{\SystS{\Envi_0}{\Field''}}
$,
where
\begin{itemize}
\item
$\Field''=(\envmap{\deviceId_A}{(\envmap{\deviceId_A}{\vtree_A})},\;\envmap{\deviceId_B}{(\envmap{\deviceId_A}{\vtree_A},\envmap{\deviceId_C}{\vtree_C})},\;\envmap{\deviceId_C}{(\envmap{\deviceId_C}{\vtree_C})})$, and
\item
$\vtree_C  =  \mkvt{1}{\mkvt{(\envmap{\deviceId_C}{3})}{3}}$; 
\end{itemize}
\item
$
\nettran{\SystS{\Envi_0}{\Field''}}{\deviceId_B}{\SystS{\Envi_0}{\Field'''}}
$,
where
\begin{itemize}
\item
$\Field'''=(\envmap{\deviceId_A}{(\envmap{\deviceId_A}{\vtree_A},\envmap{\deviceId_B}{\vtree_B})},$
\\
$\mbox{$\qquad\quad$}$ $\envmap{\deviceId_B}{(\envmap{\deviceId_A}{\vtree_A},\envmap{\deviceId_B}{\vtree_B},\envmap{\deviceId_C}{\vtree_C})},$
\\
$\mbox{$\qquad\quad$}$ $\envmap{\deviceId_C}{(\envmap{\deviceId_B}{\vtree_B},\envmap{\deviceId_C}{\vtree_C})})$,
\item
$\vtree_B  = \mkvt{1}{\mkvt{\phi}{2}}$, and
\item
$\phi = (\envmap{\deviceId_A}{1},\envmap{\deviceId_B}{2},\envmap{\deviceId_C}{3})$.
\end{itemize}
\end{enumerate}
\end{example}
}

\subsection{A Minimal Convenient Extension: Functional Parametrisation}\label{sec:parametrisation}

The pragmatic convenience of the calculus defined so far can be improved to express general-purpose \emph{building blocks}, which are parametric algorithms designed to be applied to a broad class of problems, and necessarily make use of functional parameters to tune their behaviour. 
%
%In order to overcome this limitation
% without adding the technical difficulties of a higher-order calculus (as developed in \cite{forte2015}), 
%we consider a minimal extension allowing for functional parameters, i.e., parameters representing functions.
%

 \FDtask{To this end,} we extend the syntax of user-defined functions to admit  \emph{functional parameters}, ranged over by $\zname$. Such \emph{extended functions} can be defined as $\defK \; \fname(\overline\xname)(\overline\zname) \{ \e \}$ and called by $\fname(\overline\e)(\overline\funvalue)$ where the arguments $\overline\funvalue$ can be either names of \emph{plain} (i.e., non-extended)  functions or functional parameters---names  of extended functions are not allowed to  be passed as arguments.
 By convention, we omit the second parentheses whenever no functional parameters are present; so that functions without functional parameters can be defined and called as usual. We also allow the presence of built-in functions admitting functional parameters (e.g., the field aggregator $\foldK(\xname,\yname)(\zname)$ which combines values in a field $\xname$ through an initial value $\yname$ and  a binary function $\zname$ given as functional parameter).

We remark that a functional parameter $\zname$ (like any other function name) is not an expression by itself, and it only constitutes one when provided with appropriate arguments  or passed as argument to a function. This implies for instance that $(\ifK (\e) \{ \zname_1 \} \{ \zname_2 \})(\overline\e)$ is \emph{not} a valid expression.

A program in the extended syntax can be converted to a program in plain first-order syntax by systematically substituting each  call  $\fname(\overline\e)(\overline\funvalue)$ to an extended function  $\defK \; \fname(\overline\xname)(\overline\zname)\{\e\}$ (where the arguments $\overline\funvalue$ do not contain functional parameters) by a call  $\fname_{\overline\funvalue}(\overline\e)$ to a plain  function $\fname_{\overline\funvalue}$ defined as $\defK \; \fname_{\overline\funvalue}(\overline\xname) \{ \applySubstitution{\e}{\substitution{\overline\zname}{\overline\funvalue}} \}$---thus interpreting  functional parameters as  macro parameters.\footnote{This rewriting process always terminates. Consider $F$ as the set of distinct plain function names that are passed as parameters to extended functions in any point of the program. Then an extended function with $k$ functional parameters can be instantiated at most once for each combination of functions in $F$, that is, at most $n^k$ times where $n$ is the cardinality of $F$.}
For example, the following program (comparing minimum temperature and maximum threshold across a network):

\begin{Verbatim}[fontsize=\fontsize{7pt}{8pt}, frame=single, commandchars=\\\{\}, codes={\catcode`$=3\catcode`^=7\catcode`_=8}]
\km{def} foldwithlocal(field, local, initial)(aggregate) \{
  aggregate(\pr{foldHood}(field, initial)(aggregate), local)
\}

\km{def} gossip(null)(aggregate, sensor) \{
  \km{rep} (initial) \{ (x) => foldwithlocal(\km{nbr}\{x\}, sensor(), initial)(aggregate) \}
\}

gossip(\pr{infinity})(\pr{min}, \pr{sns\_temp}) < gossip(\pr{-infinity})(\pr{max}, \pr{sns\_threshold})
\end{Verbatim}
can be rewritten eliminating functional parameters in the following way:
\begin{Verbatim}[fontsize=\fontsize{7pt}{8pt}, frame=single, commandchars=\\\{\}, codes={\catcode`$=3\catcode`^=7\catcode`_=8}]
\km{def} foldwithlocal\_min(field, local, initial) \{
  \pr{min}(\pr{foldHood\_min}(field, initial), local)
\}

\km{def} gossip\_min\_temp(initial) \{
  \km{rep} (initial) \{ (x) => foldwithlocal\_min(\km{nbr}\{x\}, \pr{sns\_temp}(), initial) \}
\}

\km{def} foldwithlocal\_max(field, local, initial) \{
  \pr{max}(\pr{foldHood\_max}(field, initial), local)
\}

\km{def} gossip\_max\_threshold(initial) \{
  \km{rep} (initial) \{ (x) => foldwithlocal\_max(\km{nbr}\{x\}, \pr{sns\_threshold}(),initial) \}
\}

gossip\_min\_temp(\pr{infinity}) < gossip\_max\_threshold(\pr{-infinity})
\end{Verbatim}
\FDtask{where \texttt{foldHood\_min} and \texttt{foldHood\_max} can then be substituted with their equivalent versions \texttt{minHood}, \texttt{maxHood}.}

\subsection{Examples: Building Blocks} \label{sec:blocks}

Through functional parametrisation we are now able to express the main ``building blocks'' used in field calculus (as reported in \cite{beal:SFM}), a set of highly general and guaranteed composable operators for the construction of resilient coordination applications. Each of these building blocks captures a family of frequently used strategies for achieving flexible and resilient decentralised behaviour, hiding the complexity of using the low-level constructs of field calculus. Despite their small number, these operators are so general as to cover, individually or in combination, a large number of the common coordination patterns used in design of resilient systems. The three building blocks, whose behaviour will be thoroughly evaluated in next section along that of alternative implementations, are defined as follows:

\subsubsection{Block G}

\texttt{G(source,initial)(metric,accumulate)} is a ``spreading'' operation generalising distance measurement, broadcast, and projection, which takes two fields and two functions as inputs: \texttt{source} (a float indicator field, which is $0$ for sources and $\infty$ for other devices), \texttt{initial} (initial values for the output field), \texttt{metric} (a function providing a map from each neighbour to a distance), and \texttt{accumulate} (a commutative and associative two-input function over values). It may be thought of as executing two tasks: first, computing a field of shortest-path distances from the source region according to the supplied function \texttt{metric}, and second, propagating values up the gradient of the distance field away from source, beginning with value \texttt{initial} and accumulating along the gradient with \texttt{accumulate}. 
%
%For instance, if \texttt{metric} is physical distance (\texttt{nbrRange}), \texttt{initial} is 0, and \texttt{accumulate} is addition, then \texttt{1st(G(source,initial)(metric,accumulate))} creates a field mapping each device $\deviceId$  to its shortest distance to a source, i.e., exactly \texttt{distanceTo} as described in Section \ref{sec:syntax}. 
%
This is accomplished through built-in $\minHoodLoc(\fvalue, \lvalue)$, which selects the minimum \TYtask{of the neighbours'} values in $\fvalue$ and the local value $\lvalue$ (i.e. the minimum in $\applySubstitution{\fvalue}{\envmap{\deviceId}{\lvalue}}$) according to the lexicographical order on pairs.

\begin{Verbatim}[fontsize=\fontsize{7pt}{8pt}, frame=single, commandchars=\\\{\}, codes={\catcode`$=3\catcode`^=7\catcode`_=8}]
\km{def} G(source, initial)(metric, accumulate) \{
  \km{rep} ( \pr{pair}(source, initial) ) \{ (x) $\toSymK{}$
    \pr{minHoodLoc}(\pr{pair}(\km{nbr}\{\pr{1st}(x)\} + metric(), accumulate(\km{nbr}\{\pr{2nd}(x)\})),
      \pr{pair}(source, initial))
  \}
\}
\end{Verbatim}

As an example, \texttt{G\_distanceTo} function (equivalent to the function \texttt{distanceTo}  shown in Section~\ref{sec:syntax}), and a \texttt{G\_broadcast} function to spread values from a source, can be simply implemented with G as:

\begin{Verbatim}[fontsize=\fontsize{7pt}{8pt}, frame=single, commandchars=\\\{\}, codes={\catcode`$=3\catcode`^=7\catcode`_=8}]
\km{def} addRange(x) \{ x + \pr{nbrRange}() \}

\km{def} identity(x) \{ x \}

\km{def} G\_distanceTo(source) \{ \pr{2\TYtask{nd}}( G(source, 0)(\pr{nbrRange}, addRange)) \}

\km{def} G\_broadcast(source, value) \{ \pr{2\TYtask{nd}}( G(source, value)(\pr{nbrRange}, identity)) \}
\end{Verbatim}

\subsubsection{Block C}

\texttt{C(potential,local,null)(accumulate)} is an operation that is complementary to \texttt{G}: it accumulates information down the gradient of a supplied potential field. This operator takes three fields and a function as inputs: \texttt{potential} (a numerical field), \texttt{local} (values to be accumulated), \texttt{null} (an idempotent value for \texttt{accumulate}) and \texttt{accumulate} (a commutative and associative two-input function over values). At each device, the idempotent \texttt{null} is combined with the \texttt{local} value and any values from neighbours with higher values of the \texttt{potential} field, using function \texttt{accumulate} to produce a cumulative value at each device. 
For instance, if \texttt{potential} is a distance gradient computing with \texttt{G} in a given region $R$, \texttt{accumulate} is addition, and \texttt{null} is 0, then \texttt{C} collects the sum of values of \texttt{local} in region $R$.
	
\begin{Verbatim}[fontsize=\fontsize{7pt}{8pt}, frame=single, commandchars=\\\{\}, codes={\catcode`$=3\catcode`^=7\catcode`_=8}]
\km{def} C(potential, local, null)(accumulate) \{
  \km{rep} ( \pr{pair}(local, \pr{uid}()) ) \{ (x) $\toSymK{}$
    \pr{pair}(
      accumulate(
        \pr{mux}(\km{nbr}\{potential\} < potential \pr{&&} \km{nbr}\{\pr{2nd}(x)\} = \pr{uid}(), \km{nbr}\{\pr{1st}(x)\}, null), 
        local
      ), 
      \pr{2nd}(\pr{maxHood+}(\km{nbr}\{\pr{pair}(potential, \pr{uid}())\})) 
    )
  \}
\}
\end{Verbatim}

As an example, a \texttt{C\_sum} function summing all the values of a field down a potential, and a \texttt{C\_any} function checking if any value of a boolean field is true and reporting the result down a potential, can be simply implemented with C as:

\begin{Verbatim}[fontsize=\fontsize{7pt}{8pt}, frame=single, commandchars=\\\{\}, codes={\catcode`$=3\catcode`^=7\catcode`_=8}]
\km{def} sum\_aux(field, local) \{ \pr{sumhood}(field) \pr{+} local \}
\km{def} C\_sum(potential, value) \{ \pr{1st}( C(potential, value, 0)(\FDtask{sum\_aux})) \}

\km{def} or\_aux(field, local) \{ \pr{anyhood}(field) \pr{||} local \}
\km{def} C\_any(potential, value) \{ \pr{1st}( C(potential, value, false)(\FDtask{or\_aux})) \}
\end{Verbatim}

\subsubsection{Block T}

\texttt{T(initial,zero)(decay)} deals with time, whereas \texttt{G} and \texttt{C} deal with space. Since time is one-dimensional, however, there is no distinction between spreading and collecting, and thus only a single operator. This operator takes two fields and a function as inputs: \texttt{initial} (initial values for the resulting field), \texttt{zero} (corresponding final values), and \texttt{decay} (a one-input strictly decreasing function over values). Starting with \texttt{initial} at each node, that value gets decreased by function \texttt{decay} until eventually reaching the \texttt{zero} value, thus implementing a flexible count-down, where the rate of the count-down may change over time. For instance, if \texttt{initial} is a pair of a value $\anyvalue$ and a timeout $t$, \texttt{zero} is a pair of the blank value \texttt{null} and $0$, and \texttt{decay} takes a pair, removing the elapsed time since previous computation from \TYtask{the} second component of the pair and turning the first component to \texttt{null} if the second reached 0, then \texttt{T} implements a limited-time memory of $\anyvalue$.

\begin{Verbatim}[fontsize=\fontsize{7pt}{8pt}, frame=single, commandchars=\\\{\}, codes={\catcode`$=3\catcode`^=7\catcode`_=8}]
\km{def} T(initial, zero)(decay) \{
  \km{rep} ( initial ) \{ (x) $\toSymK{}$
    \pr{min}(\pr{max}(decay(x), zero), initial)
  \}
\}
\end{Verbatim}

As an example, a \texttt{T\_track} function simply tracking an input value over time, and a \texttt{T\_memory} function holding a value for a given amount of time (and the showing a null value), can be simply implemented with T as:

\begin{Verbatim}[fontsize=\fontsize{7pt}{8pt}, frame=single, commandchars=\\\{\}, codes={\catcode`$=3\catcode`^=7\catcode`_=8}]
\km{def} T\_track(value) \{ T(value, value)(identity) \}

\km{def} memory\_evolve(x) \{ 
  \km{if} ( \pr{1st}(x) < \pr{sns\_interval}() ) \{ \pr{pair}(0,null) \} \{ \pr{pair}(\pr{1st}(x)-\pr{sns\_interval}(), \pr{2nd}(x)) \}
\}

\km{def} T\_memory(value, time, null) \{ \pr{2nd}(T(\pr{pair}(time,value), \pr{pair}(0,null))(memory\_evolve)) \}
\end{Verbatim}

\TYtask{with} the built-in operator (sensor) \texttt{sns$\_$interval} \TYtask{returning} the time elapsed since the last execution round.

% % % % % % % % % % % % % % % % % % % % % % % % % % % % % % % % % % % % % % % % % % % % % % % % %

\section{Self-Stabilisation and Eventual Behaviour}%, and Methodological Implications
\label{sec:stabilisation}

In the dynamic environments typically considered by self-organising systems, a key resilience property is {\em self-stabilisation}: the ability of a system to recover from arbitrary changes in state. In particular, of the various notions of self-stabilisation (see the survey in~\cite{S93c}), we use the definition from~\cite{dolev} as further restricted by~\cite{LMCS:selfsabilisation}: a self-stabilising computation is one that, from any initial state, after some period without changes in the computational environment, reaches a single ``correct'' final configuration, intended as the output of computation.

\JBtask{Self-stabilisation (formalised in Section~\ref{ssec:ss_def}) focuses on a computation's eventual behaviour (formalised in Section~\ref{ssec:ss_eventual}), rather than its transient behaviour, which also enables optimisation by substitution of alternate coordination mechanisms (cf.\ Section~\ref{sec:workflow}).
As we will see, this definition covers a broad and useful class of self-organisation mechanisms, though some are excluded, such as continuously changing fields like self-synchronising pulse-coupled oscillators~\cite{MirolloStrogatz90} and computations that converge only in the limit like Laplacian-based approximate consensus~\cite{OlfatiSaber07}.
Incorporating such mechanisms into a framework such as we present here will require bounding the dynamical behaviours of computations (e.g., by identification of an appropriate Lyapunov function~\cite{DasguptaCDC16}).
Preliminary investigations in this area have produced positive results (e.g.,~\cite{DasguptaCDC16,MoECAS17}), but integration with the framework presented in this paper is a major project that remains as future work.}
%: we shall explore this feature in Section \ref{sec:ssmethod}.

\subsection{Self-Stabilisation} \label{ssec:ss_def}

\MVtask{Our notion of self-stabilisation considers resilience to changes in the computational system's state or external environment.}
Hence, assume a program $\PROGRAM$ and fixed environmental conditions $\Envi$ (i.e., fixed network topology and inputs of sensors). 
%
%According to the operational semantics defined in Section \ref{sec:semantics},  we can define a transition system $\ap{\System, \nettran{}{\act}{}}$ where:
%\begin{itemize}
%\item 
% the set of the states $\System$ is a set of network configurations  with environment $\Envi$;
%\item
%the only possible action labels $\act$ are device identifiers $\deviceId$ representing firings of an individual device of the network;  and
%\item
%for each $\Cfg \in \System$ and $\deviceId$ in the network there is an $\Cfg' \in \System$ such that $\nettran{\Cfg}{\deviceId}{\Cfg'}$, which represents $\Cfg$ after firing of device $\deviceId$.
%\end{itemize}
%
According to the operational semantics defined in Section \ref{sec:semantics},
\FDtask{for each network configuration $\Cfg$ with environment $\Envi$ that is reachable from the empty network configuration, we can define a transition system $\ap{\System, \nettran{}{\act}{}}$ where:
\begin{itemize}
\item
the only possible action labels $\act$ are device identifiers $\deviceId$ representing firings of an individual device of the network;  and
\item 
 the set of the states $\System$ is  the smallest set of the network configurations such that:
\begin{enumerate}
\item
$\Cfg\in\System$, and
\item
for each $\Cfg' \in \System$ and $\deviceId$ in the network there is an $\Cfg'' \in \System$ such that $\nettran{\Cfg'}{\deviceId}{\Cfg''}$.
\end{enumerate}
\end{itemize}}
We say that a configuration $\Cfg$ is \emph{stable} iff it is not changed by firings, i.e., $\nettran{\Cfg}{\deviceId}{\Cfg}$ for each $\deviceId$.
Let  $\Cfg_0 \nettran{}{\deviceId_0}{} \Cfg_1 \nettran{}{\deviceId_1}{} \ldots$ be an infinite sequence of transitions in $\System$. We say that the sequence is \emph{fair} iff each configuration $\Cfg_t$ is followed by firings of every possible device, i.e., for each $t \geq 0$ and $\deviceId$ there exists a $t' > t$ such that $\deviceId_{t'} = \deviceId$.
We say that the sequence stabilises to state $\Cfg$ iff $\Cfg_i = \Cfg$ for each $i$ after a certain $t \geq 0$.

Given a program $\PROGRAM$ and fixed environmental conditions, a transition system like the one considered above can be defined for any closed expression $\e$ that may call the user-defined functions defined in $\PROGRAM$: just consider $\e$ as the main expression of $\PROGRAM$. In the following, for convenience of the presentation,  we focus on computations associated to such an expression $\e$.

\begin{defn}[Stabilisation and Self-Stabilisation]
 A closed expression $\e$ is: 
\begin{itemize}
	\item
	\emph{stabilising} iff every fair sequence stabilises given fixed environmental conditions $\Envi$;
	\item
	\emph{self-stabilising} to state $\Cfg$ iff every fair sequence stabilises to the same state $\Cfg$ given fixed environmental conditions $\Envi$. 
\end{itemize}
A function $\funvalue(\overline\xname)$ is \emph{self-stabilising} iff given any self-stabilising %closed
 expressions $\overline\e$ of \FDtask{the type of the inputs of $\funvalue$} the expression $\funvalue(\overline\e)$ is self-stabilising.
\end{defn}

Note that if an expression $\e$ self-stabilises, then it does so to a state that is unequivocally determined by the environmental conditions $\Envi$ (i.e., it does not depend on the initial configuration $\Cfg_0$) and can hence be interpreted as the output of a computation on $\Envi$. Furthermore, this final state $\Cfg$ must be stable.
Note that this definition implies that field computations recover from any change on environmental conditions, since they react to them by forgetting their current state and reaching the stable state implied by such a change.
Complementarily, computation can reach a stable state only when environmental changes are transitory. %or do not affect the c$\Cfg$.

\subsection{Eventual Behaviour} \label{ssec:ss_eventual}

Define a \emph{computational field} $\dvalue$ as a map $\envmap{\overline\deviceId}{\overline\anyvalue}$,\footnote{Even though the definition resembles that of a \emph{neighbouring field value}, it differs both in purpose and in content, since $\anyvalue$ is allowed to be a neighbouring field value itself, and $\overline\deviceId$ spans the whole network and not just a device's neighbourhood.} such that if $\overline\anyvalue$ have field type their domains are \emph{coherent} with the environment $\ap{\Topo,\Sens}$, that is, $\domof{\dvalue(\deviceId)} = \Topo(\deviceId) \cap \domof{\dvalue}$.
Let $\denotval{\type}$ be the set of values of type $\type$ and $\denottype{\type} = \deviceIdSet \pto \denotval{\type}$ be\footnote{By $A \pto B$ we denote the set of all partial functions from $A$ to $B$.} the set of all computational fields $\dvalue$ of the same type. Each such $\dvalue$ is computable by at least one self-stabilising expression $\e$  (defined by cases, and executed in the restricted environment corresponding to $\domof{\dvalue}$)---in which case we say that  $\e$ is a \emph{self-stabilising expression for} $\dvalue$.

Note that a network status $\Field$ induces uniquely a computational field $\dvalue$ defined by $\dvalue(\deviceId) = \vrootOf{\Field(\deviceId)(\deviceId)}$, while conversely each $\dvalue$ is coherent with multiple network statuses $\Field$. Thus a computational field is not sufficient to capture the whole status of a computation of a program $\PROGRAM$. However, for self-stabilising programs $\PROGRAM$ and self-stabilising functions $\funvalue$, it suffices to define the \emph{eventual output} of a computation: given computational fields $\overline\dvalue$, let $\Cfg_0 \nettran{}{\deviceId_0}{} \Cfg_1 \nettran{}{\deviceId_1}{} \ldots$ be any fair evolution of a network computing $\funvalue(\overline\e)$ where $\overline\e$ are self-stabilising expressions for $\overline\dvalue$. Since $\funvalue$ is self-stabilising, the fair evolution eventually stabilises to a uniquely determined state $\Cfg = \SystS{\Envi}{\Field}$, independently from the chosen evolution and initial state. This final status field $\Field$ in turn determines a unique computational field $\dvalue$, which we can think of as the eventual output of the computation\footnote{\MVtask{Note this eventual state is reached independently of the fair sequence of firing that occurs; hence, it would be the same also with firings following fully-synchronous concurrency models like BSP \cite{valiant1990bsp}.}}.

\begin{defn}[Eventual behaviour]
	Let $\e$ be a self-stabilising closed expression.
	We write $\builtindenot{}{\e}$ for the computational field $\dvalue$ eventually produced by the computation of $\e$.

	Let $\funvalue$ be a self-stabilising function of type $\overline\type \to \type'$, where $\overline\type = \type_1\times\cdots\times\type_n$ $(n\ge 0)$. We write $\builtindenot{}{\funvalue}$ for the mathematical function in $(\denottype{\type_1}\times\cdots\times\denottype{\type_n}) \to \denottype{\type'}$,\footnote{Here we assume that all input computational fields share the same domain, which is to be intended as the domain of the overall computation.} such that $\builtindenot{}{\funvalue}(\overline\dvalue) = \builtindenot{}{\funvalue(\overline\e)}$ where $\overline\e$ are self-stabilising expressions for $\overline\dvalue$.
\end{defn}

Using the above definition of eventual behaviour, it is possible to precisely state and investigate the equivalence of different self-stabilising programs.

\begin{prop}[Eventual behaviour preserving equivalences]\label{prop:ebpe}
\begin{enumerate}
	\item
	Let $\e_1$, $\e_2$ be self-stabilising expressions with the same eventual behaviour. Then given a self-stabilising expression $\e$, swapping $\e_1$ for $\e_2$ in $\e$ does not change the eventual outcome of its computation.
	\item
	Let $\funvalue_1$, $\funvalue_2$ be self-stabilising functions with the same eventual behaviour. Then given a self-stabilising expression $\e$, swapping $\funvalue_1$ for $\funvalue_2$ in $\e$ does not change the eventual outcome of its computation.
	\item \label{prop:inlining}
	Let $\e$ be a self-stabilising expression calling a user-defined self-stabilising function $\fname$ such that in $\body{\funvalue}$ no $\xname\in\args{\funvalue}$ occurs in the branch of an $\ifK$. Let $\e'$ be the expression obtained from $\e$ by substituting each function application of the kind $\funvalue(\overline\e)$ with $\applySubstitution{\body{\funvalue}\,}{\substitution{\args{\funvalue}}{\overline\e}}$. Then $\e'$ is self-stabilising and has the same eventual behaviour as $\e$ (i.e. $\builtindenot{}{\e} = \builtindenot{}{\e'}$).
\end{enumerate}
\end{prop}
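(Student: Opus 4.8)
The plan is to prove all three parts by unfolding the operational semantics of Figure~\ref{fig:deviceSemantics}, leaning on two consequences of its alignment mechanism. First, \emph{positional decoupling}: because every recursive premise threads the value-tree environment through $\piI{i}$ (or through $\piB{\lvalue}$ for the branches of an $\ifK$), the value-subtree produced by a fixed subexpression occurrence $p$ depends only on the value-subtrees the neighbours produced at that same $p$, which are written only by the sub-computation at $p$; hence that sub-computation is decoupled from the rest, and along any fair network evolution it evolves exactly as a standalone fair evolution of the subexpression at $p$, run in $\Envi$ restricted to the sub-network cut out by the enclosing $\ifK$-guards (the whole $\Envi$ if $p$ is under no branch). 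Second, \emph{root-determinacy}: a parent expression consults a child's value-subtree only through its root $\vrootOf{\cdot}$ (rules \ruleNameSize{[E-LET]}, \ruleNameSize{[E-B-APP]}, \ruleNameSize{[E-D-APP]}, \ruleNameSize{[E-NBR]}, \ruleNameSize{[E-REP]}, \ruleNameSize{[E-IF]}); in particular \ruleNameSize{[E-D-APP]} substitutes into a function body only the roots of the argument trees. Two corollaries are used throughout: (i) if $\e$ is self-stabilising then each of its subexpression occurrences is self-stabilising on its sub-network, with $\builtindenot{}{\,\cdot\,}$ equal to the root field of its stable value-subtree; and (ii) a self-stabilising computation whose input at some occurrence has an \emph{eventually} constant value-subtree converges to the same unique state as if that input had been constant from the start --- since self-stabilisation forces convergence to one state from every reachable configuration, in particular from the one reached when the input freezes.

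\textbf{Items 1 and 2.} For item~1, argue by structural induction on the one-hole context $C$ with $\e = C[\e_1]$. The base case $C=[\cdot]$ is the hypothesis $\builtindenot{}{\e_1}=\builtindenot{}{\e_2}$. In the inductive step $C$ is one of the expression constructors ($\letK$, function application, $\nbrK$, $\repK$, $\ifK$) with the hole in an immediate sub-position $C'$: positional decoupling together with $C[\e_1]$ self-stabilising forces the subtree at $C'$ to stabilise, so $C'[\e_1]$ is self-stabilising; the induction hypothesis gives $C'[\e_2]$ self-stabilising with $\builtindenot{}{C'[\e_1]}=\builtindenot{}{C'[\e_2]}$; and root-determinacy plus corollary (ii) give that the enclosing constructor, now fed a child with the same eventual behaviour, reaches the same eventual state and stays self-stabilising --- hence $C[\e_2]$ is self-stabilising with $\builtindenot{}{C[\e_2]}=\builtindenot{}{C[\e_1]}$. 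The $\ifK$ constructor needs one extra remark: when the hole is in a branch, that branch runs only on the sub-network where the guard matches, and there $\builtindenot{}{\e_1}$ and $\builtindenot{}{\e_2}$ still agree because they agree everywhere; when the hole is in the guard, the guard (being self-stabilising) has a stable Boolean field in the limit, so branch membership freezes and the argument is as before. Item~2 is the same induction with the replaced occurrence being a call $\funvalue_1(\overline{\e'})$ rather than $\e_1$: by corollary (i) the arguments $\overline{\e'}$ are self-stabilising on the call's sub-network with eventual behaviours $\overline\dvalue$, so $\funvalue_1(\overline{\e'})$ --- being $\funvalue_1$ applied to self-stabilising arguments --- is self-stabilising with eventual behaviour $\builtindenot{}{\funvalue_1}(\overline\dvalue)$, and $\funvalue_2(\overline{\e'})$ with $\builtindenot{}{\funvalue_2}(\overline\dvalue)$; these coincide since $\builtindenot{}{\funvalue_1}=\builtindenot{}{\funvalue_2}$, and the rest is unchanged. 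Several occurrences are handled by iterating innermost-first, so that at each step the arguments of the call treated already have matching eventual behaviours.

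\textbf{Item 3.} Form $\e'$ by replacing each $\funvalue(\overline\e)$ in $\e$ by $\applySubstitution{\body{\funvalue}}{\substitution{\args{\funvalue}}{\overline\e}}$; that $\e'$ is closed, first-order and well-typed (substitution is innocuous for the Hindley--Milner-style system of~\cite{DVB-SCP2016}) is routine. Fix a call site $p$, with sub-network $\Envi_p$. By positional decoupling and self-stabilisation of $\e$, the arguments $\overline\e$ evaluated at $p$ are self-stabilising on $\Envi_p$ with eventual behaviours $\overline\dvalue$, and inside $\e$ the body of $\funvalue$ is evaluated with only their roots substituted (rule \ruleNameSize{[E-D-APP]}). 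In $\e'$, each occurrence of a parameter $\xname_j$ in $\body{\funvalue}$ becomes its own copy of $\e_j$; the hypothesis that no $\xname_j$ occurs in the branch of an $\ifK$ inside $\body{\funvalue}$ ensures that every such copy lies in $\e'$ under exactly the same enclosing $\ifK$-guards as the original call, hence runs on $\Envi_p$, so by positional decoupling it is a self-stabilising computation of $\e_j$ on $\Envi_p$ with eventual behaviour $\dvalue_j$ --- precisely the root that the single evaluation inside $\e$ supplies. (A copy landing in a $\repK$-initial slot matters only at the first firing, and the $\repK$ reaches the same eventual fixpoint regardless; several occurrences of $\xname_j$ yield several independent copies, all converging to $\dvalue_j$.) Thus the part of $\e'$ above the inlined copies is eventually fed exactly the inputs --- constant $\overline\dvalue$ on $\Envi_p$ --- that $\body{\funvalue}$-inside-$\e$ is eventually fed; by self-stabilisation of $\e$ and corollary (ii) that part stabilises, so $\e'$ stabilises, and to the same eventual state: $\builtindenot{}{\e'}=\builtindenot{}{\e}$.

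\textbf{Main obstacle.} The step to watch is corollary (ii) --- ``eventually constant behaves as constant'' --- and, underneath it, the identification of a subexpression's computation inside $\e$ with a standalone one: both hinge on reading ``self-stabilising'' as convergence to a unique state from every configuration (equivalently, from every configuration reachable from the empty one, once positional decoupling is shown to transport reachability between an expression and its subexpressions). This is what lets us replace an argument subexpression, past its stabilisation round, by a constant input, and hence invoke the self-stabilisation of $\funvalue_1,\funvalue_2,\funvalue$ against subexpressions that are not literally presented as self-stabilising expressions. Two lesser checks are that self-stabilisation is closed under restriction to a sub-network (needed for the $\ifK$-branch cases) and that a parent observes only the root of a child (so that $\e_1$ and $\e_2$ being syntactically different, with differently shaped value-trees below a common eventual root, does no harm). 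Finally, the $\ifK$-freeness hypothesis of item~3 is exactly the condition that prevents inlining from moving a stateful argument computation onto a strictly smaller sub-network, where its eventual behaviour would in general change.
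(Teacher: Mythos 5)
Your proof is correct and follows essentially the same route as the paper's: items 1--2 are the paper's ``straightforward induction on the structure of the expression, with the inductive step given by compositionality of the operational semantics'', which you simply make explicit via your positional-decoupling and root-determinacy observations about alignment, and item 3 is the paper's argument that the $\ifK$-freeness hypothesis forces every (copy of an) argument to be evaluated in the same environment as the original call, so that $\funvalue(\overline\e)$ and its inlining have the same eventual behaviour and the conclusion reduces to item 1. Your write-up is in fact more detailed than the paper's terse proof (e.g.\ in treating multiple copies of stateful arguments and parameters landing in $\repK$-initial positions), but it is the same decomposition and the same key ideas rather than a different approach.
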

\begin{proof}
\begin{enumerate}
	\item
	By straightforward induction on the structure of an expression. The base case is given by expressions without occurrences of $\e_1$ and $\e_2$, and by expressions $\e_i$ for $i=1,2$. The inductive step follows by compositionality of the operational semantics.
	
	\item
	For the same reasoning as in point (1), where the base case is given by expressions without occurrences of $\funvalue_1$ and $\funvalue_2$ and by expressions $\funvalue_i(\overline\e)$ for $i=1,2$.
	
	\item
	Recall that no expressions with side effects are contemplated in the present calculus. Since no argument of $\funvalue$ occurs in the branch of an $\ifK$, each of those arguments is evaluated in the same environment as the whole function application $\funvalue(\overline\e)$. It follows that $\e_1 = \funvalue(\overline\e)$ and $\e_2 = \applySubstitution{\body{\funvalue}\,}{\substitution{\args{\funvalue}}{\overline\e}}$ have the same behaviour (hence the same eventual behaviour). The thesis follows then by applying point (1) to expressions $\e_1$ and $\e_2$.
\end{enumerate}
\end{proof}

\section{Self-Stabilising Fragment}
\label{sec:fragment}

By exploiting the definition of self-stabilisation given in previous section, and its implication in considering eventual behaviour as a valid characterisation of the functional property of a field computation, it is possible to identify sufficient conditions for self-stabilisation in terms of a fragment of the Field Calculus, inductively defined by:

\begin{enumerate}
\item
identifying a ``base'' fragment of the Field Calculus that contains only self-stabilising programs; 
\item
identifying a set of eventual behaviour preserving equivalences (cf.\ Proposition~\ref{prop:ebpe});
\item
relying on the fact that the least fragment of the Field Calculus  that contains  the fragment (1) and is closed under the equivalences (2) is self-stabilising.
\end{enumerate}

Accordingly, in this section we first present \FDtask{some motivating examples of non self-stabilising Field Calculus programs (in Section~\ref{s:ss-non-ss-examples}), then present} the syntax of the identified ``base'' self-stabilising fragment (in Section \ref{s:ss-syntax}), then state the self-stabilisation result for the fragment along with equivalence results further extending the fragment (in Section \ref{sec:fragment_is_ss}), and finally discuss its expressiveness 
 %by showing that G, C and T are in the fragment 
(in Section \ref{ss-expr}).

\subsection{Examples of non self-stabilising programs}\label{s:ss-non-ss-examples}

Let us begin by considering some examples of Field Calculus programs that are not self-stabilising, illustrating key classes of program behavior that need to be excluded from our self-stabilising fragment.

\begin{example} \label{ex:nonss1}
	Consider the following function:
\begin{Verbatim}[fontsize=\fontsize{7pt}{9pt}, frame=single, commandchars=\\\{\}, codes={\catcode`$=3\catcode`^=7\catcode`_=8}]
\km{def} f1(v) \{ \km{rep} (v) \{ (x) => v-x \} \}
\end{Verbatim}
	This function does not self-stabilise, since given a fixed input $\anyvalue$ its output loops through a series of different values. For example, if $\anyvalue$ is constantly equal to $1$ the outputs are $0, 1, 0, 1, \ldots$ Thus in this case self-stabilisation is prevented by an \emph{oscillating} behaviour.
\end{example}

\begin{example} \label{ex:nonss2}
	Consider the following function (a classical gossip implementation):
\begin{Verbatim}[fontsize=\fontsize{7pt}{9pt}, frame=single, commandchars=\\\{\}, codes={\catcode`$=3\catcode`^=7\catcode`_=8}]
\km{def} f2(v) \{ \km{rep} (v) \{ (x) => \pr{max}(\pr{maxHood+}(\km{nbr}\{x\}), v) \} \}
\end{Verbatim}
	This function does not self-stabilise, since its output depends on the whole history of values $\anyvalue$ given to it in the network. For example, if at some point a highest value $k$ was given in some device, the eventual output of the function upon a fixed input $\anyvalue < k$ is $k$, thus it is not a function of the constant input $\anyvalue$. Thus, in this case self-stabilisation is prevented by an indefinite ``state preservation''.
\end{example}

\begin{example} \label{ex:nonss3}
	Consider the following function, with input $\anyvalue$ of an unbounded integer type (big integer):
\begin{Verbatim}[fontsize=\fontsize{7pt}{9pt}, frame=single, commandchars=\\\{\}, codes={\catcode`$=3\catcode`^=7\catcode`_=8}]
\km{def} f3(v) \{ \km{rep} (v) \{ (x) => \pr{min}(\pr{minHood}(\km{nbr}\{x\}) - 1, v) \} \}
\end{Verbatim}
	This function does not self-stabilise, since given any fixed input $\anyvalue$ and at least one neighbour, its output keeps decreasing without a bound. Thus, in this case self-stabilisation is prevented by a \emph{divergent} behaviour.
\end{example}

\MVtask{Oscillation, state preservation, and divergence are three key causes of non self-stabilisation that the proposed fragment will address.}

\subsection{Syntax}\label{s:ss-syntax}

The ``base'' self-stabilising fragment of field calculus is obtained by replacing each occurrence of the expression token $\e$ in the first two lines of Figure \ref{fig:syntax} (i.e., in the productions for $\PROGRAM$ and $\FUNCTION$) with the self-stabilising expression token $\s$, defined in Figure~\ref{fig:fragment}. This fragment includes:
\begin{itemize}
	\item all expressions not containing a $\repK$ construct, hence comprising built-in functions, which are therefore assumed to be self-stabilising;
	\item three special forms of $\repK$-constructs, defined with a specific syntax coupled with semantic restrictions on relevant functional parameters.
\end{itemize}

\begin{figure}[!t]
\centering
\centerline{\framebox[\linewidth]{$
	\begin{array}{lcl@{\hspace{-5pt}}r}
		\s & \BNFcce &  \xname \BNFmid \anyvalue  \BNFmid \letK \; \xname = \s \; \inK \; \s \BNFmid \funvalue(\overline\s) \BNFmid \ifK (\s) \{ \s \} \{ \s \} \BNFmid \nbrK\{\s\}
		&   {\footnotesize \mbox{self-stabilising}} \\[3pt]
		&&  \; \BNFmid \;  \repK(\e)\{ (\xname) \toSymK{} \funvalue^\mathsf{C}(\nbrK\{\xname\}, \nbrK\{\s\}, \overline\e) \}  &  {\footnotesize \mbox{expression}}\\[3pt]
		&&  \; \BNFmid \;  \repK(\e)\{ (\xname) \toSymK{} \funvalue(\muxK(\nbrlt(\s), \nbrK\{\xname\}, \s), \overline\s) \}\\[3pt]
		&&  \; \BNFmid \;\repK(\e)\{ (\xname) \toSymK{} \funvalue^\mathsf{R}(\minHoodLoc(\funvalue^\mathsf{MP}(\nbrK\{\xname\}, \overline\s), \s), \xname, \overline\e) \}
	\end{array}
	$}
}
\caption{Syntax of a self-stabilising fragment of field calculus expressions, where self-stabilising expressions $\s$ occurring inside a $\repK$ statement cannot contain free occurrences of the $\repK$-bound variable $\xname$.}
\label{fig:fragment}
\end{figure}

\subsubsection{\FDtask{The $\mathsf{C}, \mathsf{M}, \mathsf{P}, \mathsf{R}$ function properties}}

The properties that these functional parameters are required to satisfy are among the following, visually annotated in the figure through superscripts on function names. Notice that properties $\mathsf{M}$, $\mathsf{P}$, and $\mathsf{R}$ require some of their argument types to be equipped with a \emph{partial order} relation, while property $\mathsf{C}$ requires its argument types to be equipped with a \emph{metric}. In order to obtain the self-stabilisation property for the fragment, we shall also need some further assumptions, discussed later in the description of each pattern.

%\begin{itemize}

%	\item 
%\emph{$\mathsf{C}$ (Converging).} 
 \paragraph{$\mathsf{C}$ (Converging)}
A function $\funvalue(\fvalue, \fvaluealt, \overline\anyvalue)$ is \MVtask{said} converging iff, for every device $\deviceId$, its return value is closer to $\fvaluealt(\deviceId)$ than the maximal distance of $\fvalue$ to $\fvaluealt$.
\FDtask{To be precise, given any environment $\Trees$, device $\deviceId \in \domof{\Trees}$, values $\fvalue, \fvaluealt, \overline\anyvalue$ coherent with the domain of $\Trees$, and assuming that $\bsopsem{\deviceId}{\Trees}{\funvalue(\fvalue, \fvaluealt, \overline\anyvalue)}{\mkvt{\lvalue}{\overline \vtree}}$:
	\[
	\dist\cp{\lvalue, \fvaluealt(\deviceId)} = 0 \text{ or }
	\dist\cp{\lvalue, \fvaluealt(\deviceId)} < \max\bp{\dist(\fvalue(\deviceId'), \fvaluealt(\deviceId')): ~ \deviceId' \in \domof{\Trees}}
	\]
	where $\dist$ is any metric.}

\FDtask{
\begin{example} \label{ex:propC}
Function $\funvalue_1(\fvalue, \fvaluealt) = \texttt{pickHood}(\fvaluealt - \fvalue) = (\fvaluealt - \fvalue)(\deviceId)$ is not converging, for example when $\fvalue, \fvaluealt$ are constant fields equal to $2, 3$ respectively so that $\lvalue = 1$ (\texttt{pickHood} selects the value on the current device from a field). 
On the other hand, functions $\funvalue_2(\fvalue, \fvaluealt) = \texttt{pickHood}((\fvaluealt + \fvalue)/2)$ and $\funvalue_3(\fvalue, \fvaluealt) = \texttt{pickHood}(\fvaluealt) + \texttt{meanHood}(\fvalue - \fvaluealt)/2$ are converging.
\end{example}
}
	
	%\item
 %\emph{$\mathsf{M}$ (Monotonic non-decreasing).} 
 \paragraph{$\mathsf{M}$ (Monotonic non-decreasing)}
A stateless\footnote{A function $\funvalue(\overline\xname)$ is \emph{stateless} iff given fixed inputs $\overline\anyvalue$ always produces the same output, independently from the environment or specific firing event. In other words, its behaviour corresponds to that of a mathematical function.} function $\funvalue(\xname, \overline\xname)$ with arguments of local type is monotonic non-decreasing in its first argument iff whenever $\lvalue_1 \leq \lvalue_2$, also $\funvalue(\lvalue_1, \overline\lvalue) \leq \funvalue(\lvalue_2, \overline\lvalue)$.

\FDtask{
\begin{example} \label{ex:propM}
Function $\funvalue_1(\lvalue) = \lvalue-1$ is monotonic non-decreasing, while function $\funvalue_2(\lvalue) = \lvalue^2$ is not.
\end{example}
}
	
	%\item 
%\emph{$\mathsf{P}$ (Progressive).} 
 \paragraph{$\mathsf{P}$ (Progressive)}
A stateless function $\funvalue(\xname, \overline\xname)$ with local arguments is progressive in its first argument iff $\funvalue(\lvalue, \overline\lvalue) > \lvalue$ or $\funvalue(\lvalue, \overline\lvalue) = \top$ (where $\top$ denotes the unique maximal element of the relevant type).

\FDtask{
\begin{example} \label{ex:propP}
Function $\funvalue_1(\lvalue) = \lvalue+1$ is progressive, while functions $\funvalue_2(\lvalue) = \lvalue-1$, $\funvalue_3(\lvalue) = \lvalue ^2$ are not.
\end{example}
}
	
	%\item 
%\emph{$\mathsf{R}$ (Raising).} 
 \paragraph{$\mathsf{R}$ (Raising)}
A function $\funvalue(\lvalue_1, \lvalue_2, \overline\anyvalue)$ is raising with respect to partial orders $<$, $\vartriangleleft$ iff:
	\begin{itemize}
		\item $\funvalue(\lvalue, \lvalue, \overline\anyvalue) = \lvalue$;
		\item $\funvalue(\lvalue_1, \lvalue_2, \overline\anyvalue) \geq \min(\lvalue_1, \lvalue_2)$;
		\item either $\funvalue(\lvalue_1, \lvalue_2, \overline\anyvalue) \vartriangleright \lvalue_2$ or $\funvalue(\lvalue_1, \lvalue_2, \overline\anyvalue) = \lvalue_1$.
	\end{itemize}

\FDtask{
\begin{example} \label{ex:propR}
	Function $\funvalue_1(\lvalue_1, \lvalue_2) = \lvalue_1$ is raising with respect to any partial orders. Function $\funvalue_2(\lvalue_1, \lvalue_2) = \lvalue_1 - \lvalue_2$ is not raising since it violates both the first two clauses. Function $\funvalue_3(\lvalue_1, \lvalue_2) = (\lvalue_1+\lvalue_2)/2$ respects the first two clauses for $\vartriangleleft = <$, but it violates the last one whenever $\lvalue_2 > \lvalue_1$.
\end{example}
}

%\end{itemize}

\subsubsection{\FDtask{The  three $\repK$ patterns}}

We are now able to analyse the three $\repK$ patterns.

%\begin{description}
%	\item[Converging $\repK$] 

\paragraph{Converging $\repK$}
In this pattern, variable $\xname$ is repeatedly updated through function $\funvalue^\mathsf{C}$ and a self-stabilising value $\s$. The function $\funvalue^\mathsf{C}$ may also have additional (not necessarily self-stabilising) inputs $\overline\e$. If the range of the metric granting convergence is a well-founded set\footnote{An ordered set is \emph{well-founded} iff it does not contain any infinite descending chain.} of real numbers, the pattern self-stabilises since it gradually approaches the value given by $\s$.

\begin{example}
Function \texttt{f1} in Example \ref{ex:nonss1} does not respect the converging $\repK$ pattern, as shown in Example \ref{ex:propC}. However, if we change \texttt{f1} to the following and assume that its input and output are finite-precision numeric values (e.g., Java's \texttt{double}):
\begin{Verbatim}[fontsize=\fontsize{7pt}{9pt}, frame=single, commandchars=\\\{\}, codes={\catcode`$=3\catcode`^=7\catcode`_=8}]
\km{def} filter(v) \{ \km{rep} (v) \{ (x) => (v+x)/2 \} \}
\end{Verbatim}
we obtain a \emph{low-pass filter} that is self-stabilising and complies with the converging $\repK$ pattern.
\end{example}

	%\item[Acyclic $\repK$] 
\paragraph{Acyclic $\repK$}
In this pattern, \TYtask{the} neighbourhood's values \TYtask{for} $\xname$ are first filtered through a self-stabilising partially ordered ``potential'', keeping only values held in devices with lower potential (thus in particular discarding the device's own value of $\xname$). This is accomplished by the built-in function $\nbrlt$, which returns a field of booleans selecting the neighbours with lower argument values, and could be defined as $\defK \; \nbrlt(\xname) \, \{ \nbrK\{\xname\} < \xname\}$.
	
	The filtered values are then combined by a function $\funvalue$ (possibly together with other values obtained from self-stabilising expressions) to form the new value for $\xname$. No semantic restrictions are posed in this pattern, and intuitively it self-stabilises since there are no cyclic dependencies between devices.

\begin{example}
Function \texttt{f2} in Example \ref{ex:nonss2} does not respect the acyclic $\repK$ pattern, since it aggregates all neighbours without any ``acyclic filtering''. However, if we change \texttt{f2} to the following:
\begin{Verbatim}[fontsize=\fontsize{7pt}{9pt}, frame=single, commandchars=\\\{\}, codes={\catcode`$=3\catcode`^=7\catcode`_=8}]
\km{def} f2C(v, p) \{ \km{rep} (v) \{ (x) => \pr{max}(\pr{maxHood+}(\pr{mux}(\pr{nbrlt}(p), \km{nbr}\{x\}, 0), v) \} \}
\end{Verbatim}
we obtain a particular usage of the \emph{C} block, which is self-stabilising and complies with the acyclic $\repK$ pattern.
\end{example}

	%\item[Minimising $\repK$] 
\paragraph{Minimising $\repK$}	
In this pattern, \TYtask{the} neighbourhood's values \TYtask{for} $\xname$ are first increased by a monotonic progressive function $\funvalue^\mathsf{MP}$ (possibly depending also on other self-stabilising inputs). As specified above, $\funvalue^\mathsf{MP}$ needs to operate on local values: in this pattern it is therefore implicitly promoted to operate (pointwise) on fields.
	
	Afterwards, the minimum among those values and a local self-stabilising value is then selected by function $\minHoodLoc(\fvalue, \lvalue)$ (which selects the ``minimum'' in $\applySubstitution{\fvalue}{\envmap{\deviceId}{\lvalue}}$). In order to be able to define such a minimum, we thus require the partial order $\leq$ to constitute a \emph{lower semilattice}.\footnote{A \emph{lower semilattice} is a partial order such that \FDtask{greatest lower bounds are defined for any finite set of values in the partial order. In the examples used in this paper we shall treat \emph{greatest lower bounds} as \emph{minima}, since the only examples of such partial orders we consider are in fact total orders.}}
	
	 Finally, this minimum is fed to the \emph{raising} function $\funvalue^\mathsf{R}$ together with the old value for $\xname$ (and possibly any other inputs $\overline\e$), obtaining a result \TYtask{that} is higher than at least one of the two parameters. We assume that the second partial order $\vartriangleleft$ is \emph{noetherian},\footnote{A partial order is \emph{noetherian} iff it does not contain any infinite ascending chains.} so that the raising function is required to eventually conform to the given minimum.
	
	Intuitively, this pattern self-stabilises since it computes the minimum among the local values $\lvalue$ after being increased by $\funvalue^\mathsf{MP}$ along every possible path (and the effect of the raising function can be proved to be negligible).

\begin{example}
Function \texttt{f3} in Example \ref{ex:nonss3} does not respect the minimising $\repK$ pattern, since its internal function is monotonic (see Example \ref{ex:propM}) but not progressive (see Example \ref{ex:propP}). However, if we change \texttt{f3} to the following:
\begin{Verbatim}[fontsize=\fontsize{7pt}{9pt}, frame=single, commandchars=\\\{\}, codes={\catcode`$=3\catcode`^=7\catcode`_=8}]
\km{def} hopcount(v) \{ \km{rep} (v) \{ (x) => \pr{min}(\pr{minHood}(\km{nbr}\{x\}) + 1, v) \} \}
\end{Verbatim}
we obtain a \emph{hop-count gradient}, a particular case of the \emph{G} block which is self-stabilising and complies with the minimising $\repK$ pattern.
\end{example}
%\end{description}

Note that the well-foundedness and noetherianity properties are trivially verified whenever the underlying data set is finite.

\subsection{Self-Stabilisation and Equivalence} \label{sec:fragment_is_ss}

Under reasonable conditions, we are able to prove that the proposed fragment is indeed self-stabilising. The proofs of all the results in this section are given in Appendix \ref{sec:proofs}, while here we only report the full statements.

\begin{thm}[Fragment Stabilisation] \label{thm:stabilisation}
	Let $\s$ be a closed expression in the self-stabilising fragment, and assume that every built-in operator is self-stabilising.\footnote{Most built-in operators are stateless, thus trivially self-stabilising in one round.} Then $\s$ is self-stabilising.
\end{thm}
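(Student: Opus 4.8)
The argument is a structural induction on the self-stabilising expression $\s$, with calls $\funvalue(\overline\s)$ to user-defined functions handled by unfolding $\body{\funvalue}$ (which is itself an expression of the fragment) — an unfolding that is finite since every firing is assumed to terminate. I work with a fixed well-formed environment $\Envi=\ap{\Topo,\Sens}$ over the (finite) set of devices, and I repeatedly use fairness to partition a fair firing sequence into successive \emph{epochs}, in each of which every device fires at least once. The base cases ($\s$ a variable or a value) are immediate, and the non-$\repK$ inductive cases are routine by compositionality of the operational semantics: by the inner hypotheses each proper subexpression stabilises, after finitely many epochs, to a constant field determined by $\Envi$, and then for $\letK$ the body runs over a frozen substitution; for $\funvalue(\overline\s)$ with $\funvalue$ built-in this is the hypothesis of the theorem, and with $\funvalue$ user-defined it is the inner instance applied to $\body{\funvalue}$; for $\ifK(\s)\{\s\}\{\s\}$, once the guard freezes the partition of the network into the two branch subdomains is fixed and (via the alignment filter $\piB{\cdot}$) each branch runs as an independent self-stabilising computation; for $\nbrK\{\s\}$, once $\s$ freezes and all neighbours have fired the emitted neighbouring field value is constant over the fixed domain $\Topo(\delta)$. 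In every case the limit depends only on $\Envi$, so we obtain self-stabilisation, not merely stabilisation. The substance of the proof is the three $\repK$ patterns, for which the inner hypothesis already gives that all self-stabilising operands are frozen to fields $\phi,\overline\phi$ after some epoch; I then analyse the residual state recursion for $\xname$ with those inputs frozen.

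For the \emph{converging} pattern, once $\nbrK\{\s\}$ is frozen set $M_t:=\max_\delta\dist(x_t(\delta),\phi(\delta))$. The $\mathsf{C}$ property forces each firing of $\delta$ to overwrite $x(\delta)$ by a value at distance $0$, or strictly less than the maximum of the distances of $\delta$'s neighbours (and $\delta$ itself) from $\phi$; hence $M$ is non-increasing, and once $\delta$ has fired within an epoch its distance stays strictly below that epoch's initial $M$. Finiteness of the network then makes $M$ strictly decrease from one epoch to the next whenever $M>0$, and well-foundedness of the metric's range forbids an infinite descending chain, so after finitely many epochs $x\equiv\phi$ — a fixed point, with limit determined by $\Envi$.

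For the \emph{acyclic} pattern, let the self-stabilising ``potential'' $\s$ freeze to a field $p$; this induces a finite acyclic relation $\delta\to\delta'$ ($\delta'$ a neighbour of $\delta$ with $p(\delta')<p(\delta)$), and I induct on the \emph{height} of a device (the length of the longest such chain from it). A height-$0$ device has $\nbrlt(p)$ false everywhere, so $\muxK(\nbrlt(p),\nbrK\{\xname\},\s)$ has frozen value regardless of the discarded $\xname$-entries; the $\repK$-body is then $\funvalue$ applied to constant fields and to the frozen $\overline\s$, and since constant fields are self-stabilising expressions and $\funvalue$ is self-stabilising (inner hypothesis), $x(\delta)$ stabilises. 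For height $k$, every $\to$-successor of $\delta$ has strictly smaller height and has thus already frozen, so again $\muxK(\nbrlt(p),\nbrK\{\xname\},\s)$ presents $\delta$ with a frozen field; finiteness of the network bounds the height, and all devices stabilise to a configuration fixed by $\Envi$.

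The \emph{minimising} pattern is the main obstacle, and is the case deferred to the appendix. Writing $h:=\funvalue^{\mathsf{MP}}$ (monotone non-decreasing and progressive) and with $\s,\overline\s$ frozen to $\phi,\overline\phi$, the candidate limit is the Bellman--Ford-style field $y^\ast$ with $y^\ast(\delta)=\min\bigl(\phi(\delta),\ \min_{\delta'\in\Topo(\delta)}h(y^\ast(\delta'))\bigr)$, which is well-defined and least: progressivity makes every cyclic contribution climb and, by noetherianity of the ascending order, eventually reach $\top$, so only the finitely many simple-path contributions survive the $\minHoodLoc$. The argument then splits in two. First, the raising function becomes \emph{negligible}: using noetherianity of $\vartriangleleft$ one bounds, along any fair sequence, the number of firings that take the strictly-raising branch $\funvalue^{\mathsf{R}}(\ell_1,\ell_2,\ldots)\vartriangleright\ell_2$, so eventually every update takes the branch $\funvalue^{\mathsf{R}}(\ell_1,\ell_2,\ldots)=\ell_1=\minHoodLoc\bigl(h(\nbrK\{\xname\},\overline\s),\s\bigr)$ — after which the result no longer depends on the possibly non-stabilising inputs $\overline\e$ of $\funvalue^{\mathsf{R}}$. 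Second, with the recursion reduced to the pure monotone operator $x(\delta)\mapsto\min\bigl(\phi(\delta),\min_{\delta'}h(x(\delta'))\bigr)$, one shows by the classical asynchronous gradient-convergence reasoning — an eventual upper bound $x\le y^\ast$ (paths longer than the network blow up to $\top$ and drop out) together with an eventual lower bound $x\ge y^\ast$ (monotonicity plus finiteness of the relevant simple-path set) — that a fair iteration reaches $y^\ast$ from any initial configuration. The delicate points, where I expect the real effort, are (i) disentangling the two partial orders $\le$ and $\vartriangleleft$ and making ``negligibility of the raising function'' rigorous despite a device's value being monotone in neither order, and (ii) carrying the Bellman--Ford argument through the asynchronous, epoch-based schedule rather than a synchronous one. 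Assembling the three $\repK$ cases with the routine ones closes the induction.
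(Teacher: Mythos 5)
Your overall scaffolding is the same as the paper's: a structural induction in which subexpressions are ``frozen'' to computational fields (the paper states this as a strengthened lemma about open expressions under substitution of computational fields), routine treatment of the non-$\repK$ constructs, an epoch-wise strict decrease of the maximal distance for the converging $\repK$ (the paper tracks the max-realising device rather than whole epochs, but the argument is the same), and an induction along the frozen potential for the acyclic $\repK$ (your height-in-the-induced-DAG induction versus the paper's ordering of devices by increasing potential are interchangeable). All of that is fine and buys nothing essentially different.

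The genuine gap is in the minimising $\repK$, exactly at your phase (i). Noetherianity of $\vartriangleleft$ only bounds \emph{consecutive} strictly-raising steps at a single device: together with $\funvalue^\mathsf{R}(\lvalue,\lvalue,\overline\anyvalue)=\lvalue$ it yields that each device must select its first argument infinitely often, but after every conforming step the ascending $\vartriangleleft$-chain resets, and since the auxiliary inputs $\overline\e$ are not required to stabilise, the strictly-raising branch can be taken infinitely often along a fair evolution. Hence your claim that one can bound the total number of raising firings, so that ``eventually every update takes the $=\lvalue_1$ branch'', is not a consequence of the hypotheses; it only becomes true \emph{after} the $\minHoodLoc$ outputs have stabilised, which is precisely what is being proved --- the decoupling into ``first make $\funvalue^\mathsf{R}$ globally negligible, then run pure asynchronous Bellman--Ford'' is circular. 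The same issue breaks your eventual upper bound $x\le y^\ast$: a raising step is constrained only by being $\ge\min(\lvalue_1,\lvalue_2)$ in $\le$ and by $\vartriangleright$ in the \emph{other} order, so values can overshoot above $y^\ast$ arbitrarily late (this is exactly what CRF-style raising does during transients). The paper's Lemma on the minimising pattern avoids this by interleaving the two concerns: devices are ordered by minimal path weight and one does a complete induction on that order, showing that the minimum of $\xname$ over not-yet-stable devices (a) can never decrease while it is below the next threshold, because $\funvalue^\mathsf{R}$ never drops below the minimum of its two arguments and the $\minHoodLoc$ output at non-stable devices is strictly above that minimum, and (b) must eventually increase, because each minimum-realising device conforms to its first argument infinitely often. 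This ratcheting of the global minimum is the idea your proposal is missing; with it, negligibility of the raising function is needed only locally, at a device whose $\minHoodLoc$ output has already stabilised, where it does follow from noetherianity.
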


Since the fragment is closed under function application, the result is immediately extended to whole programs.
%function bodies (open expressions): that is, user-defined functions in the fragment are self-stabilising, and so are whole programs.

In Section \ref{ssec:ss_eventual} we introduced a notion of \emph{equivalence} for self-stabilising programs. 
Therefore, although the $\repK$ patterns are defined through \emph{functions} with certain properties, we are allowed to inline them (which is a transformation preserving self-stabilisation, as shown in Proposition \ref{prop:ebpe}).
Moreover, \TYtask{a} few noteworthy equivalence properties hold for the given patterns, as shown by the following theorem.
\begin{thm}[Substitutability] \label{thm:substitutability}
	The following three equivalences hold:
	\begin{itemize}
		\item Each $\repK$ in a self-stabilising fragment self-stabilises to the same value %is result-equivalent 
under arbitrary substitution of the initial condition.
		\item The \emph{converging $\repK$} pattern
  self-stabilises to the same value \TYtask{as}
% is result-equivalent to 
the single expression $\s$ occurring in it.
		\item The \emph{minimising $\repK$} pattern 
  self-stabilises to the same value \TYtask{as}
 %is result-equivalent to
 the analogous pattern where $\funvalue^\mathsf{R}$ is the identity on its first argument.
	\end{itemize}
\end{thm}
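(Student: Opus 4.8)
Each bullet is an assertion about the \emph{eventual output} $\builtindenot{}{\cdot}$ of a computation, so in every case the plan is the same: (i) use Theorem~\ref{thm:stabilisation} to know that the relevant expressions self-stabilise, so that $\builtindenot{}{\cdot}$ is defined (freely inlining functions where Proposition~\ref{prop:ebpe} permits); (ii) characterise the computational field induced by a \emph{stable} network status; and (iii) verify the characterisations coincide. Reasoning at the level of the induced field is sound because, as noted in Section~\ref{ssec:ss_eventual}, a stable status determines a unique computational field. The observation used repeatedly is that, by rule \ruleNameSize{[E-REP]}, the initial-condition expression of a $\repK$ is consulted only on a device's \emph{first} firing (the case $\deviceId \notin \domof{\Trees}$), whereas in a stable status every device has already fired; hence the fixpoint equation satisfied by the stable value of a $\repK$ does not mention its initial condition.

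\textbf{First equivalence (initial-condition substitution).} Let $\s^{(0)},\s^{(1)}$ be two fragment expressions differing only in the initial condition of one $\repK$; both self-stabilise by Theorem~\ref{thm:stabilisation}. By the observation above, a stable status of either satisfies the \emph{same} fixpoint equation in the $\repK$-bound variable. It then remains to recall, from the proof of Theorem~\ref{thm:stabilisation}, the explicit description of the \emph{unique} fixpoint to which each pattern converges --- the pointwise minimum over all paths of the $\funvalue^\mathsf{MP}$-increased local values (minimising $\repK$), the field $\builtindenot{}{\s}$ (converging $\repK$, established below), or the value obtained by resolving the acyclic dependencies of the potential (acyclic $\repK$) --- none of which involves the initial condition. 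Hence $\builtindenot{}{\s^{(0)}} = \builtindenot{}{\s^{(1)}}$.

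\textbf{Second and third equivalences.} For the converging pattern $\repK(\e)\{(\xname)\toSymK{}\funvalue^\mathsf{C}(\nbrK\{\xname\},\nbrK\{\s\},\overline\e)\}$ (self-stabilising by Theorem~\ref{thm:stabilisation}), fix a stable status: since the whole status is stable, the subcomputation for $\s$ is stable, with eventual field $\psi = \builtindenot{}{\s}$, and the $\repK$-variable has some eventual field $\phi$. Stability forces, at every $\deviceId$, $\phi(\deviceId) = \funvalue^\mathsf{C}(\phi',\psi',\ldots)$ where $\phi',\psi'$ are the restrictions of $\phi,\psi$ to $\deviceId$'s neighbourhood and the remaining arguments have fixed values. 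Choosing $\deviceId^\star$ to maximise $\dist(\phi(\deviceId),\psi(\deviceId))$ over all devices and writing $M$ for that maximum, property $\mathsf{C}$ applied at $\deviceId^\star$ (together with the fact that $\deviceId^\star$'s neighbourhood is contained in the whole device set) yields $M = 0$ or $M < M$; the latter is absurd, so $M = 0$ and $\phi = \psi = \builtindenot{}{\s}$. For the minimising pattern, write it as $\repK(\e)\{(\xname)\toSymK{}\funvalue^\mathsf{R}(m(\xname),\xname,\overline\e)\}$ with $m(\xname) = \minHoodLoc(\funvalue^\mathsf{MP}(\nbrK\{\xname\},\overline\s),\s)$, and let $\s^{(1)}$ be the variant with $\funvalue^\mathsf{R}$ the first projection (which is raising, by Example~\ref{ex:propR}). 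In a stable status of the original, with $\phi$ the eventual field of $\xname$ and $m^\star$ the eventual value of $m(\xname)$ at $\deviceId$, stability gives $\phi(\deviceId) = \funvalue^\mathsf{R}(m^\star,\phi(\deviceId),\ldots)$; because $\vartriangleright$ is a strict order, $\funvalue^\mathsf{R}(m^\star,\phi(\deviceId),\ldots)\vartriangleright\phi(\deviceId)$ is impossible, so the third clause of property $\mathsf{R}$ forces $\funvalue^\mathsf{R}(m^\star,\phi(\deviceId),\ldots) = m^\star$, i.e.\ $\phi(\deviceId) = m^\star$ --- which is precisely the stable-status equation of $\s^{(1)}$. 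Since, by the proof of Theorem~\ref{thm:stabilisation} (using noetherianity of $\vartriangleleft$ for the original and the minimising argument for both), each pattern self-stabilises to the unique solution of that common equation, $\builtindenot{}{\s^{(0)}} = \builtindenot{}{\s^{(1)}}$.

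\textbf{Main obstacle.} The delicate step, uniform across the three bullets, is the passage from ``every stable status satisfies equation $E$'' to ``the two programs converge to the \emph{same} stable status''. That the two equations coincide is the easy part --- the second and third bullets collapse them using properties $\mathsf{C}$ and $\mathsf{R}$, and the first bullet observes they cannot mention the initial condition --- but this closes the argument only because the proof of Theorem~\ref{thm:stabilisation} shows that each pattern converges to a \emph{uniquely determined} fixpoint, given by an explicit formula in the environment and the pattern's non-initial components. So the substantive work is isolating and invoking exactly that uniqueness/characterisation statement from the stabilisation proof, rather than merely its ``self-stabilises'' conclusion; without it, spurious fixpoints of $E$ --- statuses that are stable but unreachable --- could a priori differ between the two programs.
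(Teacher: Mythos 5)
Your proposal is correct and follows essentially the paper's own route: the paper proves this theorem simply by ``inspecting the proofs of Lemmas \ref{lem:minimising_termination} and \ref{lem:stabilisation}'', i.e.\ by noting that those proofs characterise the eventual value of each $\repK$ pattern explicitly (minimal path weight for the minimising pattern, the value of $\s$ for the converging pattern, potential-ordered resolution for the acyclic one), independently of the initial condition and of $\funvalue^\mathsf{R}$ --- which is exactly the uniqueness/characterisation step you isolate and invoke. Your additional stable-status fixpoint analysis for the second and third bullets is sound scaffolding that makes explicit what the paper leaves implicit, but it does not change the substance of the argument.
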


In other words, the function $\funvalue^\mathsf{R}$ does not influence the eventual behaviour of a function, and can instead be used to fine-tune the transient behaviour of an algorithm. The same holds for the initial conditions of all patterns and function $\funvalue^\mathsf{C}$ in the converging $\repK$ pattern (which in fact is only meant to fine-tune the transient behaviour of the given expression $\s$). No relevant equivalences can be stated for the acyclic $\repK$ pattern, since it is parametrised by a single aggregating function which in general heavily influences the final outcome of the computation.

\subsection{Expressiveness}\label{ss-expr}

\subsubsection{\FDtask{Programs captured by the fragment}}\label{sec:newGCT}

Even though at a first glance the fragment could seem rather specific, it encompasses (equivalent versions of) many relevant algorithms. In particular, all of the three building blocks introduced in Section \ref{sec:blocks} are easily shown to belong to the fragment. This effectively constitutes a new and simpler proof of self-stabilisation for them.
	
%\begin{thm}
%	G, C, T belong to the self-stabilising fragment.
%\end{thm}
%\begin{proof}
	Operator $G$ is the following instance of the minimising $\repK$ pattern:
\begin{Verbatim}[fontsize=\fontsize{7pt}{8pt}, frame=single, commandchars=\\\{\}, codes={\catcode`$=3\catcode`^=7\catcode`_=8}]
\km{def} fr(new, old) \{ new \}

\km{def} fmp(field, dist)(accumulate) \{
  \pr{pair}(\pr{1st}(field) + dist, accumulate(\pr{2nd}(field)))
\}

\km{def} G(source, initial)(metric, accumulate) \{
  \km{rep}(\pr{pair}(source, initial))\{ (x) =>
    fr(\pr{minHoodLoc}(fmp(\km{nbr}\{x\}, metric())(accumulate), \pr{pair}(source, initial)), x)
  \}
\}
\end{Verbatim}
	Function \texttt{fr} is trivially raising (with respect to any pair of partial orders), and function \texttt{fmp} is monotonic progressive provided that pairs are ordered lexicographically (since $\texttt{dist}$ is a positive field).

	Operator $C$ is the following instance of the acyclic $\repK$ pattern:
\begin{Verbatim}[fontsize=\fontsize{7pt}{8pt}, frame=single, commandchars=\\\{\}, codes={\catcode`$=3\catcode`^=7\catcode`_=8}]
\km{def} f(field, local, null, potential)(accumulate) \{
  \pr{pair}(accumulate(\pr{mux}(\pr{2nd}(field) = \pr{uid}(), \pr{1st}(field), null), local),
    \pr{2nd}(\pr{maxHood+}(\km{nbr}\{\pr{pair}(potential, \pr{uid}())\})) )
\}

\km{def} C(potential, local, null)(accumulate) \{
  \km{rep}(\pr{pair}(local, \pr{uid}()))\{ (x) =>
    f(\pr{mux}(\pr{nbrlt}(potential), \km{nbr}\{x\}, null), local, null, potential)(accumulate)
  \}
\}
\end{Verbatim}

	Operator $T$ is the following instance of the converging $\repK$ pattern:
\begin{Verbatim}[fontsize=\fontsize{7pt}{8pt}, frame=single, commandchars=\\\{\}, codes={\catcode`$=3\catcode`^=7\catcode`_=8}]
\km{def} fc(cur, lim, initial)(decay) \{
  \pr{min}(\pr{max}(decay(\pr{pickHood}(cur)), \pr{pickHood}(lim)), initial)
\}

\km{def} T(initial, zero)(decay) \{
  \km{rep}(initial)\{ (x) => fc(\km{nbr}\{x\}, \km{nbr}\{zero\}, initial)(decay) \}
\}
\end{Verbatim}
	Function \texttt{fc} is converging since $\texttt{decay}(\texttt{pickHood}(\texttt{cur}))$ is granted to be closer to \texttt{zero} than its argument, hence:
	\[
	\vp{\texttt{fc}(\fvalue, \nbrK\{\texttt{zero}\}, \anyvalue) - \texttt{zero}} < \vp{\fvalue(\deviceId) - \texttt{zero}} \leq \max(\vp{\fvalue - \nbrK\{\texttt{zero}\}})
	\]
%\end{proof}

Furthermore, the present fragment strictly includes the one defined in \cite{VBDP-SASO2015}. Both fragments include all expressions without the $\repK$ construct. The first and third $\repK$ pattern in \cite{VBDP-SASO2015} are special cases of \emph{converging $\repK$} (the first converges to the value $\anyvalue_0$ in the \emph{bounded} condition and the third to the value $\lvalue$ in the \emph{double bounded} condition). The second pattern is almost exactly equivalent to the \emph{acyclic $\repK$}.

In the following Section \ref{sec:alternatives} we shall show further examples of algorithms still belonging to the fragment, which are alternative implementations of G, C and T.

\subsubsection{Programs not captured by the fragment}

Unfortunately, many self-stabilising programs are not captured by the fragment. In most cases this is due to syntactical reasons, so that the critical program $P$ can in fact be rewritten into an equivalent program $P'$, which instead belongs to the fragment. An example of this issue is given by the three building blocks $G$, $C$ and $T$, which we needed to rewrite in order to make them fit inside the self-stabilising fragment (see Section~\ref{sec:newGCT}).

Furthermore, self-stabilising programs exist which cannot be rewritten to fit inside the fragment. As an example, one such program could be obtained by the \emph{replicated gossip} \cite{viroli:replicatedgossip} algorithm, which does not fit inside the fragment. In particular, replicated gossip is ``self-stabilising'' \emph{provided} that a certain parameter $p$ (refresh period) is set to a large enough value with respect to certain network characteristics---and as such, it would require a slight modification of our definition of self-stabilisation as well.

% % % % % % % % % % % % % % % % % % % % % % % % % % % % % % % % % % % % % % % % % % % % % % % % %

\section{Alternative Building Blocks}
\label{sec:alternatives}

\MVtask{Even though} the G, C, and T building blocks define a useful and versatile base of operators, in practice better performing alternatives are often preferred in some specific conditions (see for example the work \TYtask{in} \citeN{VBDP-SASO2015}). 
We can \TYtask{also} use the fragment itself to get inspiration for new alternatives or interesting variations of existing ones.
\JBtask{Importantly, the self-stabilisation framework allows alternatives to be assessed on empirical grounds even when the dynamics of their operation are imperfectly understood, allowing engineering decisions to be made even when analytical solutions are not available.}

In the exploration to follow, we compare the performance of each operator and an alternative via simulation.
We evaluate each proposed alternative by simulating a network of 100 devices placed uniformly randomly in a $200m \times 20m$ rectangular arena, with a $30m$ communication radius.
The dynamics of self-stabilisation are studied by introducing perturbations in ``space'' or ``time''.
In the space perturbation experiments, devices run asynchronously at 1 Hz frequency, moving at 1 m/s in a direction randomly chosen at every round.
We shall conside\TYtask{r} ``small spatial perturbation'' where this is the entirety of the perturbation, and ``large spatial perturbation'' where the source for the spreading / aggregation of the information also switches from the original device to an alternate device every 200 seconds.
On the other hand, in the ``time perturbation'', devices remain still, but their operating frequency is randomly chosen between 0.9 Hz and 1.1 Hz (small perturbation) or 0.5 Hz and 2 Hz (large perturbation).
We performed 200 simulations per configuration, letting both the control and alternate building blocks run at the same time.
Experiments are performed using the Alchemist simulator \cite{PianiniJOS2013}.\footnote{For the sake of reproducibility, the actual experiments are made available at \url{https://bitbucket.org/danysk/experiment-2017-tomacs}}
	
\subsection{Alternative G}

The G operator can be understood as the computation of a distance measure w.r.t. a given metric, while also propagating values according to an accumulating function. However, naive computation of distance suffers from the \emph{rising value problem}: the rising rate of distance values is bounded by the shortest distance in the network, possibly enforcing a very slow convergence rate. Some algorithms avoiding this problem have been developed, such as the CRF-gradient algorithm \cite{crf}. It is possible to rewrite a CRF-gradient distance calculuation to fit the present fragment, as in the following (adapted from the code implemented in the Protelis library \cite{Protelis15}):
\begin{Verbatim}[fontsize=\fontsize{7pt}{9pt}, frame=single, commandchars=\\\{\}, codes={\catcode`$=3\catcode`^=7\catcode`_=8}]
\km{def} raise(new, old, speed, dist) \{
  \km{let} constraint = \pr{minHood}(\km{nbr}\{\pr{1st}(old)\} + dist + (\pr{nbrLag}()+\pr{sns\_interval}())*\pr{2nd}(old)) \km{in}
  \km{if} (new = old \pr{||} \pr{1st}(new) = 0 \pr{||} constraint <= \pr{1st}(old)) \{
    new
  \} \{
    \pr{pair}(\pr{1st}(old)+speed, speed/\pr{sns\_interval}())
  \}
\}

\km{def} combine(x, dist) \{
  \pr{pair}(\pr{1st}(x) + dist, 0)
\}

\km{def} CRF(source, speed)(metric) \{
  \km{rep} ( \pr{pair}(source, 0) ) \{ (x) $\toSymK{}$
    raise(\pr{minHoodLoc}(combine(\km{nbr}\{x\}, metric()), \pr{pair}(source, 0)), x, speed)
  \}
\}
\end{Verbatim}
where \texttt{nbrLag} returns a field of communication lags from neighbours.

It is easy to see that \texttt{raise} is raising with respect to the two identical partial orders $\leq$, $\leq$ (the output either increases the \texttt{old} value \TYtask{or} conforms to the \texttt{new} value). Notice that this rewriting effectively constitutes an alternative proof of self-stabilisation for the algorithm.

If it is acceptable to \TYtask{lose} some degree of accuracy, another possibility for avoiding the rising value problem is to introduce a \emph{distortion} into the metric. 
This is the approach chosen by the Flex-Gradient algorithm \cite{Beal:FLEX} (which we will abbreviate FLEX).
This algorithm allows for a better response to transitory changes while reducing the amount of communication needed between devices. In this case also, we can equivalently rewrite the algorithm in order to make it fit into the self-stabilising fragment.
\begin{Verbatim}[fontsize=\fontsize{7pt}{9pt}, frame=single, commandchars=\\\{\}, codes={\catcode`$=3\catcode`^=7\catcode`_=8}]
\km{def} raise(new, old, dist, eps, freq, rad) \{
  \km{let} slopeinfo = \pr{maxHood}(\pr{triple}((\pr{1st}(old) - \km{nbr}\{\pr{1st}(old)\})/dist, \km{nbr}\{\pr{1st}(old)\}, dist)) \km{in}
  \km{if} (new = old \pr{||} \pr{1st}(new) = 0 \pr{||} \pr{2nd}(old) = freq \pr{||} \pr{1st}(old) > \pr{max}(2*\pr{1st}(new), rad)) \{
    new
  \} \{
    \km{if} (\pr{1st}(slopeinfo) > 1+eps) \{
      \pr{pair}(\pr{2nd}(slopeinfo) + (1+eps)*\pr{3rd}(slopeinfo), \pr{2nd}(old)+1)
    \} \{
      \km{if} (\pr{1st}(slopeinfo) < 1-eps)) \{
        \pr{pair}(\pr{2nd}(slopeinfo) + (1-eps)*\pr{3rd}(slopeinfo), \pr{2nd}(old)+1)
      \} \{
        \pr{pair}(\pr{1st}(old), \pr{2nd}(old)+1)
  \} \} \}
\}

\km{def} combine(x, dist) \{
  \pr{pair}(\pr{1st}(x) + dist, 0)
\}

\km{def} FLEX(source, epsilon, frequency, distortion, radius)(metric) \{
  \km{rep} ( \pr{pair}(source, 0) ) \{ (x) $\toSymK{}$
    \km{let} dist = \pr{max}(metric(), distortion*radius) \km{in}
    raise(\pr{minHoodLoc}(combine(\km{nbr}\{x\}, dist), \pr{pair}(source, 0)),
          x, dist, epsilon, frequency, radius)
  \}
\}
\end{Verbatim}

In this case, \texttt{raise} is raising with respect to the two partial orders $\leq_1$ (ordering w.r.t. the first component of the pair) and $\leq_2$ (ordering w.r.t. the second component).

We evaluate these new building blocks when applied to distance estimation, using the two following variations of \texttt{G\_distance} (parameter \texttt{r} in the body of \texttt{G'\_flex\_distance} stands for the communication radius of devices):

\begin{Verbatim}[fontsize=\fontsize{7pt}{8pt}, frame=single, commandchars=\\\{\}, codes={\catcode`$=3\catcode`^=7\catcode`_=8}]
\km{def} G'\_crf\_distance(source) \{  CRF(source, 1/12)(\pr{nbrRange}) \}

\km{def} G'\_flex\_distance(source) \{  FLEX(source, 0.3, 10, 0.2, r)(\pr{nbrRange}) \}
\end{Verbatim}

%We evaluate these building blocks by propagating a distance estimated sourced from a randomly chosen device.
\Cref{fig:g} shows the evaluation of G and its proposed replacements: FLEX has a good performance all-around, while CRF suffers poor performance with small spatial disruptions and G suffers poor performance with large spatial disruptions.

\begin{figure}[tb]
        \includegraphics[width=.5\textwidth]{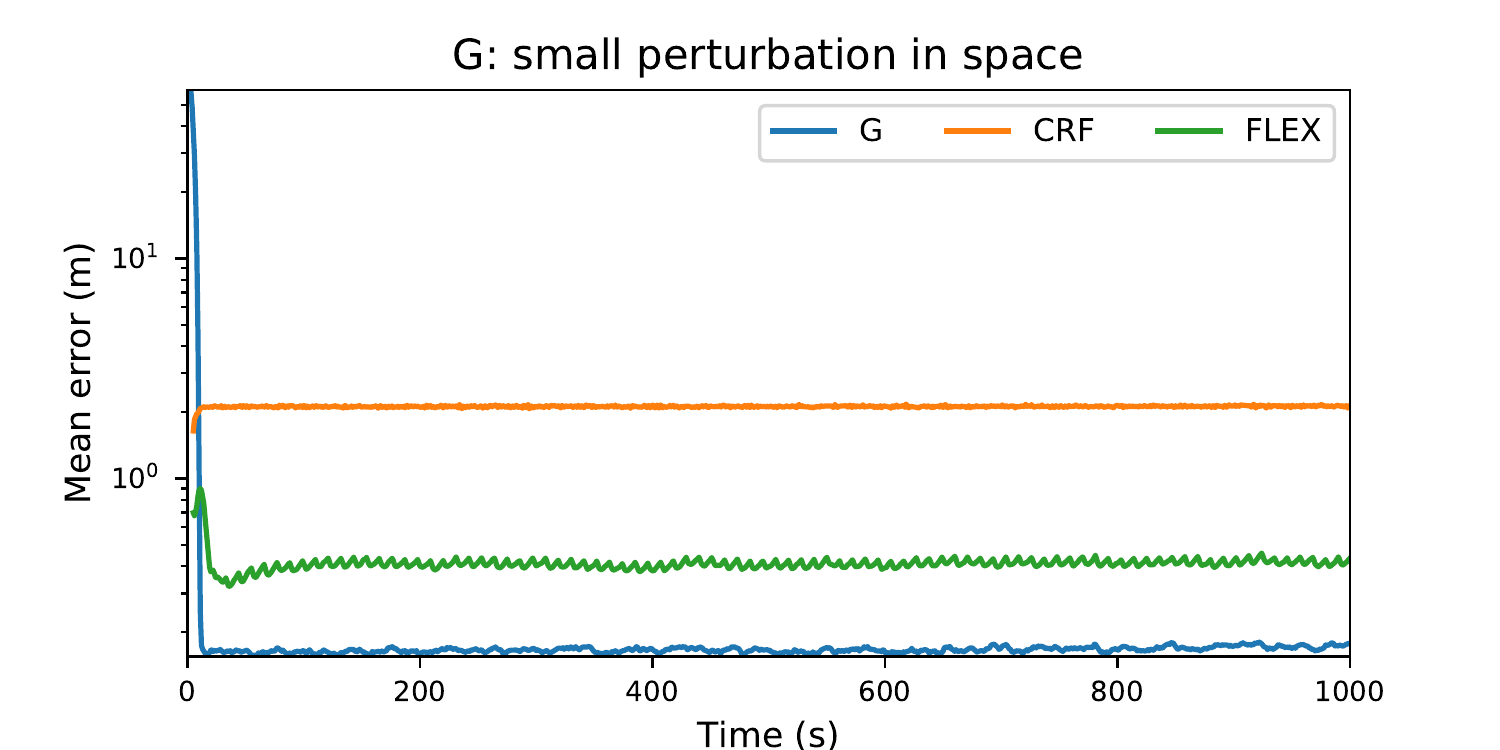}
        \includegraphics[width=.5\textwidth]{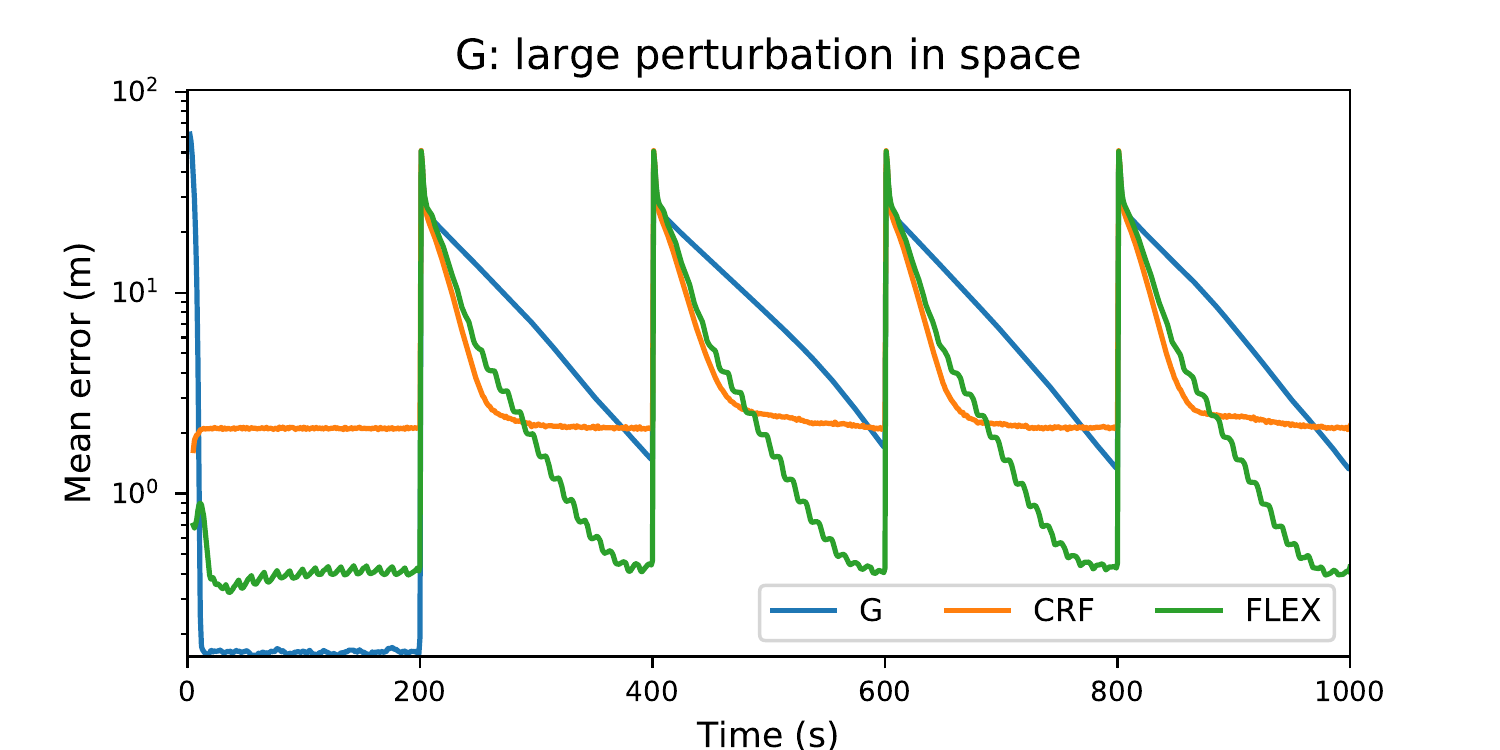}
        \includegraphics[width=.5\textwidth]{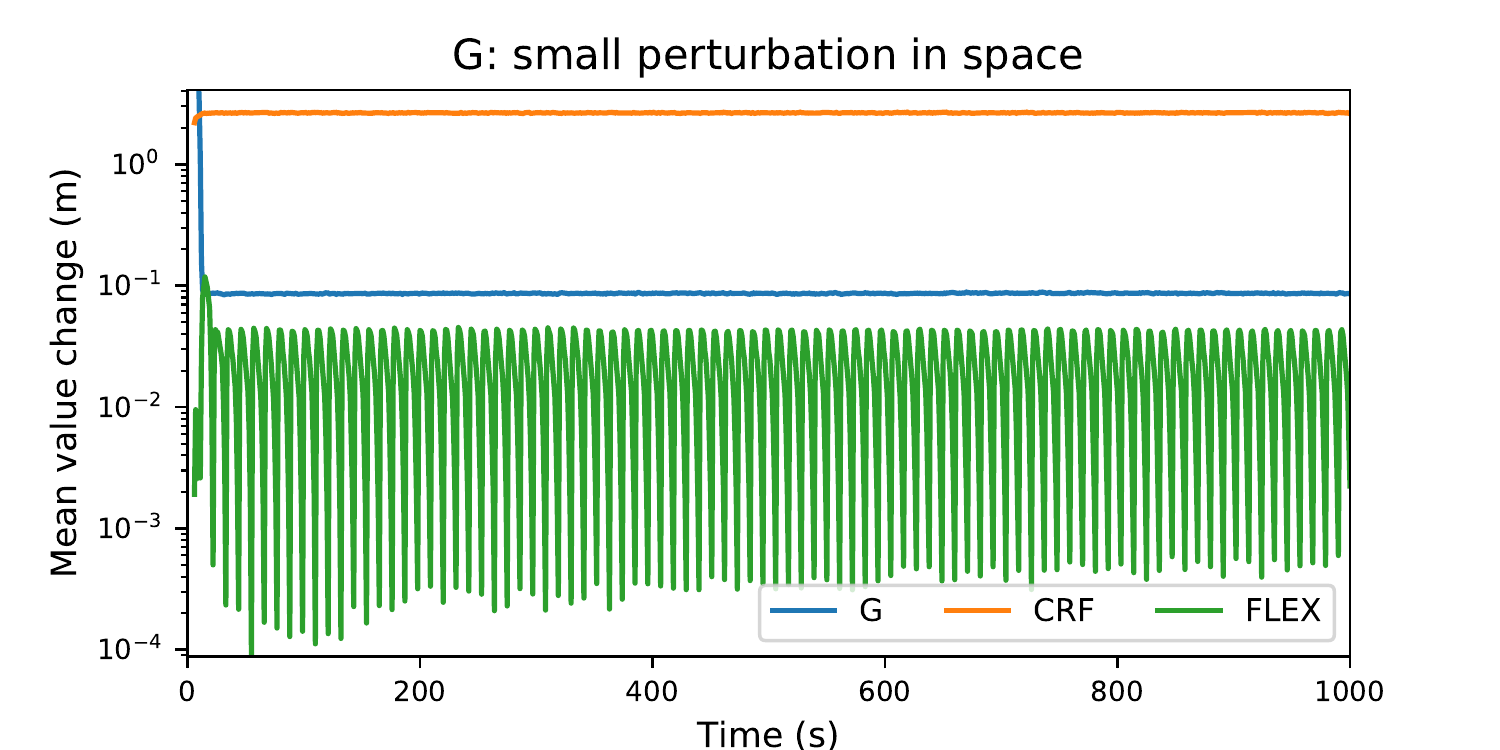}
        \includegraphics[width=.5\textwidth]{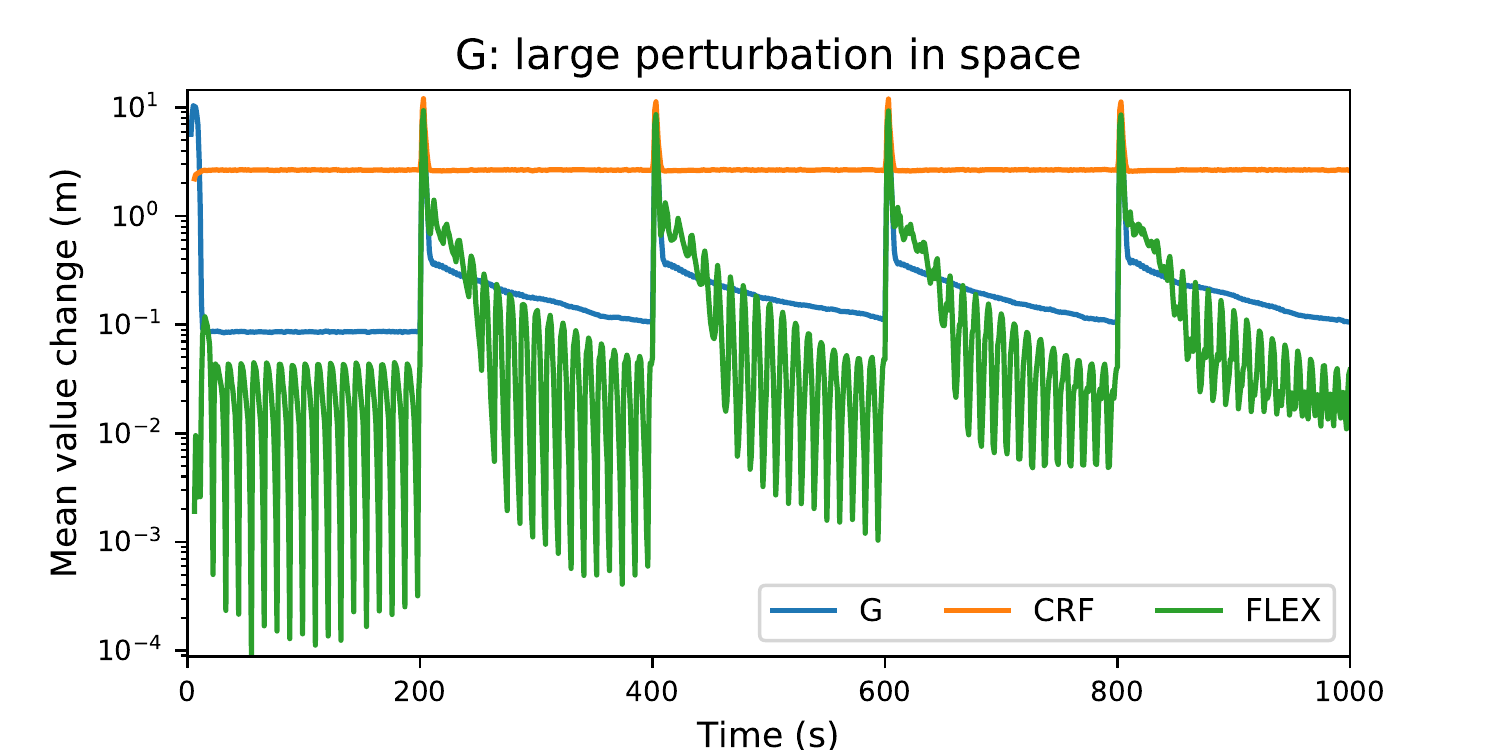}
        \includegraphics[width=.5\textwidth]{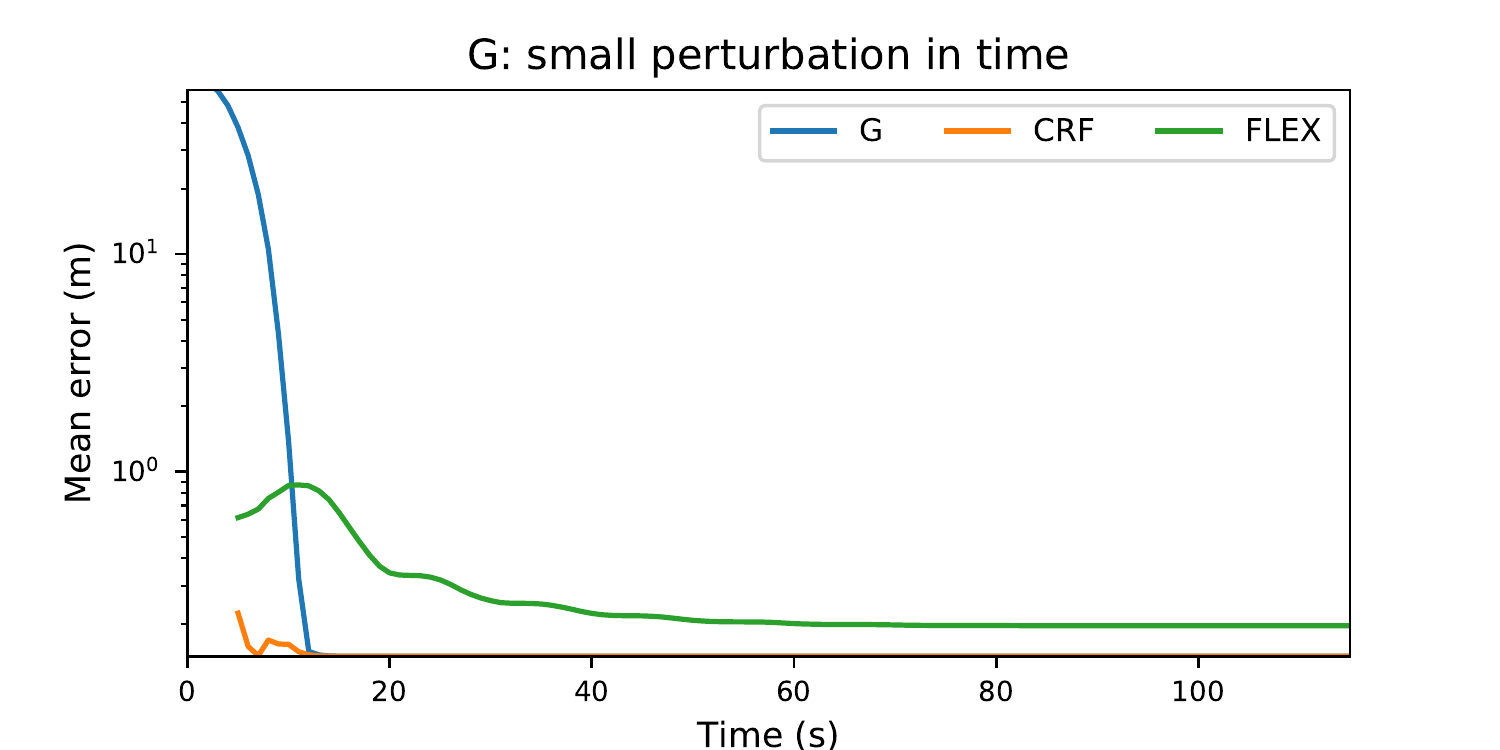}
        \includegraphics[width=.5\textwidth]{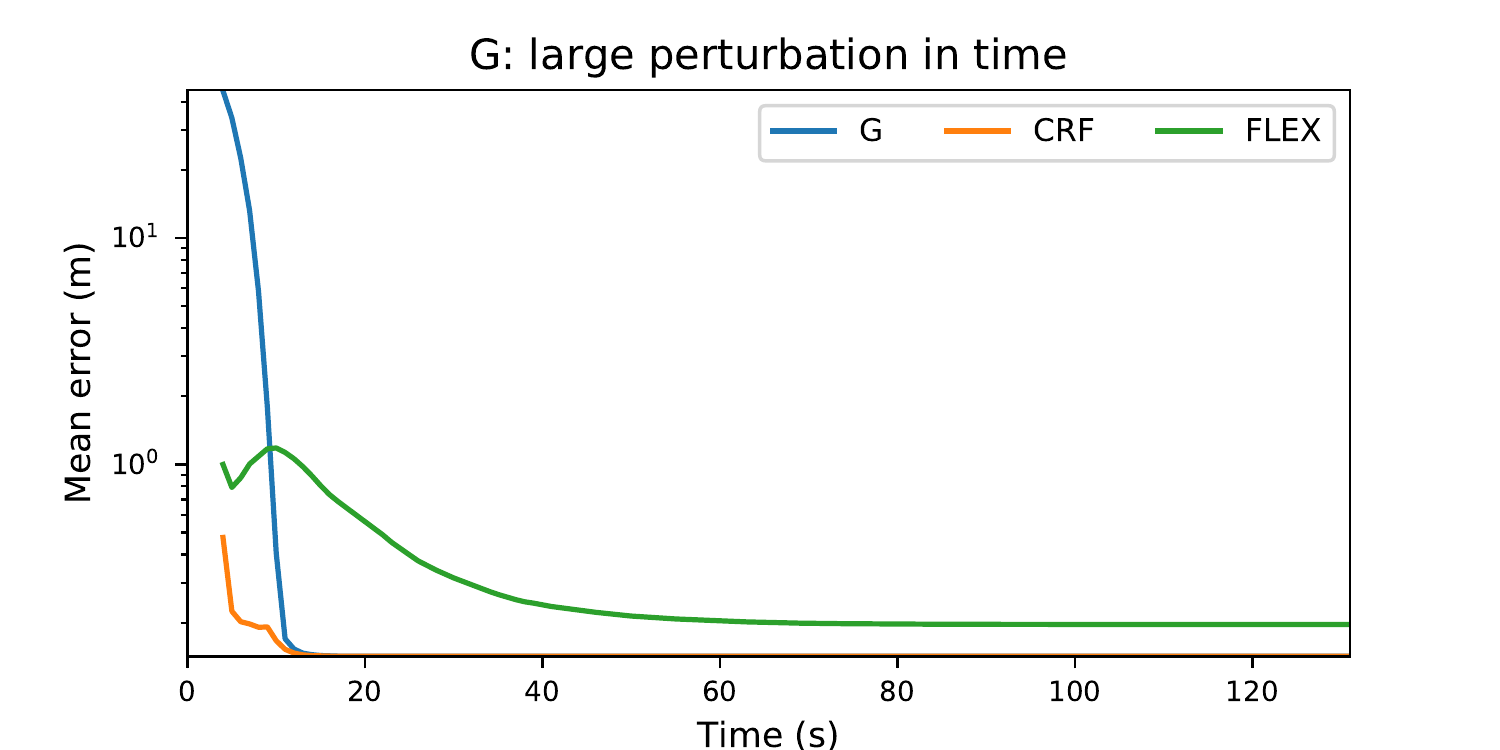}
        \caption{Evaluation of G building blocks: plain G (blue), CRF (green) and FLEX (red).
        We measure the average error across all devices (first and last row) and the stability of the value, namely, the average value change between subsequent rounds (middle row).
        With small spatial perturbations, G provides the lowest average error, while FLEX provides the highest local value stability. With large spatial changes, CRF is the quickest to adapt, but stabilises with a higher error than FLEX.
        The classic G suffers from the \TYtask{rising} value problem.
        All the algorithms stabilise in time with little sensitivity to device asynchrony.}
        \label{fig:g}
\end{figure}

\subsection{Alternative C}

The C operator aggregates a computational field of \texttt{local} values with the function \texttt{accumulate} towards the device with highest potential, each device feeding its value to the neighbour with highest potential. This process, however, is fragile since the ``neighbour with highest potential'' changes often and abruptly over time. In order to overcome this shortcomings, it is sometimes possible to use a \emph{multipath C}.

Assume that the aggregating operator defines an abelian monoid\footnote{A structure $\ap{X, \circ}$ is an abelian monoid if $\circ$ is an associative and commutative operator with identity.} on its domain. Assume in addition that each $\lvalue$ in the domain has \TYtask{an} $n$-th root $\lvalue_n$, that is, an element which aggregated with itself $n$ times produces the original value $\lvalue$. Then the value computed by a device can be ``\TYtask{split}'' and sent to \emph{every} neighbour device with higher potential than the current device, by taking its $n$-th root where $n$ is the number of devices with higher potential.

\begin{Verbatim}[fontsize=\fontsize{7pt}{8pt}, frame=single, commandchars=\\\{\}, codes={\catcode`$=3\catcode`^=7\catcode`_=8}]
\km{def} extract(val, num)(root) \{
  \pr{pair}(val, root(val, num))
\}

\km{def} aggregate(field, local, potential)(accumulate, root) \{
  extract( accumulate(\pr{foldHood}(\pr{2nd}(field), accumulate), local),
    \pr{counthood}(\km{nbr}\{potential\} > potential) )(root)
\}

\km{def} C'(potential, local, null)(accumulate, root) \{
  \km{rep} ( \pr{pair}(local,local) ) \{ (x) $\toSymK{}$
    aggregate(\pr{mux}(\km{nbr}\{potential\} < potential, \km{nbr}\{x\}, null),
      local, potential)(accumulate, root)
  \}
\}
\end{Verbatim}

We evaluate the multi-path alternative of C when used to sum values of a field, using the following \TYtask{variation} of \texttt{C\_sum}\footnote{Operator \texttt{/} is used as root for C' since a value gets equally divided by $n$ and spread in the $n$ neighbour nodes ascending potential.}:

\begin{Verbatim}[fontsize=\fontsize{7pt}{8pt}, frame=single, commandchars=\\\{\}, codes={\catcode`$=3\catcode`^=7\catcode`_=8}]
\km{def} C'\_sum(potential, field) \{  \pr{1st}( C'(potential, value, 0)(\pr{+}, \pr{/})) \}
\end{Verbatim}

\TYtask{Specifically}, we compare \texttt{C\_sum} and \texttt{C'\_sum} used to aggregate the summation of ``1'' along the gradient of a distance estimate produced by the FLEX algorithm.
As a consequence, we expect to get the count of devices participating to the system in the source of the distance estimate.
Since the source switches in case of large perturbation, the counting device switches as well.
\Cref{fig:c} shows the evaluation of C and its proposed replacement: the multi-path version performs better with small spatial changes, but may return higher errors during transients that require a whole network reconfiguration.

\begin{figure}[tb]
        \includegraphics[width=.5\textwidth]{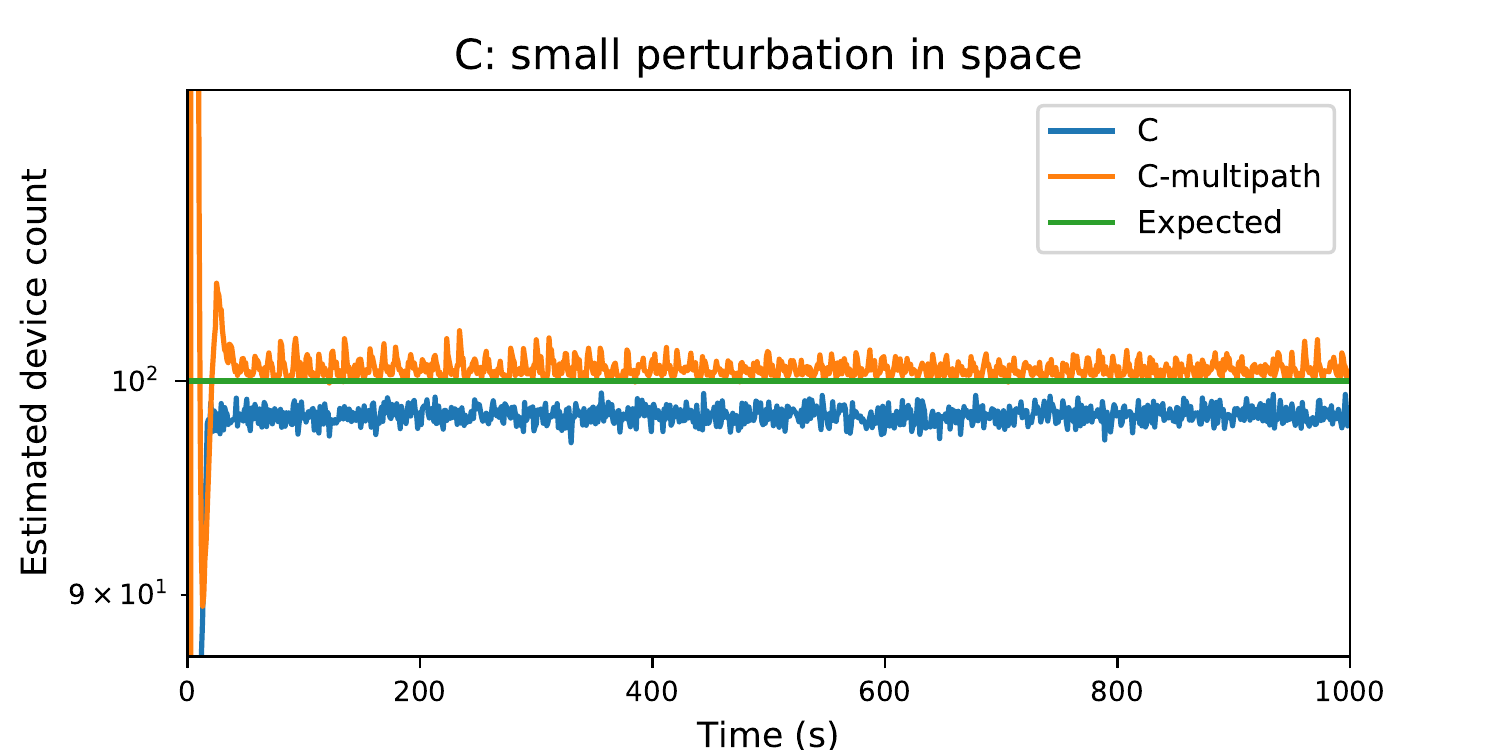}
        \includegraphics[width=.5\textwidth]{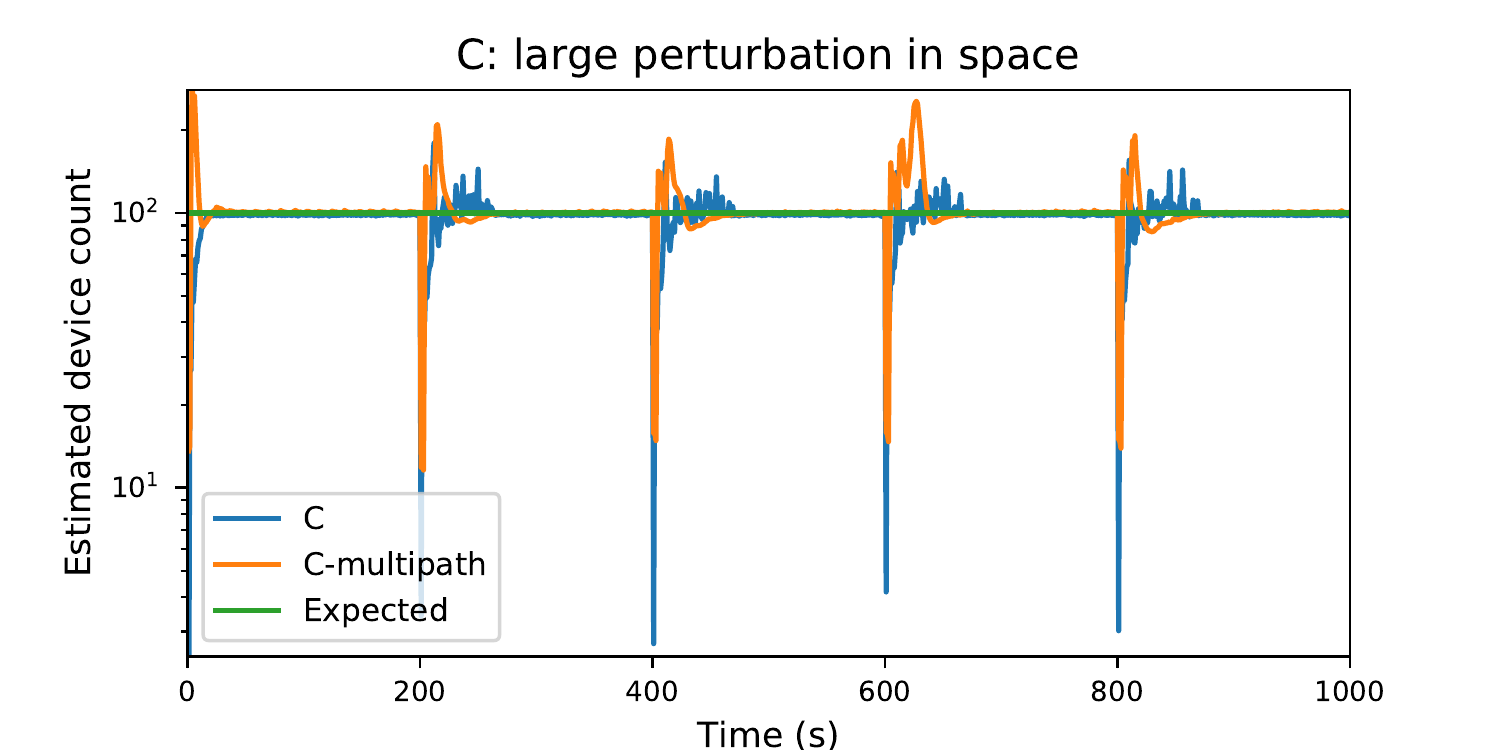}
        \includegraphics[width=.5\textwidth]{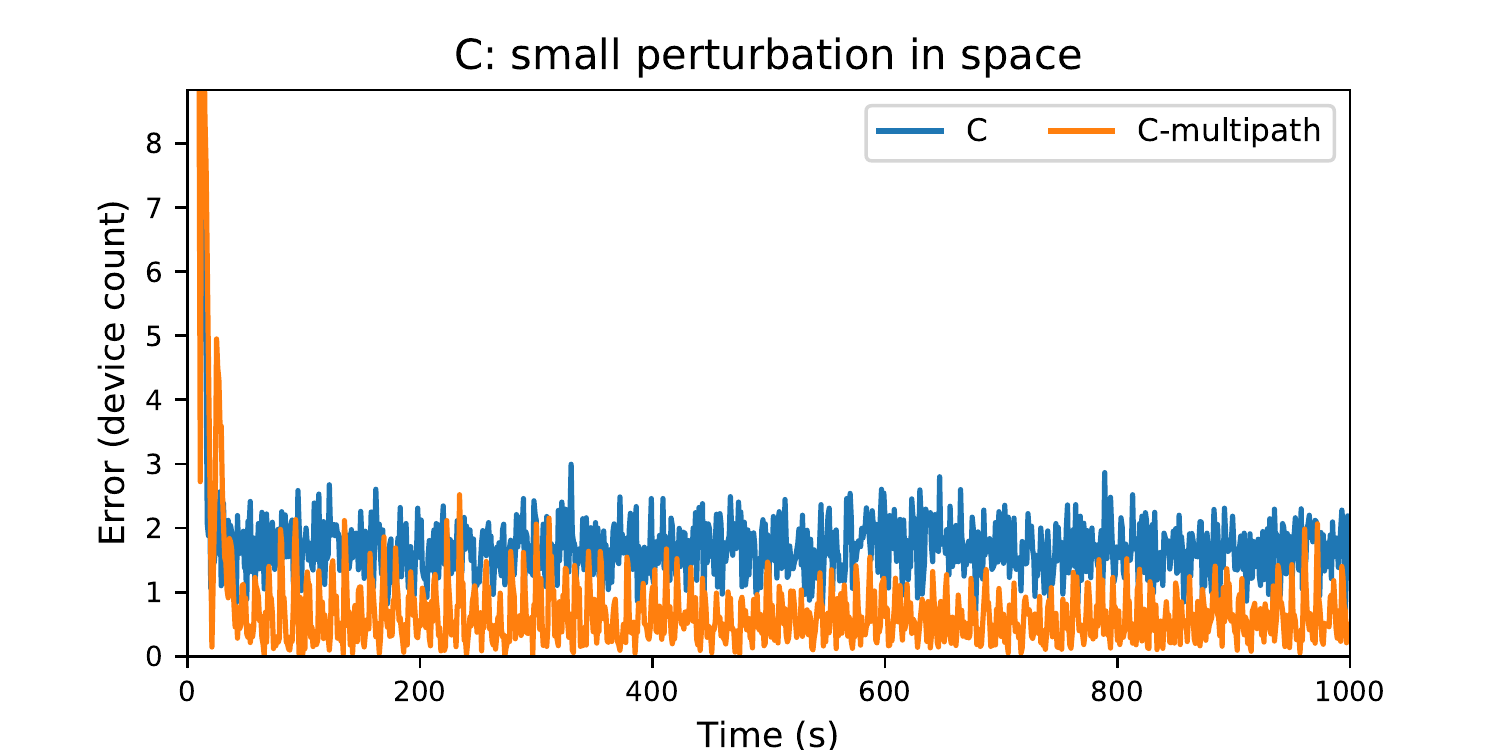}
        \includegraphics[width=.5\textwidth]{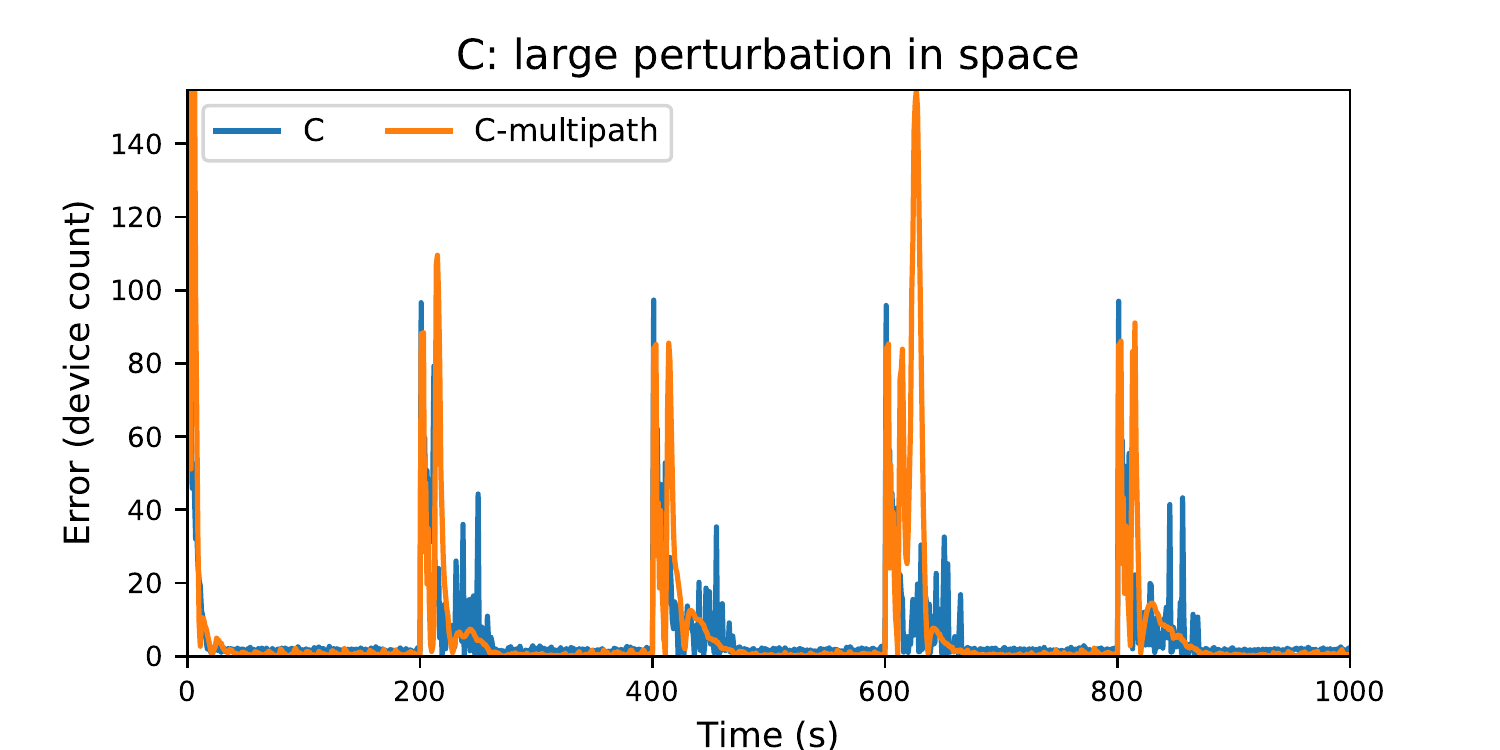}
        \includegraphics[width=.5\textwidth]{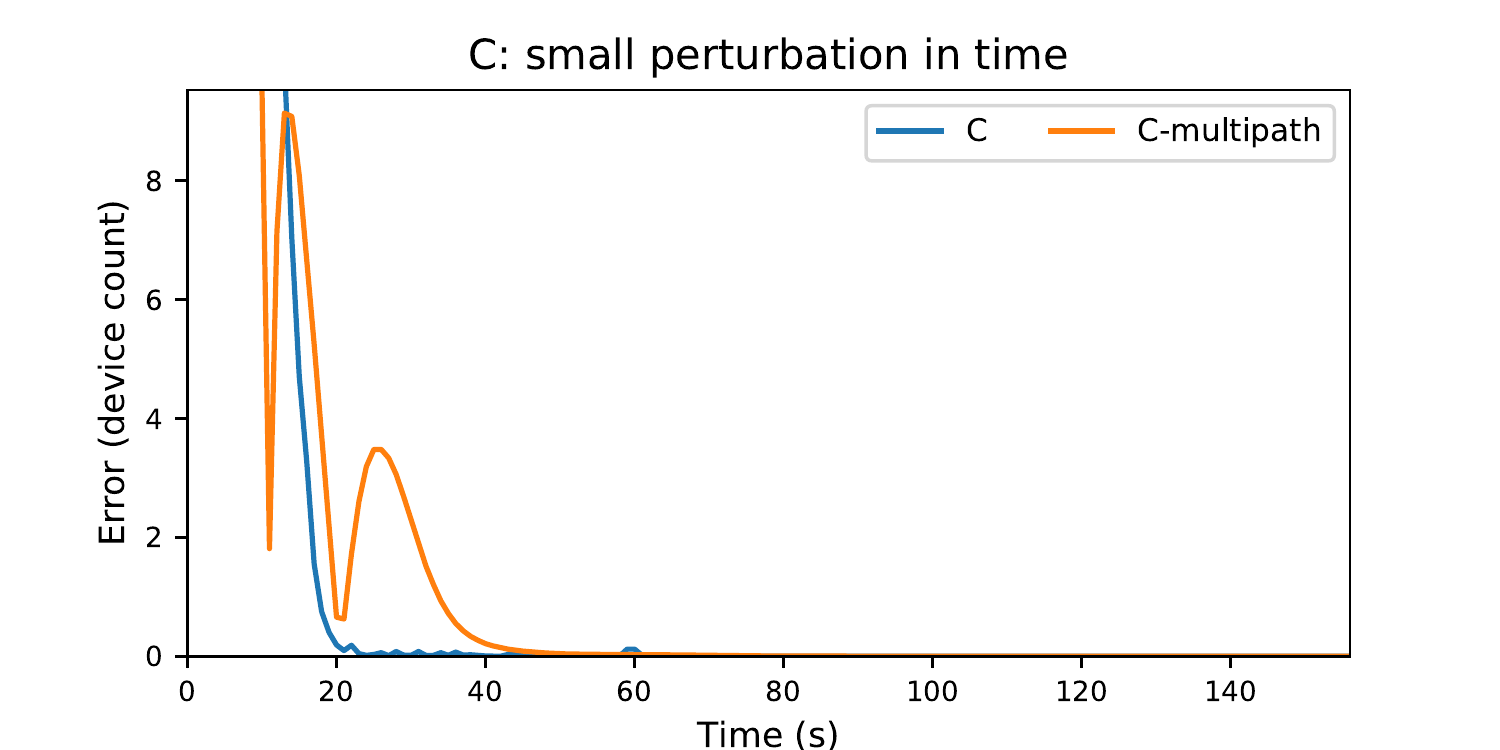}
        \includegraphics[width=.5\textwidth]{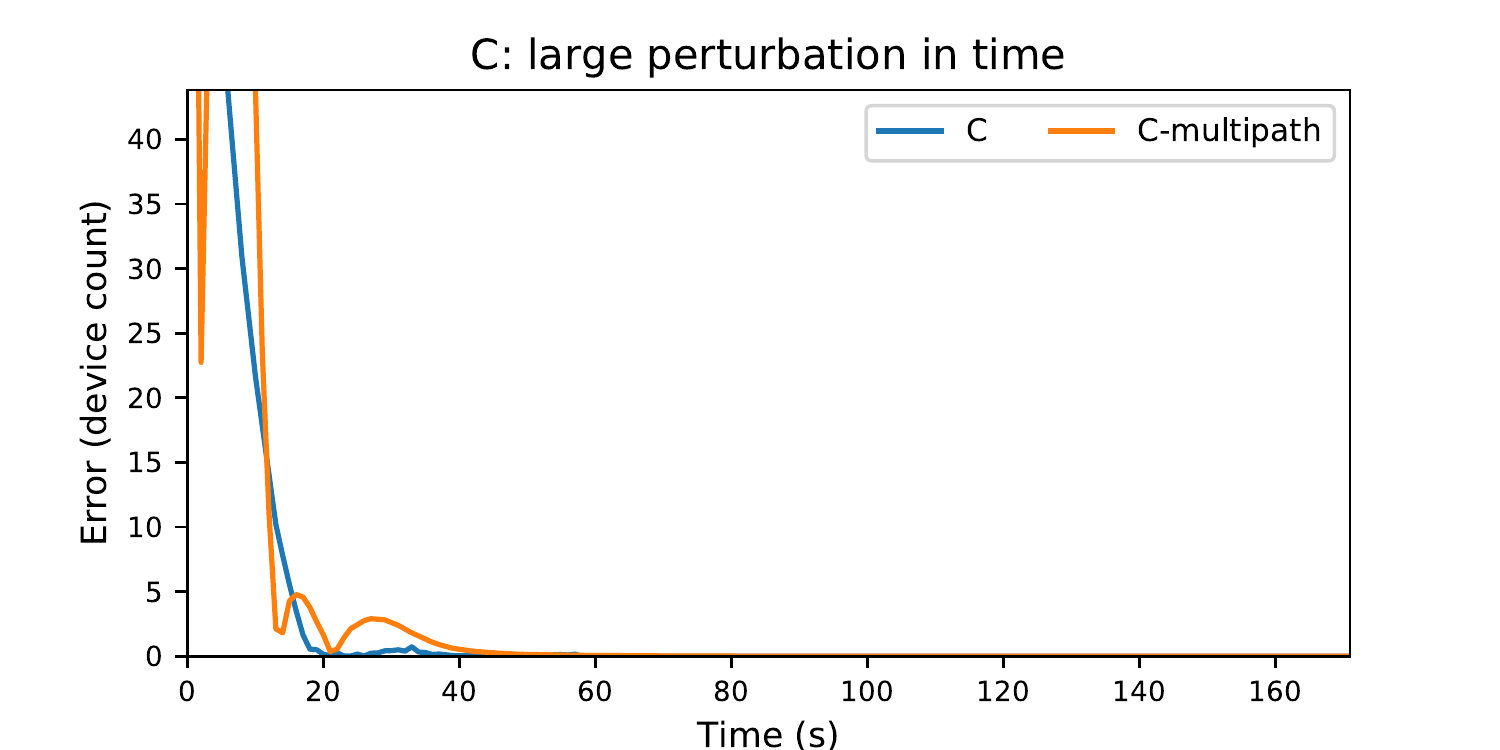}
        \caption{Evaluation of C building blocks: classic C (blue), and multi-path alternative (green).
        Expected values are depicted in red.
        We measure the aggregated value in the source node (first row) and the error (last two rows).
        With small spatial perturbations, the multipath alternative outperforms the spanning-tree-building default implementation; however, it may provide worse estimations at the beginning of transients that require a large reconfiguration.
        Both algorithms stabilise regardless of devices' asynchrony.}
        \label{fig:c}
\end{figure}

\subsection{Alternative T}

Both the T operator and the whole \emph{converging $\repK$} pattern are meant to smooth out the outcome of another computation, which at the limit is returned unaltered. However, it is sometimes useful to introduce a \emph{spatial} coordination among different devices, in order to smooth out the converging process also spatially. This can be accomplished by the following alternative building block, which \emph{decays} towards a \emph{value} with a speed obtained by \emph{averaging} on how close each neighbour is to its goal value.

\begin{Verbatim}[fontsize=\fontsize{7pt}{8pt}, frame=single, commandchars=\\\{\}, codes={\catcode`$=3\catcode`^=7\catcode`_=8}]
\km{def} follow(cur, lim)(average, decay) \{
  \pr{pickHood}(lim) + decay(average(cur - lim))
\}

\km{def} T'(initial, value)(average, decay) \{
  \km{rep} ( initial ) \{ (x) $\toSymK{}$
    follow(\km{nbr}\{x\}, \km{nbr}\{value\})(average, decay)
  \}
\}
\end{Verbatim}

We evaluate the use of T' in tracking a noisy signal, using the following variation of \texttt{T\_track}

\begin{Verbatim}[fontsize=\fontsize{7pt}{8pt}, frame=single, commandchars=\\\{\}, codes={\catcode`$=3\catcode`^=7\catcode`_=8}]
\km{def} T'\_track(value) \{  T'(value,value)(\pr{\TYtask{meanHood}}, x => a*x)\}
\end{Verbatim}
where \texttt{\TYtask{meanHood}} computes the mean value of the provided field, and \texttt{a} is the smoothing parameter.
In the comparison of \texttt{T\_track} and \texttt{T'\_track}, every device perceives the original signal (either a sine or a square wave) summed with a locally generated noise in $[-1, 1]^{10}$ (\texttt{s}).
In particular, \texttt{T'\_track} provides a sort of spatial low-pass filter, that trades a delay in tracking the signal \TYtask{for a smoother} response.
\Cref{fig:t} aggregates the results. T' takes advantage of the spatial smoothing, and performs better overall in case of noisy input.
This comes\TYtask{, however,} at the price of lower reactivity to changes, which becomes evident with large enough values of the smoothing parameter.

\begin{figure}[!tb]
		\includegraphics[width=.5\textwidth]{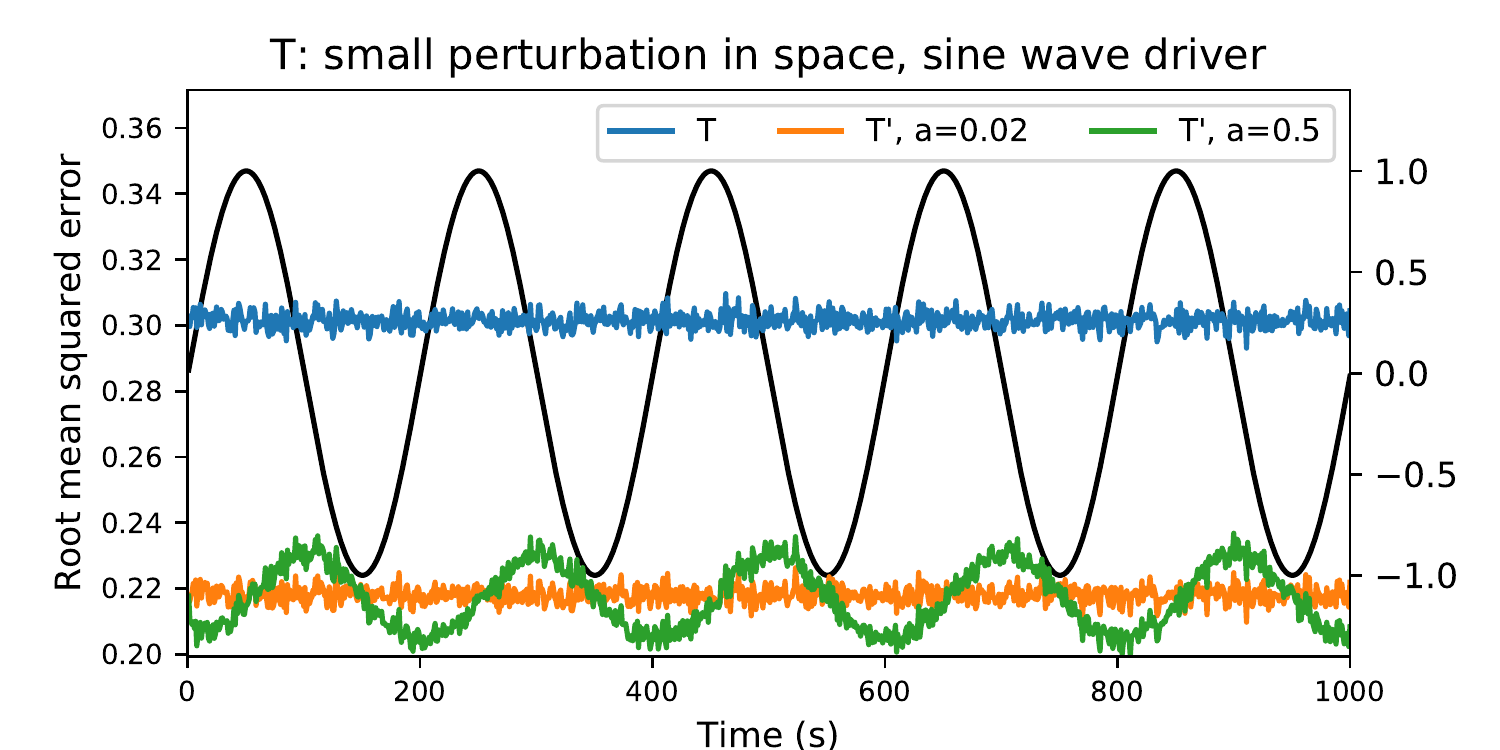}
		\includegraphics[width=.5\textwidth]{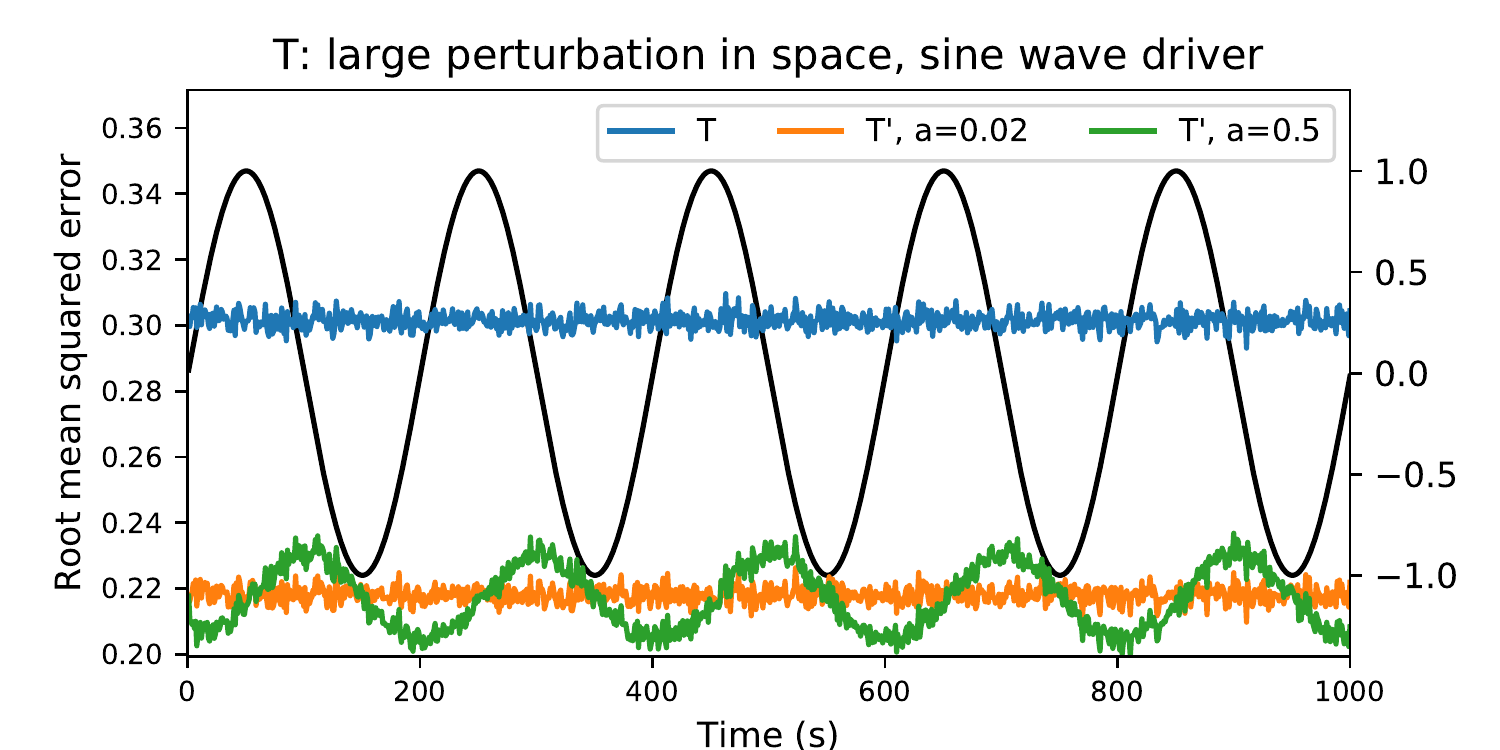}
		\includegraphics[width=.5\textwidth]{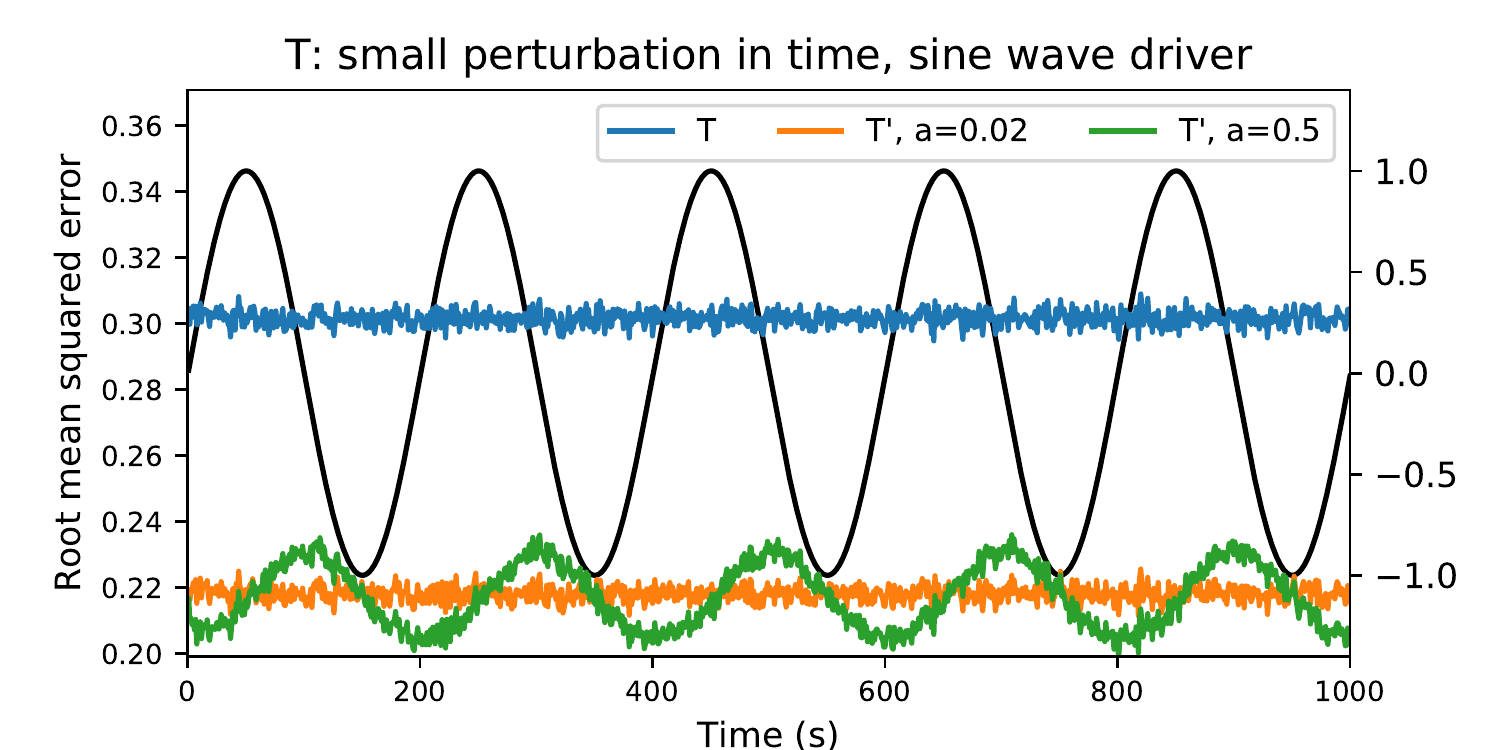}
		\includegraphics[width=.5\textwidth]{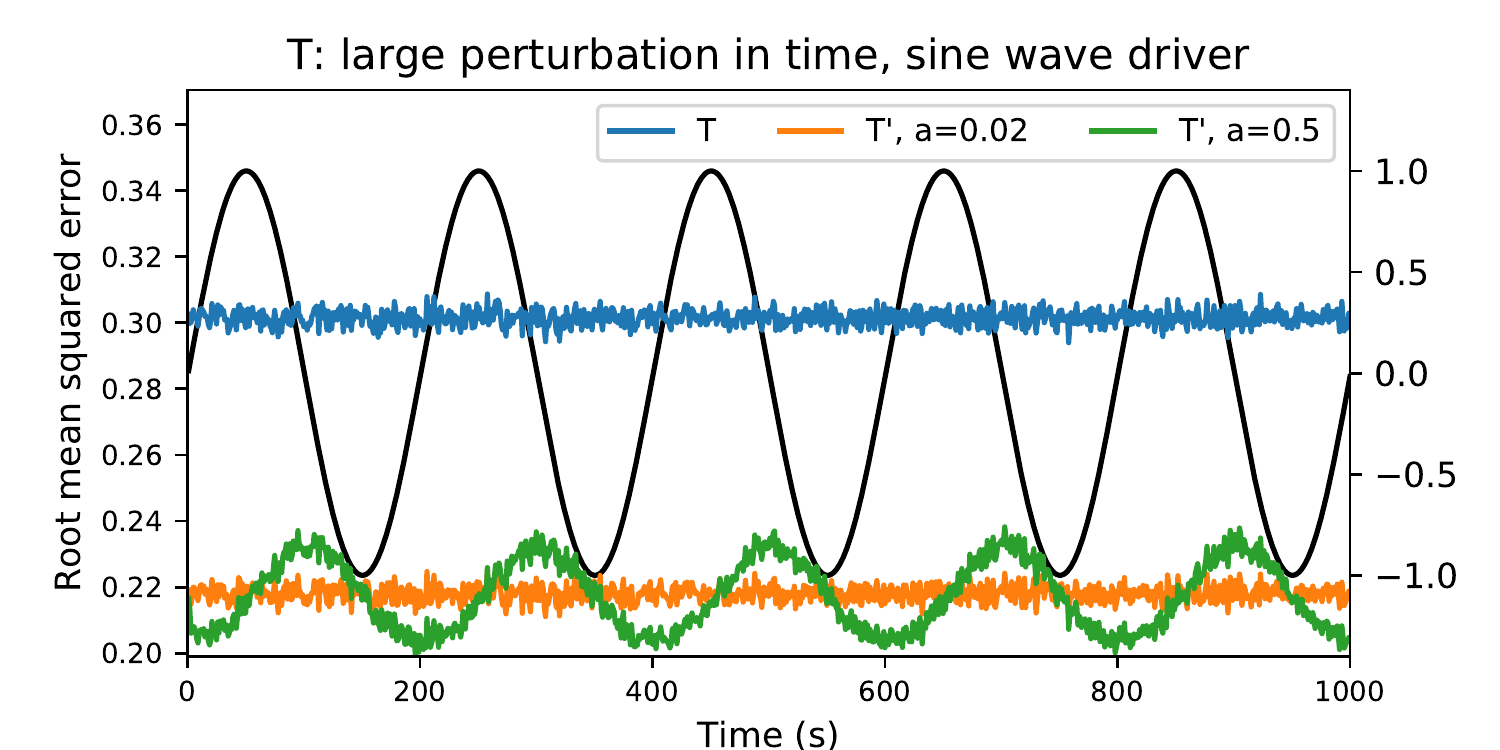}
		\includegraphics[width=.5\textwidth]{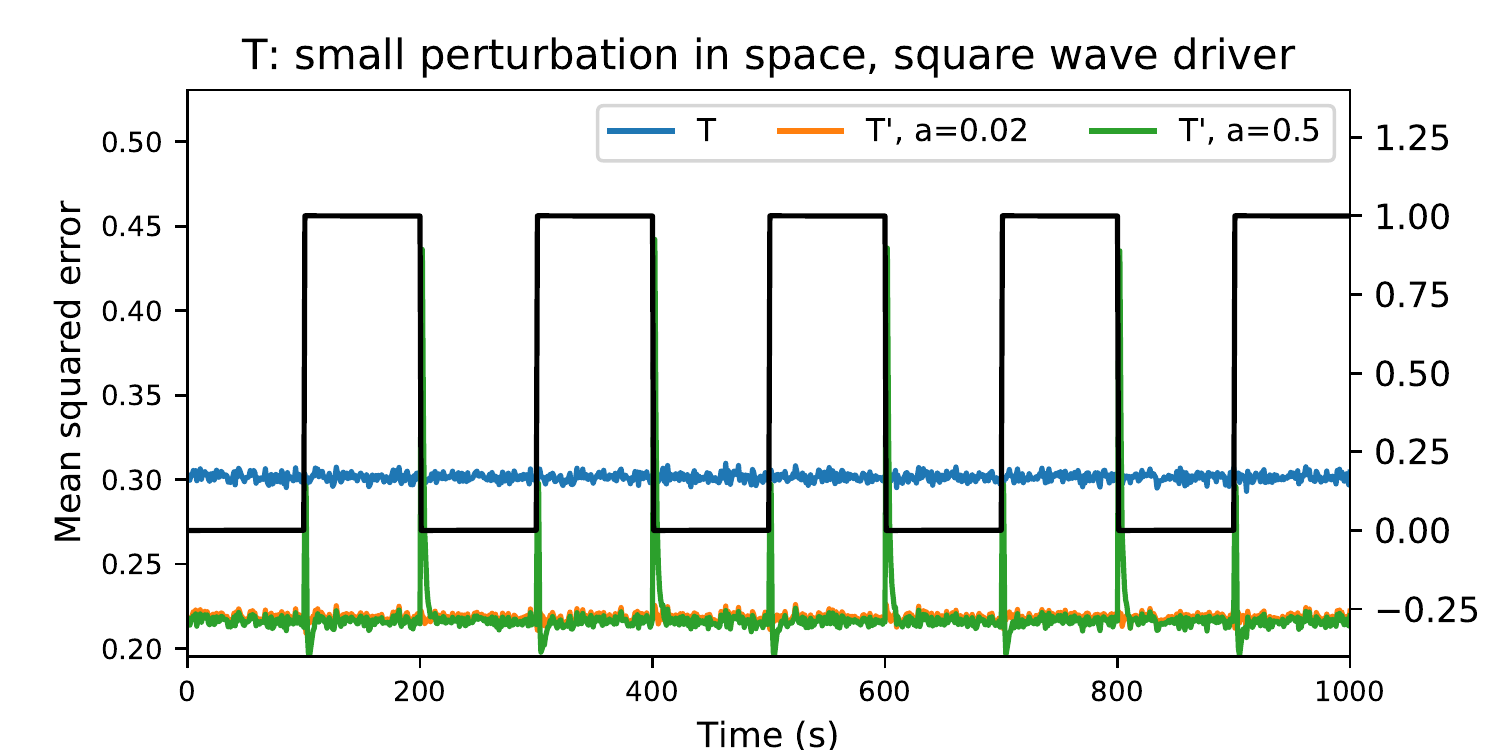}
		\includegraphics[width=.5\textwidth]{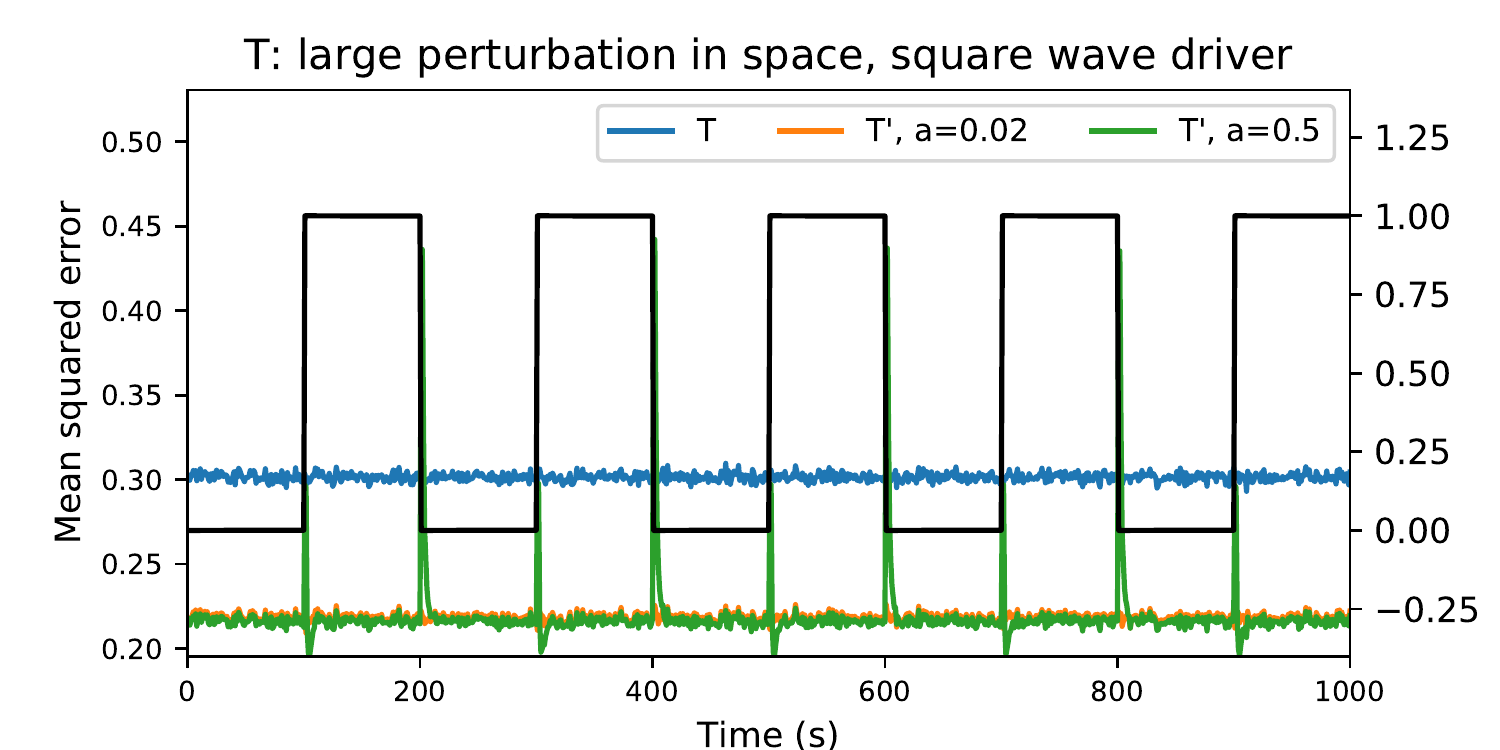}
		\includegraphics[width=.5\textwidth]{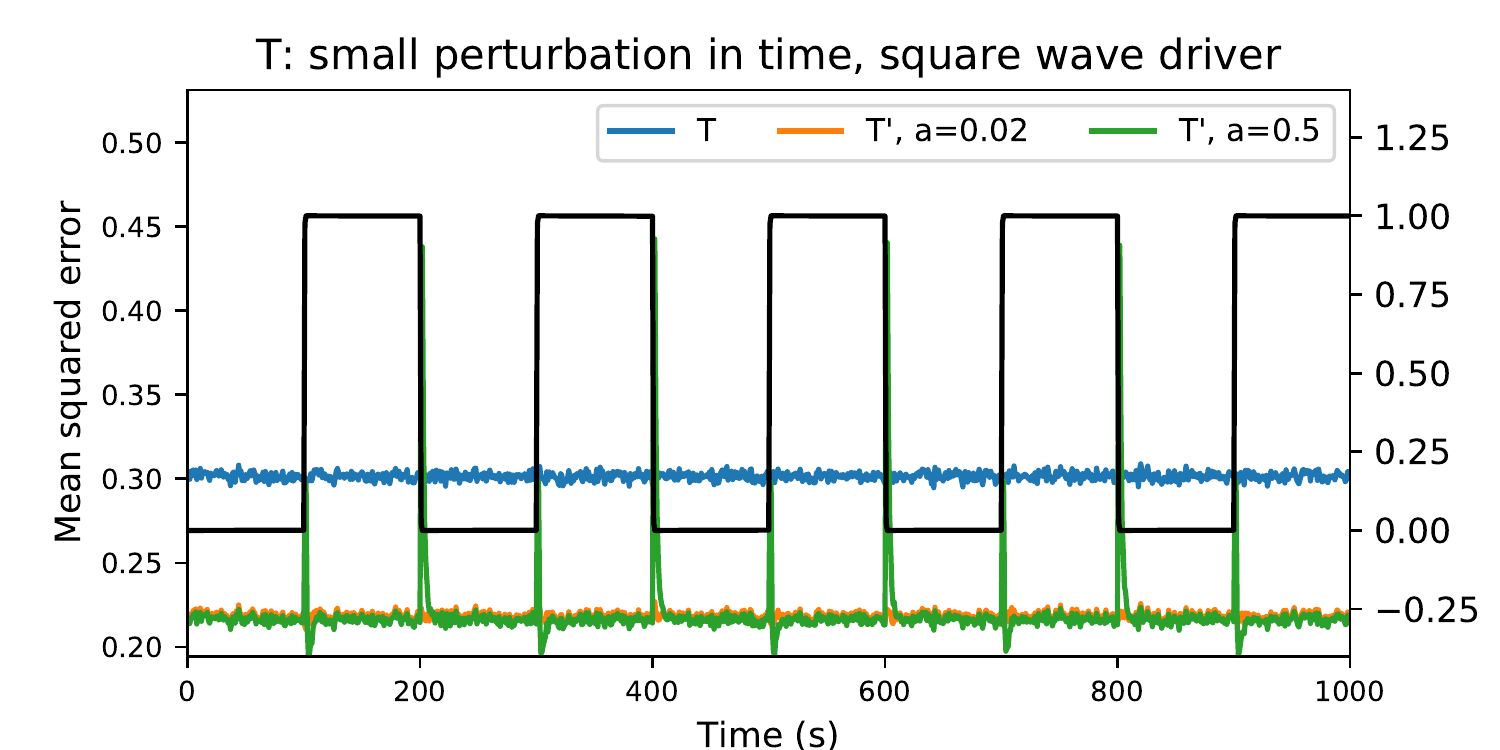}
		\includegraphics[width=.5\textwidth]{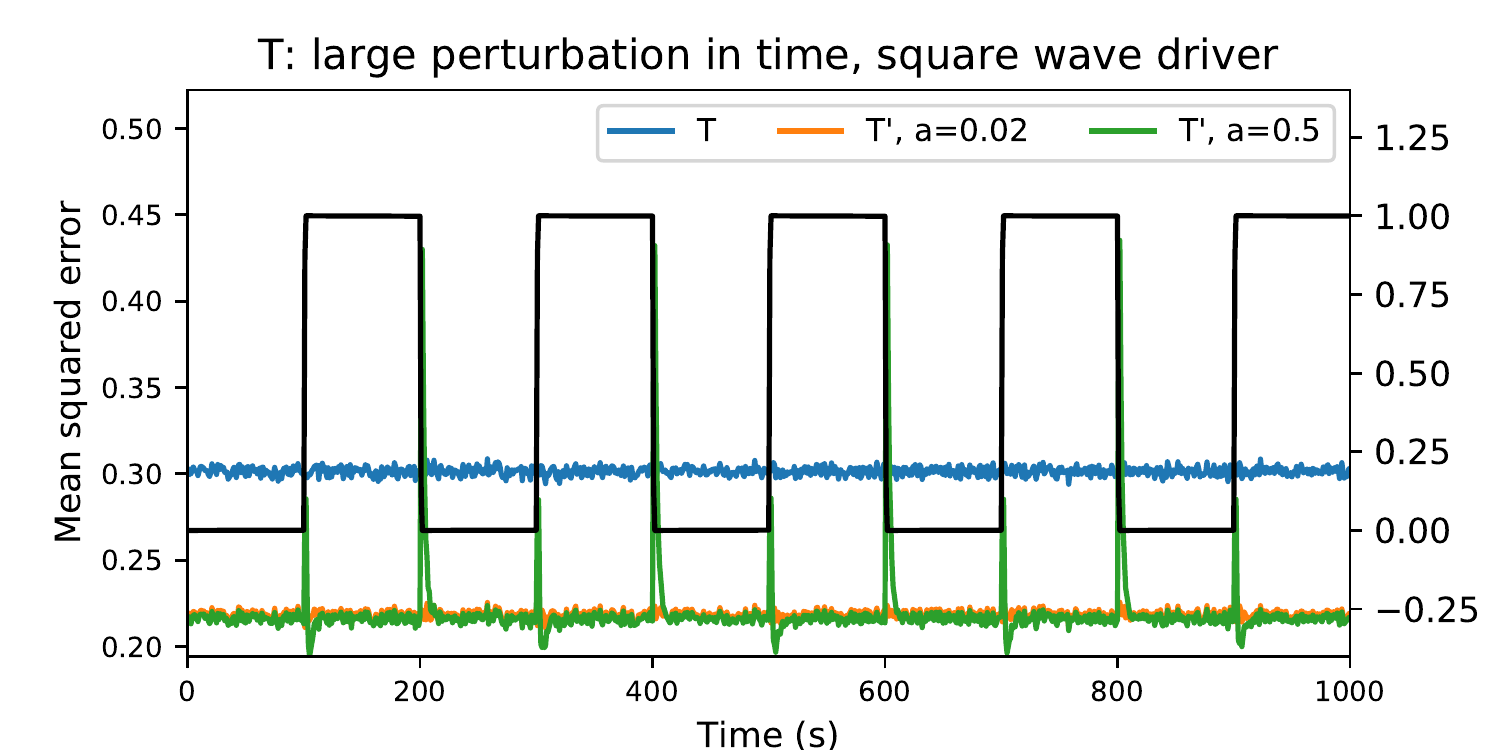}
        \caption{Evaluation of T (blue) and T' building blocks with different smoothing parameter values (\texttt{a} = 0.02 in green, and \texttt{a} = 0.5 in red).
        The driver signal (plotted in black for reference) is locally summed with a random noise in $[-1, 1]^{10}$ and fed to the algorithm for tracking.
        We measure the root mean squared error in the devices\TYtask{'} response for small (left column) or large (right column) perturbations in either space (first and third row) or time (second and last row).
        T' outperforms T in every scenario but the square wave transient: the smoothing with the neighbouring devices, in fact, greatly mitigates the local introduction of noise at the price of a lower reactivity to signal changes.
        The smoothing parameter can be interpreted as \TYtask{controlling a} trade-off between such reactivity and the smoothness of the response.
        In our testbed, T' shows minimal sensibility to any kind of perturbation.}
        \label{fig:t}
\end{figure}

% % % % % % % % % % % % % % % % % % % % % % % % % % % % % % % % % % % % % % % % % % % % % % % % %

\section{Application Examples}
\label{sec:casestudy}

We now illustrate, with two application examples, how distributed applications can be implemented on top of the proposed building blocks (hiding the low-level coordination mechanisms \texttt{rep} and \texttt{nbr}), and then quickly adjusted and optimised toward specific performance goals by switching the specific building block implementation that is used, using the variants presented in previous section.
Both of the scenarios that we consider are in a pervasive computing environment, and focus on a network of personal devices (e.g., phones, smart watches) spread through an urban environment.
In these scenarios, devices move with the person carrying them along the walkable areas of the city, and can only indirectly influence movement (e.g., by presenting a message to their user).

For the first scenario, we consider a community festival, with acts performing in various venues, and wish to track the number of people watching each act over time.
Here, we will consider a person to be watching an act if they are part of a continuous region of crowd that is closer to that act than to any other act.
This computation can be implemented by using G to partition the space into zones of influence, by means of a potential field of which each act is a source (as in function \texttt{distanceTo}).
We then use C to sum a field counting the number of people closely surrounded by others, and thus forming a crowd (as in function \texttt{summarise}).
Finally, T is used for smoothing both the crowd estimates and the results over time.
%
%\begin{Verbatim}[fontsize=\fontsize{7pt}{8pt}, frame=single, commandchars=\\\{\}, codes={\catcode`$=3\catcode`^=7\catcode`_=8}]
%\km{def} crowdSize(acts, crowd) \{
%  \pr{T}(\pr{C}(\pr{G}(acts, 0)(\pr{nbrRange}, addRange), \pr{T}(crowd), 0)(\pr{+}))
%\}
%\end{Verbatim}
%
The resulting code, expressed using the functions described in previous section, is as follows:
\begin{Verbatim}[fontsize=\fontsize{7pt}{8pt}, frame=single, commandchars=\\\{\}, codes={\catcode`$=3\catcode`^=7\catcode`_=8}]
\km{def} crowdSize(acts, crowd) \{
  {T\_track}({C\_sum}({G\_distanceTo}(acts), {T\_track}(crowd)))
\}
\end{Verbatim}

\begin{figure}
	\fbox{\includegraphics[width=.48\textwidth]{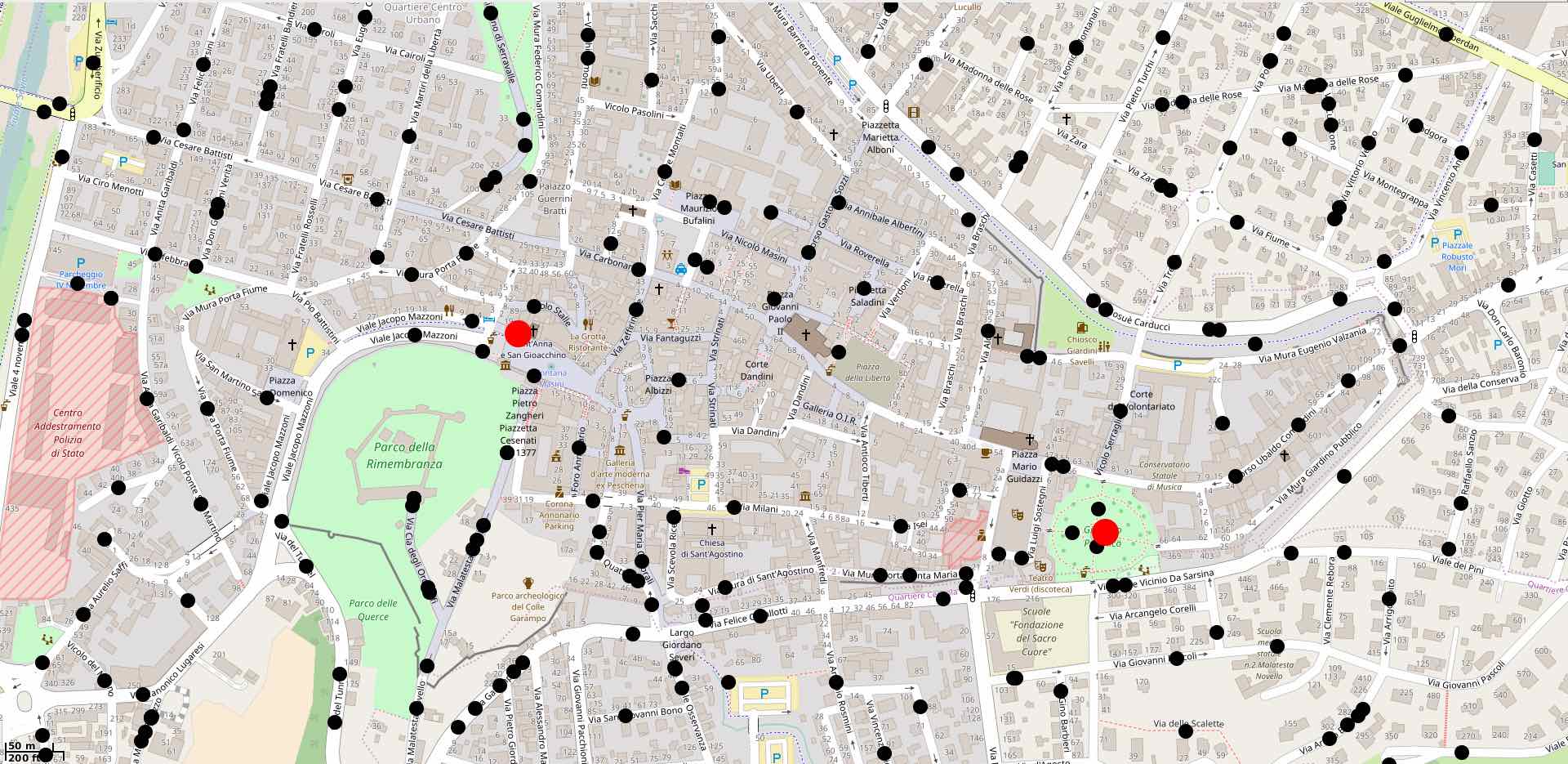}}
	\fbox{\includegraphics[width=.48\textwidth]{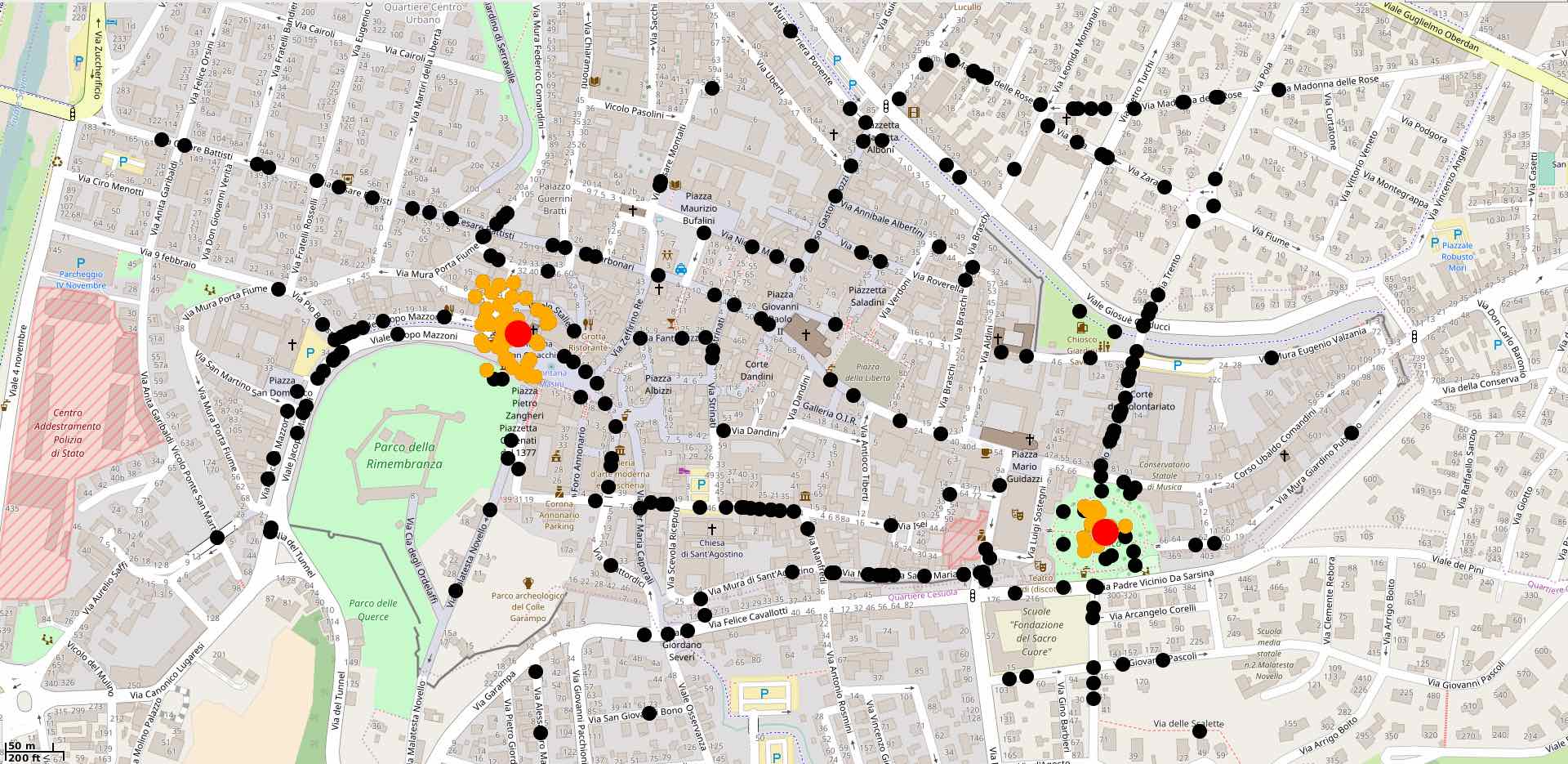}}
%	\fbox{\includegraphics[width=.48\textwidth]{"figures/crowdsize3"}}
%	\fbox{\includegraphics[width=.48\textwidth]{"figures/crowdsize4"}}
	\caption{Screenshots from simulation of crowd size estimation scenario: acts are indicated as red dots, pedestrians are black, and pedestrians who are part of a contiguous crowd are orange. From their initial position, people walk towards an act of interest following the pedestrian roads, becoming counted as part of a crowd once they have clumped up close to an act.}
	\label{fig:crowdshot}
\end{figure}

\todo[inline]{For revision: make graphs that show the evolving number of people over time being tracked}

To test this example application in simulation, we distributed a network of 300 devices randomly distributed across the city centre of the Italian city of Cesena.
In this simulation, pedestrians walk at 1.4 meters per second from their initial position towards an act randomly chosen between the two located in distinct large spaces of the city (Piazza del Popolo and Giardini Savelli), as depicted in \Cref{fig:crowdshot}.
Devices run asynchronously, performing a round of computation and communication every five seconds, and communicating by broadcast within a radius of 150 metres (ignoring buildings and other physical obstacles).
Our implementation is realised in Protelis~\cite{Protelis15} and simulations were performed using Alchemist~\cite{PianiniJOS2013}.
\JBtask{We note that Alchemist is a generalised GIS framework for multi-agent simulations, not a specialised crowd simulator, but higher-fidelity crowd simulations are not necessary for studying the adaptation dynamics of the information system.}

In this scenario, we execute eight variants of the {\tt crowdSize} algorithm, all combinations of the building blocks and alternates developed in the previous section: G or G' (FLEX), C or C' (multipath), and T or T'.
We measure the error for each combination as the absolute value of the difference between estimated and true counts for people watching each act:
$$\frac{1}{|A|}\sum_{a \in A}{|\hat{P_a} - P_a|}$$
where $A$ is the set of acts $a$, $|A|$ is the number of acts, $\hat{P_a}$ is the estimated count of people watching act $a$ as computed by the algorithm, and $P_a$ is the true count of people watching an act.
\begin{figure}
% Full: Crowd size estimation_GCT_GCS_GMT_GMS_FCT_FCS_FMT_FMS
\subfigure[G' improves over G]{\includegraphics[width=.5\textwidth]{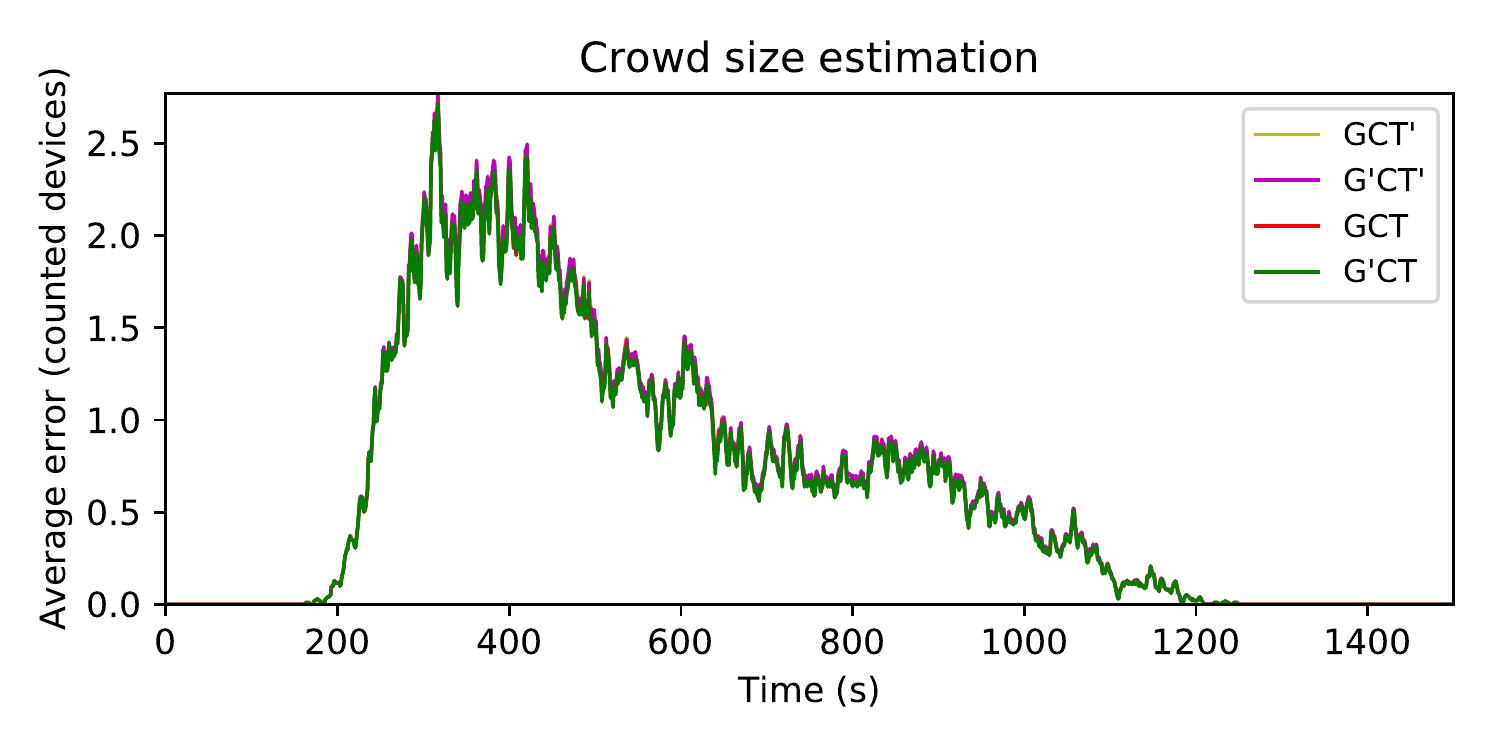}}
\subfigure[C' fails, but is mitigated by G']{\includegraphics[width=.5\textwidth]{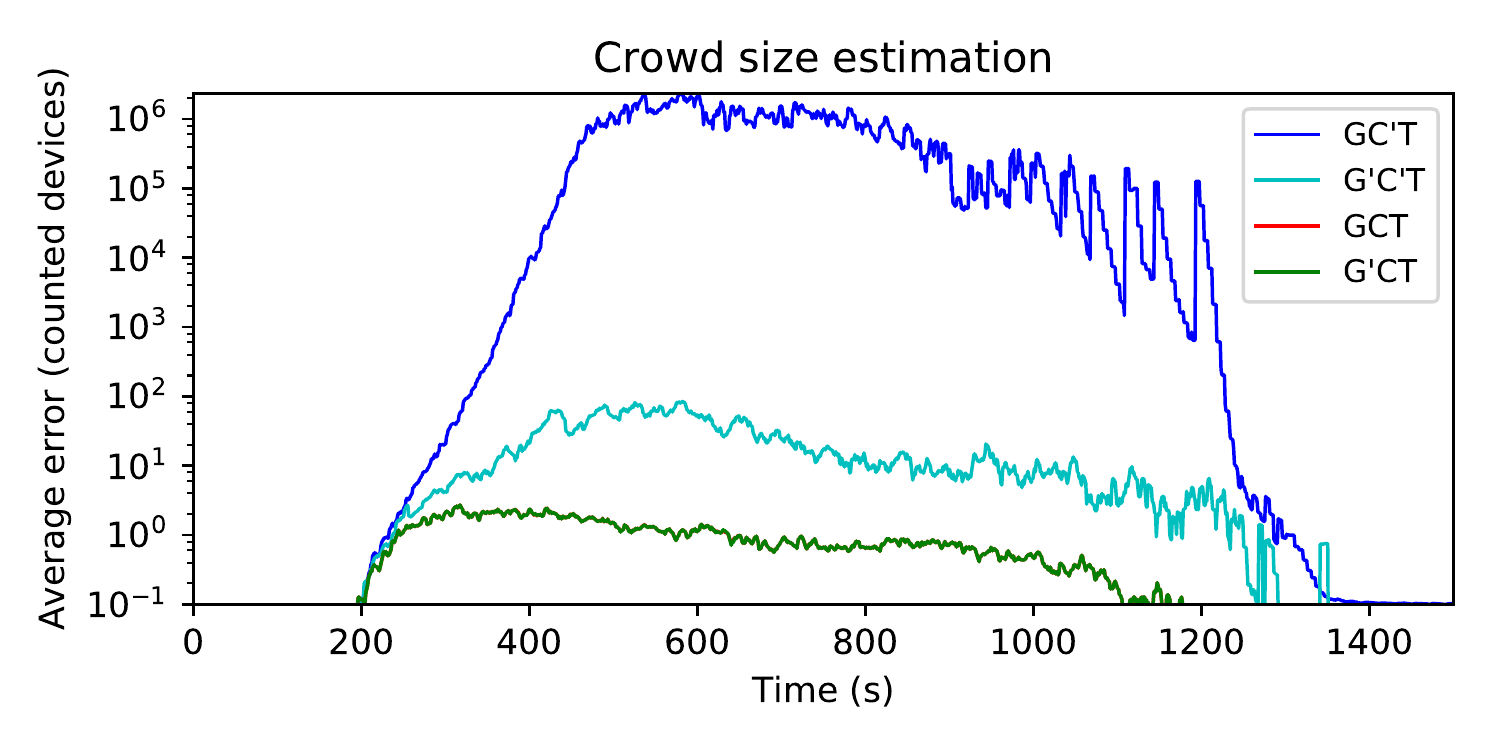}}
\caption{Key results for the crowd size estimation scenario: 
a) Use of G' slightly improves performance over G, while T performs slightly better than T'.
b) The C' algorithm fails badly due to the network being both sparse and volatile, mandating preference of C in this case. The problems with C' can be largely mitigated by substitution of G' instead of G, though the choice of T versus T' does not have any significant effect.}
\label{fig:crowddata}
\end{figure}

\Cref{fig:crowddata} presents key results, averaged over 51 simulation runs.
In these simulations, adopting G' instead of G produces a slight improvement in performance.
On the other hand, it turns out that C' fails badly, always making the results much worse, likely due to the combination of both the high volatility of the network and the sparsity induced by city streets.
This failure, however, can be mitigated by applying G', which produces a potential function that is much more stable in response to large perturbations.
The choice of T versus T' has much less impact: T' performs slightly worse than T in combination with C' and does not mitigate the failure of C'.

The second example considers signaling an evacuation alert signal to a pre-defined zone, along with the proposal of a suggested evacuation path.
This is implemented using T to track whether any device in the zone is currently alerted (using G to create a potential field to a static device selected as coordinator, and C to perform a logical or as in function \texttt{any}), then using G to broadcast that value from the coordinator throughout the zone and again to compute paths to the non-alerted areas outside of the zone.
Finally, the {\tt mux} operator is used to differentiate computations on devices inside and outside of the alert zone.
%
%\begin{Verbatim}[fontsize=\fontsize{7pt}{8pt}, frame=single, commandchars=\\\{\}, codes={\catcode`$=3\catcode`^=7\catcode`_=8}]
%\km{def} evacuationAlert(zone, coordinator, alert) \{
%  \pr{G}(\pr{mux}(zone,
%      false,
%      \pr{G}(coordinator, 
%        \pr{T}(\pr{C}(\pr{G}(coordinator, 0)(\pr{nbrRange}, addRange), alert, false)(\pr{||})))
%        (\pr{nbrRange}, identity)),
%    0)(\pr{nbrRange}, addRange)
%\}
%\end{Verbatim}
%
\begin{Verbatim}[fontsize=\fontsize{7pt}{8pt}, frame=single, commandchars=\\\{\}, codes={\catcode`$=3\catcode`^=7\catcode`_=8}]
\km{def} evacuationAlert(zone, coordinator, alert) \{
  {G\_distanceTo}(
    \pr{mux}(zone,
      false,
      {G\_broadcast}(coordinator, 
        {T\_track}({C\_any}({G\_distance}(coordinator), alert)))))
\}
\end{Verbatim}

\begin{figure}
	\fbox{\includegraphics[width=.48\textwidth]{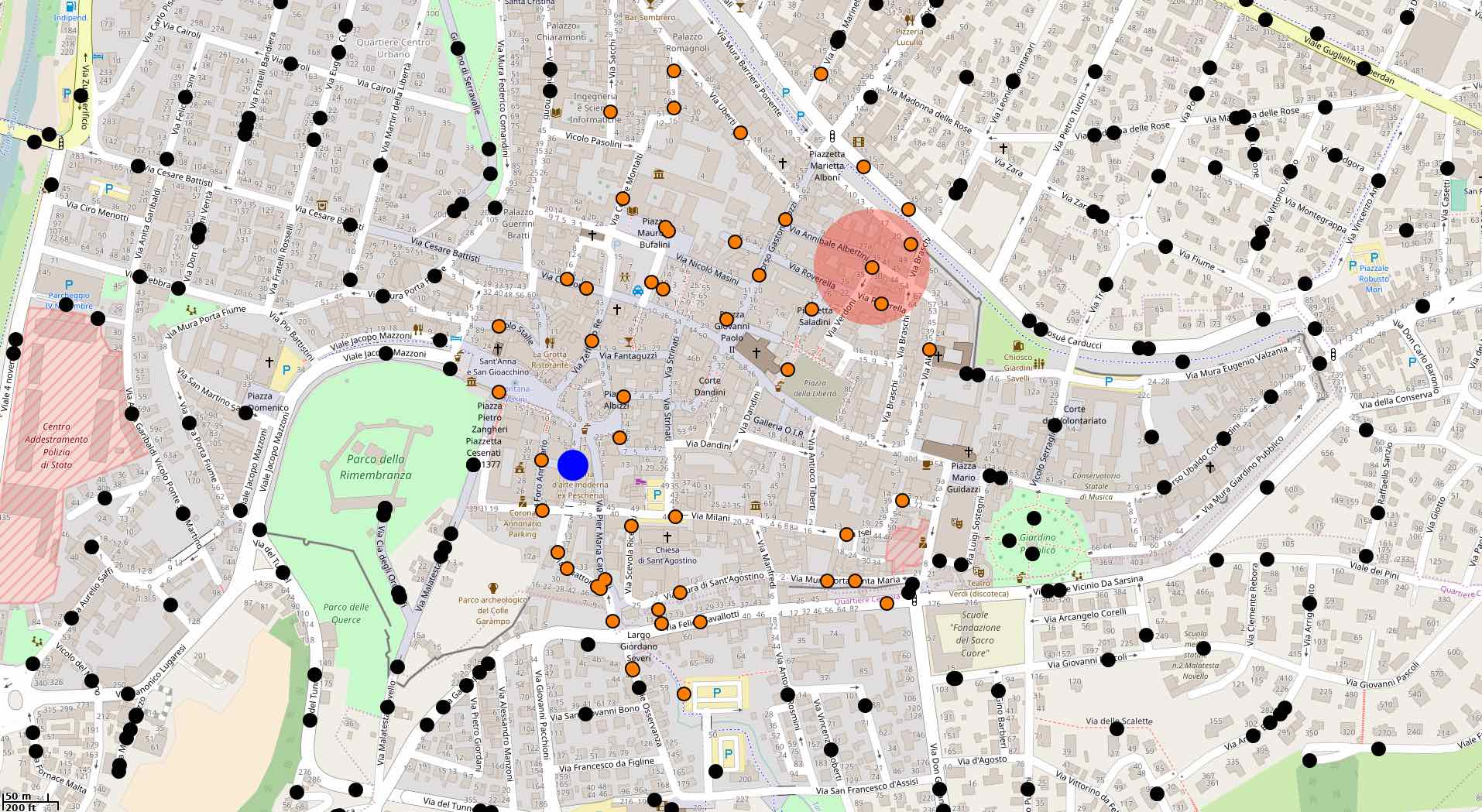}}
	\fbox{\includegraphics[width=.48\textwidth]{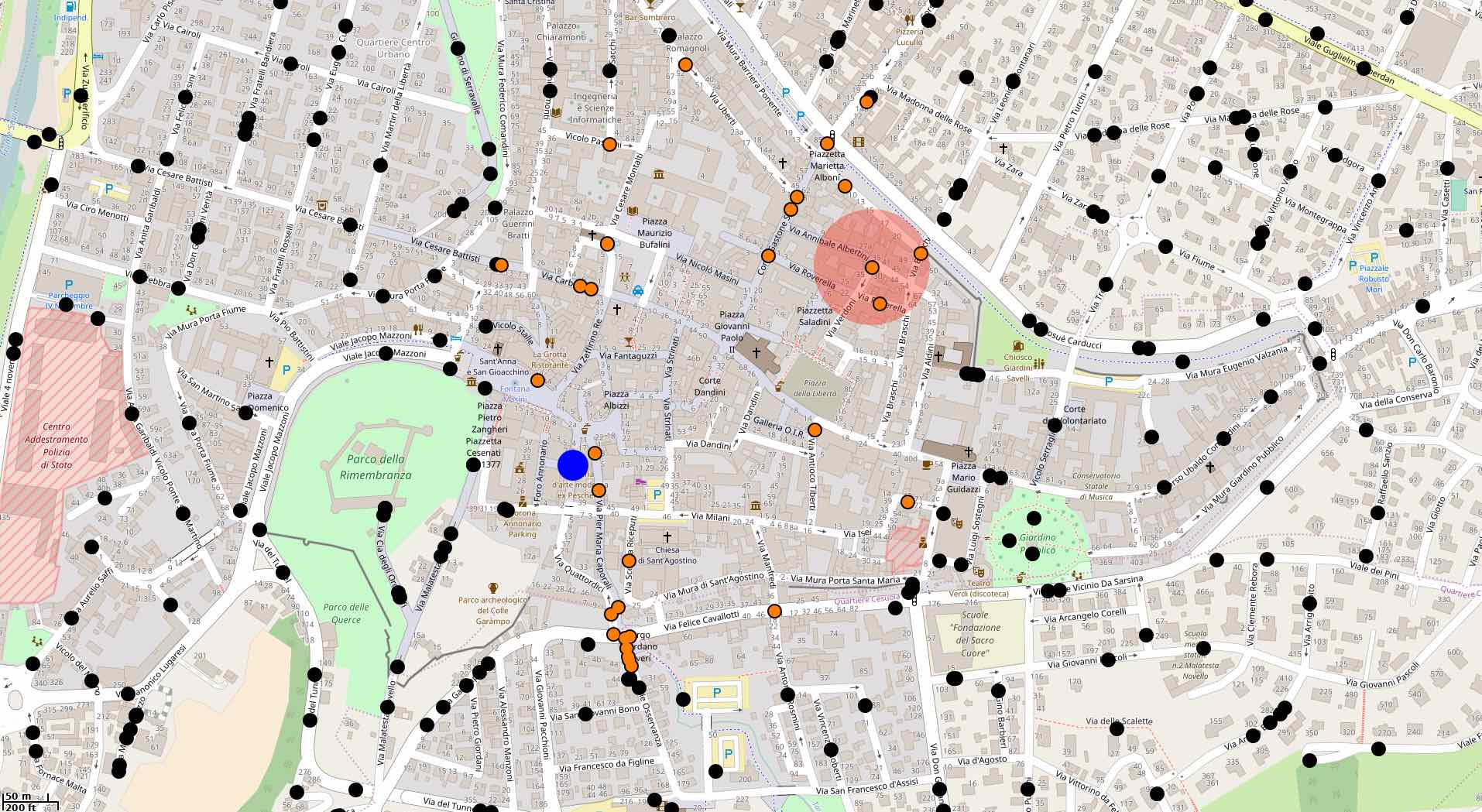}}
%	\fbox{\includegraphics[width=.48\textwidth]{"figures/alert4"}}
%	\fbox{\includegraphics[width=.48\textwidth]{"figures/alert5"}}
	\caption{Screenshots from simulation of evacuation alert scenario: devices are initially randomly scattered through the city centre (black dots). After alert (translucent red circle) is enabled, and devices in the evacuation zone are signaled (orange) by the action of the coordinator (blue) and begin trying to leave the zone.}
	\label{fig:alertshot}
\end{figure}

Simulations for this experiment used the same environment of 300 devices spread through the center of Cesena, with the same model of asynchronous execution and communication, the only difference being that devices perform a round of computation and communication every two seconds rather than every five seconds.
In this simulation, devices are initially stationary, and the alert signal is enabled starting at time $t = 20$ seconds of simulated time from the start of the simulation.
% , and then disabled again at time $t = 220$ seconds.
%
Since devices are unable to directly affect the movement of the people holding them, however, we simulate the people acting on the alert not by following the direction provided
% but by walking toward a random waypoint outside of the evacuation zone.
by any of the simulated algorithms, but walking toward the closest waypoint outside of the evacuation zone.
Such behaviour is depicted in \Cref{fig:alertshot}.

As before, we execute eight variants, covering all combinations of the three building blocks and their alternates.
We measure the error for each algorithm as the mean of the minimum mean square error between the angles of the suggested evacuation verctor and the optimal one for each node, normalised in [0, 1], with the special rule that devices that are in alert zone when they shouldn't be or not in the alert zone when they should be get the maximum error, namely:
\[
\text{error = }\frac{1}{N}\sum_{d \in D}{
\begin{cases}
0 & \text{not in zone and not alerted}\\
(\frac{\min(|\alpha_d - \hat{\alpha_d|}, 2\pi - |\alpha_d - \hat{\alpha_d|})}{\pi})^2 & \text{in zone and alerted} \\
1 & \text{otherwise (alert/zone mis-match)}
\end{cases}}
\] where $N$ is the number of devices initially inside the zone, $D$ is the collection of all devices, $\hat{\alpha_d}$ is the computed direction (angle) for device $d$, and $\alpha_d$ is its actual ideal direction.
The minimum function is used in order to always pick the smallest angle between the two separating the optimal vector and the suggested one (namely, the difference of the two and $2\pi$ minus that value).
This outputs an error in the $[0, \pi]$ range, that we normalise linearly into $[0, 1]$.

\begin{figure}
\centering
\includegraphics[width=.5\textwidth]{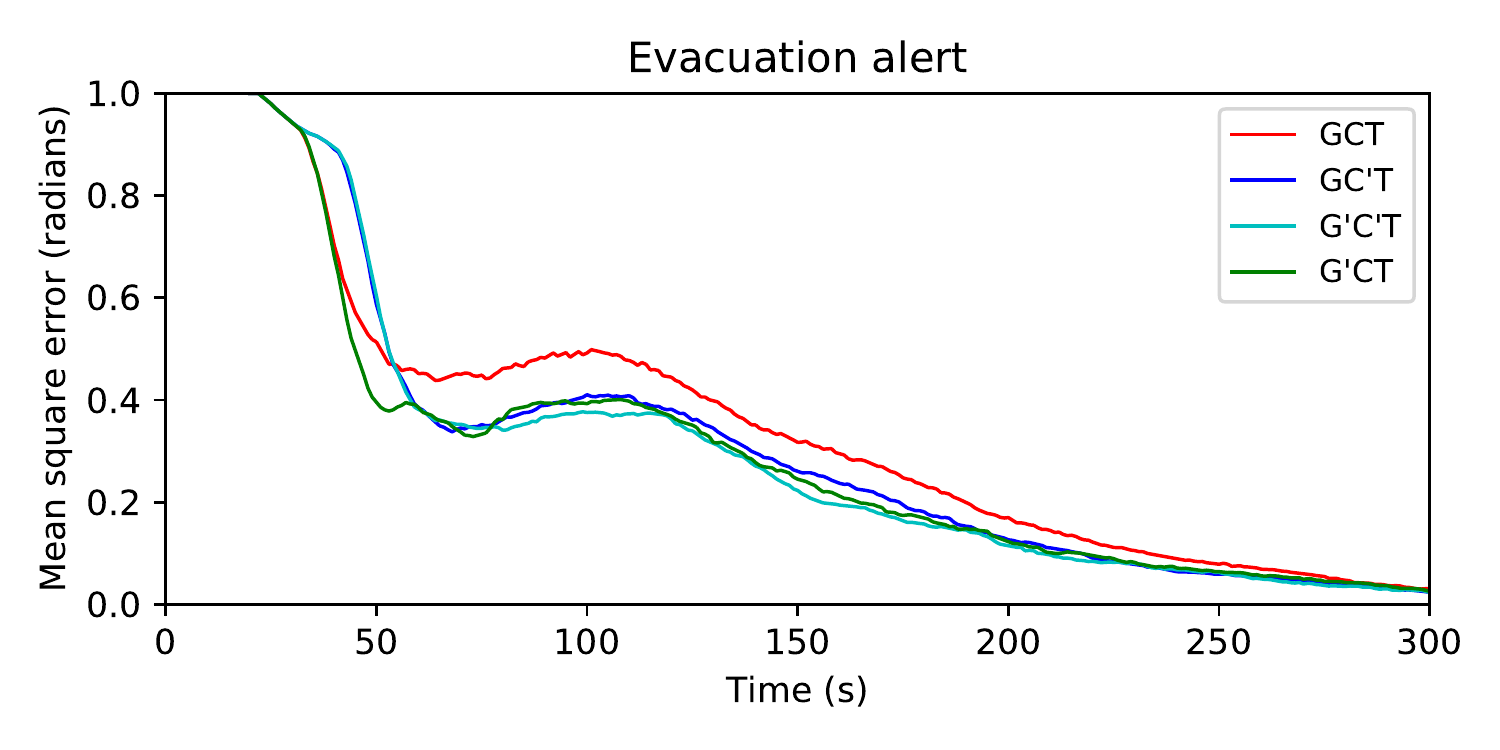}
\caption{Results for the evacuation alert scenario: G' and C' both improve performance significantly over G and C, respectively. Additional incremental improvement can be obtained by using both G' and C'. The choice of T versus T' has no significant effect on performance.}
\label{fig:alert}
\end{figure}

In this scenario, we find that two of the proposed alternative implementations of the self-stabilising building blocks significantly improve performance.
\Cref{fig:alert} shows the results, averaged over 51 simulation runs.
G', in particular, performs from equivalently to much better than G along the whole simulated time span.
The behaviour of C' is more complex: it has a longer reaction time as compared to C, as it is more sensitive to large perturbations. 
As soon as the initial transient phase is over, however, C' provides a consistent improvement over the performance of the original C implementation.
Using C' and G' together provides a further (though smaller) performance increment. 
The choice of T' versus T, however, has no significant impact on performance.

Together, these results illustrate how our approach enables fast, lightweight implementation and optimisation of distributed applications.
Different applications will be best served by different combinations and tradeoffs in the dynamics of building block implementations: for example, choosing G' over G helped in both application scenarios, while C' is useful in the second but not the first, and neither had noisy enough changes for T' to be significantly beneficial over T.
The approach that we have implemented allows such combinations to be rapidly and safely explored, enabling optimisation of distributed systems without their re-design.

% % % % % % % % % % % % % % % % % % % % % % % % % % % % % % % % % % % % % % % % % % % % % % % % %

%\section{Related Work}
%\label{sec:related}

%	TODO

% % % % % % % % % % % % % % % % % % % % % % % % % % % % % % % % % % % % % % % % % % % % % % % % %

\section{Conclusions}
\label{sec:conclusions}

Using the computational field calculus as ``lingua franca'' for an abstract, uniform description of self-organising computations, we have identified a large class of self-stabilising distributed
algorithms, including a set of general ``building block'' operators that conceptually simplify the specification of programs within this class.
Such a class is formalised in terms of a fragment of the field calculus, closed under composition, and flexible enough to also include various alternative \TYtask{implementations} of the building blocks, 
allowing dynamical performance to be optimised while still guaranteed to converge to the same values.
This self-stabilising fragment is at the core of a methodology for efficient engineering of self-organising systems, rooted in modelling and simulation: \emph{(i)} a system specification is constructed using formally-proved self-stabilising building blocks, and \emph{(ii)} alternative implementations of building blocks are switched in selected points of the specification to improve performance, with performance improvement detected by empirical means such as simulations.

An important future direction for improvement is to obtain a more
detailed characterisation for the dynamic trade-space, in order
to enable a more systematic approach to optimisation via mechanism
substitution.
In addition to making human engineering easier, this may also enable
automated substitution optimisation, both during the engineering
process and dynamically at run-time.
\FDtask{Furthermore, alternative definitions of \emph{self-stabilisation} could be inspected, for capturing and describing wider classes of resilient program behaviours (such as replicated gossip \cite{viroli:replicatedgossip}) or for allowing a better modeling of important aspects of spatial computations (such as space-time information),}
\JBtask{as well as integration with dynamical response models such as those presented in~\cite{DasguptaCDC16,MoECAS17}.}
Other important directions for improvement are expansion of the
library of building blocks (including to non-spatial systems), identification of more substitution
relationships between building blocks and high-performance resilient
coordination mechanisms, and development and deployment of
applications based on this approach.

\subsection*{Acknowledgements}

The authors would like to thank the anonymous reviewers for insightful comments and suggestions for improving the presentation.
This work has been partially supported by funding from the European Union's Horizon 2020 research and innovation programme under grant agreement No 644298 HyVar (Damiani), ICT COST Action IC1402 ARVI (Damiani), Ateneo/CSP project RunVar (Damiani), and the United States Air Force and the Defense Advanced Research Projects Agency (DARPA) under Contract No. FA8750- 10-C-0242 (Beal)
and Contract No. HR001117C0049 (Beal).
The views, opinions, and/or findings expressed are those of the author(s) and should not be interpreted as representing the official views or policies of the Department of Defense or the U.S. Government.
This document does not contain technology or technical data controlled under either U.S. International Traffic in Arms Regulation or U.S. Export Administration Regulations.
Approved for public release, distribution unlimited.

% % % % % % % % % % % % % % % % % % % % % % % % % % % % % % % % % % % % % % % % % % % % % % % % %

\appendix
\section*{Appendix}
\setcounter{section}{1}

\subsection{Proof of self-stabilisation for the fragment} \label{sec:proofs}

In this appendix we report complete proofs for the statements given in Section \ref{sec:fragment_is_ss}. We first prove self-stabilisation for the minimising $\repK$ pattern (Lemma \ref{lem:minimising_termination}), since it is technically more involved than the proof of self-stabilisation for the remainder of the fragment. We then prove self-stabilisation through a variation of the goal results (Lemma \ref{lem:stabilisation}) more suited for inductive reasoning. Theorems \ref{thm:stabilisation} and \ref{thm:substitutability} will then follow by inspecting the proof of those lemmas.

Let $\s_\text{min} = \repK(\e)\{ (\xname) \toSymK{} \funvalue^\mathsf{R}(\minHoodLoc(\funvalue^\mathsf{MP}(\nbrK\{\xname\}, \overline\s), \s), \xname, \overline\e) \}$ be a minimising $\repK$ expression such that $\builtindenot{}{\overline\s} = \overline\dvalue$, $\builtindenot{}{\s} = \dvalue$. Let $P = \overline\deviceId$ be a path in the network (a sequence of pairwise connected devices), and define its \emph{weight} in $\s_\text{min}$ as the result of picking the eventual value $\lvalue_1 = \dvalue(\deviceId_1)$ of $\s$ in the first device $\deviceId_1$, and repeatedly passing it to subsequent devices through the monotonic progressive function, so that $\lvalue_{i+1} = \funvalue^\mathsf{MP}(\lvalue_i, \overline\anyvalue)$ where $\overline\anyvalue$ is the result of projecting fields in $\overline\dvalue(\deviceId_{i+1})$ to their $\deviceId_i$ component (leaving local values untouched). Notice that the weight is well-defined since function $\funvalue^\mathsf{MP}$ is required to be stateless.
		
\begin{lem} \label{lem:minimising_termination}
	Let $\s$ be a minimising $\repK$ expression. Then $\s$ self-stabilises in each device $\deviceId$ to the minimal weight in $\s$ for a path $P$ ending in $\deviceId$.
\end{lem}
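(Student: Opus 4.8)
The plan is to prove the lemma in two halves, a ``lower bound'' and a ``convergence from above'', after reducing to the regime in which every self-stabilising subexpression has already converged.

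\emph{Setup, target field, and stability.} Since $\overline\s$ and $\s$ are self-stabilising and the environment is fixed, there is a round $t_0$ after which, in any fair evolution, the value trees of $\overline\s, \s$ are constant and realise $\overline\dvalue, \dvalue$; by fairness each device keeps firing infinitely often after $t_0$. From $t_0$ on, a firing of $\deviceId$ updates its stored value $v(\deviceId)$ for $\xname$ to $\funvalue^\mathsf{R}(\ell_1, v(\deviceId), \overline\anyvalue)$, where $\ell_1$ is the $\leq$-minimum of $\dvalue(\deviceId)$ and of $\funvalue^\mathsf{MP}(v(\deviceId'), \ldots)$ over neighbours $\deviceId'$. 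Let $\dvalue^\star(\deviceId)$ be the minimal weight of a path ending in $\deviceId$. This infimum is attained: by monotonicity of $\funvalue^\mathsf{MP}$ one may excise a cycle from a path without increasing its weight (progressivity, or the $\top$-clause, makes the excision legitimate), so $\dvalue^\star(\deviceId)$ is realised on one of the finitely many simple paths. Moreover $\dvalue^\star$ is a stable state: the minimal-path decomposition $\dvalue^\star(\deviceId) = \min\big(\{\funvalue^\mathsf{MP}(\dvalue^\star(\deviceId'), \ldots) : \deviceId' \sim \deviceId\} \cup \{\dvalue(\deviceId)\}\big)$, together with the raising clause $\funvalue^\mathsf{R}(\lvalue, \lvalue, \overline\anyvalue) = \lvalue$, gives $v'(\deviceId) = \dvalue^\star(\deviceId)$ when $v \equiv \dvalue^\star$.

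\emph{Lower bound.} I would show that after $t_0$ the value $v_t(\deviceId)$, and also the auxiliary $\ell_1$ computed at $\deviceId$, is eventually and permanently $\geq \dvalue^\star(\deviceId)$. First, all values stay $\leq$-bounded below by the minimum $\mu$ of the values present at $t_0$ and of $\dvalue$ over the network (an easy induction from $\funvalue^\mathsf{R} \geq \min$ of its first two arguments and $\funvalue^\mathsf{MP}(\lvalue) > \lvalue$). Then a minimal-counterexample / well-founded-descent argument, using well-foundedness of the value domain above $\mu$ (a standing assumption of the fragment, automatic for finite data): take a $\leq$-minimal value $b$ that ever occurs, at some device, strictly below the relevant $\dvalue^\star$. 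If $b$ is an offending $\ell_1$, it cannot equal $\dvalue(\deviceId) \geq \dvalue^\star(\deviceId)$, so $b = \funvalue^\mathsf{MP}(v(\deviceId'), \ldots)$ for a neighbour $\deviceId'$; were $v(\deviceId') \geq \dvalue^\star(\deviceId')$, monotonicity would give $b \geq \funvalue^\mathsf{MP}(\dvalue^\star(\deviceId'),\ldots) \geq \dvalue^\star(\deviceId) > b$, impossible, so by progressivity $v(\deviceId') < b$ is a strictly smaller offending value, contradiction. If instead $b$ is an offending output $v'(\deviceId)$, then $b \geq \min(\ell_1, v(\deviceId))$, so either $\ell_1$ or the previous $v(\deviceId)$ is an offending value $\leq b$ at an earlier point, again contradicting minimality.

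\emph{Convergence from above.} Sort the distinct values of $\dvalue^\star$ as $v_1 < \cdots < v_k$ (possibly $v_k = \top$) and set $D_i = \{\deviceId : \dvalue^\star(\deviceId) = v_i\}$; I argue by induction on $i$ that eventually every device in $D_1 \cup \cdots \cup D_i$ holds $\dvalue^\star(\deviceId)$ forever. Fix $\deviceId \in D_i$ after the earlier classes have stabilised. A minimal-weight path to $\deviceId$ is either $\deviceId$ alone, whence $\dvalue(\deviceId) = v_i$, or reaches $\deviceId$ through a neighbour $\deviceId'$ with $v_i = \funvalue^\mathsf{MP}(\dvalue^\star(\deviceId'),\ldots) > \dvalue^\star(\deviceId')$ by progressivity, so $\deviceId' \in D_1 \cup \cdots \cup D_{i-1}$ and $v(\deviceId') = \dvalue^\star(\deviceId')$ already. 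In both cases $\ell_1 \leq v_i$ at every firing of $\deviceId$; combined with $\ell_1 \geq \dvalue^\star(\deviceId) = v_i$ from the lower bound, $\ell_1 = v_i$. Hence each firing sets $v'(\deviceId) = \funvalue^\mathsf{R}(v_i, v(\deviceId), \overline\anyvalue)$, which by the raising property equals either $v_i$ --- and then stays $v_i$ by $\funvalue^\mathsf{R}(v_i, v_i, \cdot) = v_i$ --- or is $\vartriangleright v(\deviceId)$; since $\vartriangleleft$ is noetherian the second case occurs finitely often, so $\deviceId$ is absorbed at $v_i$. As this holds for any fair evolution and any initial configuration, the expression self-stabilises to $\dvalue^\star$, which is the claim.

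\emph{Main obstacle and payoff.} The delicate point is the interference of $\funvalue^\mathsf{R}$ with the natural shortest-path relaxation dynamics: $\funvalue^\mathsf{R}$ may transiently drive a value above its target along $\vartriangleleft$, so one must cleanly separate the $\leq$-direction (controlled by the well-founded descent and the stagewise pinning of $\ell_1$) from the $\vartriangleleft$-direction (controlled by noetherianity of $\vartriangleleft$). This separation is also what makes the proof reusable: nothing above uses the initial configuration, and $\funvalue^\mathsf{R}$ enters only through the two clauses actually invoked, so replacing it by the identity on its first argument alters neither $\dvalue^\star$ nor the conclusion --- giving the corresponding clauses of Theorem~\ref{thm:substitutability}, with Theorem~\ref{thm:stabilisation} then following by combining this lemma with the (easier) analogues for the converging and acyclic $\repK$ patterns.
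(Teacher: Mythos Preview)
Your decomposition into a lower-bound half and a convergence-from-above half is natural, and the second half is essentially the paper's inductive step. The gap is in the first half.

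Your minimal-counterexample argument, as written, attempts to show that \emph{no} value strictly below $\dvalue^\star(\deviceId)$ ever occurs after $t_0$. That statement is simply false: the stored values at $t_0$ are arbitrary and can lie below $\dvalue^\star$. Concretely, Case~2 of your descent (``either $\ell_1$ or the previous $v(\deviceId)$ is an offending value $\leq b$ at an earlier point'') does not yield a contradiction when the chain bottoms out at the initial value $v_{t_0}(\deviceId)$, and when the earlier offending value equals $b$ you only push the witness back in time without decreasing it, so you never contradict $\leq$-minimality of $b$. At best your argument establishes that the global minimum of offending values is non-decreasing; it does not show that this minimum ever rises past any given threshold $\dvalue^\star(\deviceId)$. (You also invoke well-foundedness of $\leq$, which is not among the standing hypotheses for the minimising pattern; the paper assumes only that $\leq$ is a lower semilattice and that $\vartriangleleft$ is noetherian.)

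The missing ingredient is precisely the one you deferred to the second half: noetherianity of $\vartriangleleft$, via the raising function, is what forces the minimum among not-yet-correct devices to strictly increase. The paper does not split the proof into lower/upper bounds; instead it runs a single induction over devices in order of increasing $\lvalue_{\deviceId^i}$, showing that the minimum value of $\xname$ among the non-stabilised devices (i) cannot decrease below its current level (from $\funvalue^\mathsf{R} \geq \min$ and progressivity of $\funvalue^\mathsf{MP}$), and (ii) must eventually strictly increase, because the raising clause ``either output $= \ell_1$ or output $\vartriangleright \lvalue_2$'' together with noetherianity of $\vartriangleleft$ forces every minimum-realising device to eventually adopt its $\ell_1$, which is strictly above the old minimum. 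Once that minimum reaches $\lvalue_{\deviceId^i}$, the argument for locking $\deviceId^i$ at its target is essentially your convergence-from-above step. If you want to keep your two-halves structure, the fix is to replace the minimal-counterexample in the lower-bound half by exactly this ``minimum rises'' argument, using $\vartriangleleft$-noetherianity there as well.
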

\begin{proof}
	Let $\lvalue_\deviceId$ be the minimal weight for a path $P$ ending in $\deviceId$, and let $\deviceId^0, \deviceId^1, \ldots$ be the list of all devices $\deviceId$ ordered by increasing $\lvalue_\deviceId$. Notice that the path $P$ of minimal weight $\lvalue_{\deviceId^i}$ for device $i$ can only pass through nodes such that $\lvalue_{\deviceId^j} \leq \lvalue_{\deviceId^i}$ (thus s.t. $j < i$). In fact, whenever a path $P$ contains a node $j$ the weight of its prefix until $j$ is at least $\lvalue_{\deviceId^j}$; thus any longer prefix has weight strictly greater than $\lvalue_{\deviceId^j}$ since $\funvalue^\mathsf{MP}$ is progressive.
	
	Let $\Cfg_0 \nettran{}{\deviceId_0}{} \Cfg_1 \nettran{}{\deviceId_1}{} \ldots$ be a fair evolution\footnote{Notice that $\deviceId_0$ is the first device firing while $\deviceId^0$ is the device with minimal weight.} and assume w.l.o.g. that all subexpressions of $\s$ not involving $\xname$ have already self-stabilised to computational fields $\overline\dvalue$, $\dvalue$ (as in the definition of weight) in the initial state $\Cfg_0$. We now prove by complete induction on $i$ that device $\deviceId^i$ stabilises to $\lvalue_{\deviceId^i}$ after a certain step $t_i$.
	
	Assume that devices $\deviceId^j$ with $j < i$ are all self-stabilised (from a certain step $t_{i-1}$), and consider the evaluation of expression $\s$ in a device $\deviceId^k$ with $k \geq i$. Since the local argument $\lvalue$ of \texttt{minHoodLoc} is also the weight of the single-node path $P = \deviceId^k$, it has to be at least $\lvalue \geq \lvalue_{\deviceId^k} \geq \lvalue_{\deviceId^i}$. Similarly, the restriction $\fvalue'$ of the field argument $\fvalue$ of \texttt{minHoodLoc} to devices $\deviceId^j$ with $j < i$ has to be at least $\fvalue' \geq \lvalue_{\deviceId^k} \geq \lvalue_{\deviceId^i}$ since it corresponds to weights of (not necessarily minimal) paths $P$ ending in $\deviceId^k$ (obtained by extending a minimal path for a device $\deviceId^j$ with $j < i$ with the additional node $\deviceId^k$). Finally, the complementary restriction $\fvalue''$ of $\fvalue$ to devices $\deviceId^j$ with $j \geq i$ is strictly greater than the minimum value for $\xname$ among those devices, since $\funvalue^\mathsf{MP}$ is progressive.
	
	It follows that as long as the minimum value for $\xname$ among non-stable devices is lower than $\lvalue_{\deviceId^i}$, the result of the \texttt{minHoodLoc} subexpression is strictly greater than \TYtask{this} minimum value. Since the overall value of $\s$ is obtained by combining the output of \texttt{minHoodLoc} with the previous value for $\xname$ through the rising function $\funvalue^\mathsf{R}$ (and a rising function does not drop below the minimum of its arguments), the minimum value for $\xname$ among non-stable devices cannot decrease as long as it is lower than $\lvalue_{\deviceId^i}$, and it cannot drop below $\lvalue_{\deviceId^i}$ if it is already greater than that.
	
	Furthermore, \TYtask{the} minimum has to eventually increase until it reaches at least $\lvalue_{\deviceId^i}$. Recall that a rising function selects its first argument infinitely often (since the order $\vartriangleleft$ is noetherian). Thus each device realising a minimum for $\xname$ among non-stable devices has to eventually evaluate $\s$ to the output of the \texttt{minHoodLoc} subexpression, which is strictly higher than the previous minimum, and it will not be able to reach \TYtask{the} previous minimum afterwards.
	
	Let $t' \geq t_{i-1}$ be the first step in which the minimum for $\xname$ among non-stable devices is at least $\lvalue_{\deviceId^i}$, and consider device $\deviceId^i$. Let $P$ be a path of minimum weight for $\deviceId^i$, then either:
	\begin{itemize}
		\item $P = \deviceId^i$, so that $\lvalue_{\deviceId^i}$ is exactly the local argument of the \texttt{minHoodLoc} operator, hence also the output of it (since the field argument is greater than $\lvalue_{\deviceId^i}$).
		\item $P = Q, \deviceId^i$ where $Q$ ends in $\deviceId^j$ with $j < i$. Since $\funvalue^\mathsf{MP}$ is monotonic non-decreasing, the weight of $Q', \deviceId^i$ (where $Q'$ is minimal for $\deviceId^j$) is not greater than that of $P$; in other words, $P' = Q', \deviceId^i$ is also a path of minimum weight. It follows that $\fvalue(\deviceId^j)$ (where $\fvalue$ is the field argument of the \texttt{minHoodLoc} operator) is exactly $\lvalue_{\deviceId^i}$.
	\end{itemize}
	In both cases, the output of \texttt{minHoodLoc} in $\deviceId^i$ stabilises to $\lvalue_{\deviceId^i}$ from $t'$ on. Let $t_i$ be the first step after $t'$ in which the rising function $\funvalue^\mathsf{R}$ selects its first argument $\lvalue_{\deviceId^i}$. Then expression $\s$ in device $\deviceId^i$ is self-stabilised to $\lvalue_{\deviceId^i}$ from $t_i$ on, concluding the inductive step and the proof.
\end{proof}

Let $\dvalue$ be a computational field as defined in Section \ref{ssec:ss_eventual}. We write $\applySubstitution{\s}{\substitution{\xname}{\dvalue}}$ to indicate an aggregate process in which each device is computing a possibly different substitution $\applySubstitution{\s}{\substitution{\xname}{\dvalue(\deviceId)}}$ of the same expression.

\begin{lem} \label{lem:stabilisation}
	Assume that every built-in operator is self-stabilising. Let $\s$ be an expression with free variables $\overline\xname$ in the self-stabilising fragment, and $\overline\dvalue$ be a sequence of computational fields of the same length. Then $\applySubstitution{\s}{\substitution{\overline\xname}{\overline\dvalue}}$ is self-stabilising.
\end{lem}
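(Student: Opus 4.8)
The plan is to prove the statement by structural induction on the self-stabilising expression $\s$ (Figure~\ref{fig:fragment}), using the induction hypothesis at full generality: it applies to every sub-expression $\s'$ together with \emph{any} sequence of computational fields substituted for its free variables. The key structural device I would thread through every case is that self-stabilisation is an \emph{asymptotic} property of a fair evolution: if, after finitely many firings, every fair continuation of the composite computation coincides with a fair continuation of a computation already known to self-stabilise --- on the part of the value-tree that is still being updated --- then the composite self-stabilises too (to a state recording the stable limits of all sub-trees). Combined with the compositionality of the big-step semantics (each rule evaluates sub-expressions against the aligned sub-environments) and with fairness (every device fires again after any step, and the propagation of a stable value across the network completes in finitely many rounds), this lets me reduce each inductive case to an application of the induction hypothesis to strictly smaller sub-expressions, once the sub-expressions they depend on have stabilised.

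The base cases $\s=\xname_j$ and $\s=\anyvalue$ are immediate: after substitution each device computes a fixed value (a neighbouring field value is only domain-restricted by \ruleNameSize{[E-FLD]}, which is stable once the topology is fixed), so stabilisation occurs in one round. For the compositional constructs I argue as follows. For $\letK\;\xname=\s_1\;\inK\;\s_2$: by the induction hypothesis $\applySubstitution{\s_1}{\substitution{\overline\xname}{\overline\dvalue}}$ self-stabilises to some $\dvalue_1$, and from the first step at which this holds everywhere the computation is an execution of $\applySubstitution{\s_2}{\substitution{\overline\xname,\xname}{\overline\dvalue,\dvalue_1}}$, which self-stabilises by the induction hypothesis. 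For $\nbrK\{\s_1\}$: once $\s_1$ has stabilised to $\dvalue_1$, one further firing of each device followed by one of each of its neighbours (guaranteed by fairness) brings the \ruleNameSize{[E-NBR]} output to the stable neighbouring field of the values of $\dvalue_1$. For $\funvalue(\overline\s)$ with $\funvalue$ a built-in, the arguments stabilise (induction hypothesis) and the assumption that built-ins are self-stabilising functions gives the claim; for $\funvalue=\fname$ user-defined, I use that $\body{\fname}$ is itself an expression of the self-stabilising fragment with $\args{\fname}$ as free variables, and apply the induction hypothesis to it after the arguments have stabilised --- the termination-of-firing assumption makes this unfolding well-founded (recursive user-defined functions may be assumed absent from the decidable termination fragment). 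For $\ifK(\s_1)\{\s_2\}\{\s_3\}$: the induction hypothesis together with the type system give that $\s_1$ stabilises to a Boolean field $\dvalue_1$; thereafter the $\piB{\truevalue}/\piB{\falsevalue}$ domain restrictions make the two branches run as independent computations on the sub-networks $\{\deviceId:\dvalue_1(\deviceId)=\truevalue\}$ and its complement, each in the induced environment (which is a legitimate environment, with the restricted $\overline\dvalue$ still coherent), so the induction hypothesis applies to each branch and both value-subtrees stabilise.

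For the three $\repK$ patterns I proceed thus. The minimising pattern is exactly Lemma~\ref{lem:minimising_termination}, already proved. For the converging pattern $\repK(\e)\{(\xname)\toSymK{}\funvalue^\mathsf{C}(\nbrK\{\xname\},\nbrK\{\s\},\overline\e)\}$: by the induction hypothesis $\s$ stabilises to $\dvalue$; assuming it is already stable, let $M(\deviceId)=\dist(\xname(\deviceId),\dvalue(\deviceId))$. The converging property of $\funvalue^\mathsf{C}$ makes the network-wide maximum of $M$ non-increasing, and makes the set of devices realising it only shrink; by fairness this set eventually empties, so the maximum strictly decreases, and since the metric's range is well-founded it must reach $0$, i.e.\ $\xname(\deviceId)=\dvalue(\deviceId)$ everywhere, which is then a fixed point --- independent of the (possibly non-stabilising) auxiliary inputs $\overline\e$. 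For the acyclic pattern $\repK(\e)\{(\xname)\toSymK{}\funvalue(\muxK(\nbrlt(\s),\nbrK\{\xname\},\s),\overline\s)\}$: assume the potential $\s$ and the side inputs $\overline\s$ are stabilised (induction hypothesis). Since the network domain is finite under a fixed environment, the strict order on realised potential values is well-founded; I induct on potential rank, observing that a device of minimal rank sees an all-\falsevalue{} $\nbrlt$-filter, so $\funvalue$ is applied to already-stable data and its value stabilises after one firing, and a device of higher rank depends (through $\nbrlt$) only on strictly-lower, already-stable devices plus stable inputs, so after those have stabilised and one more round of propagation its value stabilises as well.

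The step I expect to be the main obstacle is making the ``asymptotic reduction'' rigorous and uniform across all cases --- in particular getting the bookkeeping right for $\ifK$, where stabilisation of a branch must be argued in a \emph{restricted} environment and the restricted $\overline\dvalue$ must be checked to remain coherent computational fields there; for user-defined function application, where the mutual dependence between ``this lemma'' and ``functions of the fragment are self-stabilising'' must be broken by the termination assumption; and for the acyclic $\repK$ pattern, where the informal ``no cyclic dependencies'' must be turned into an honest induction on potential rank that also accounts for the rounds of neighbour-to-neighbour propagation needed after the lower-rank devices stabilise. Theorems~\ref{thm:stabilisation} and~\ref{thm:substitutability} should then follow by specialising this lemma (with $\overline\xname$ empty) and by reading off from the proofs that $\funvalue^\mathsf{R}$, $\funvalue^\mathsf{C}$, and the initial conditions never enter the determination of the limit.
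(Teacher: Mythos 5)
Your proof follows essentially the same route as the paper's: structural induction on the fragment's syntax, with the compositional cases ($\letK$, application, $\ifK$ with restricted sub-environments, $\nbrK$) handled by first letting sub-expressions stabilise and then invoking the induction hypothesis, the converging $\repK$ by a non-increasing network-wide maximum distance that must strictly decrease and hit $0$ by well-foundedness, the acyclic $\repK$ by induction along the stabilised potential order, and the minimising $\repK$ deferred to Lemma~\ref{lem:minimising_termination}. The only deviations are cosmetic: your ``set of maximisers shrinks under fairness'' phrasing of the converging case is a slightly cleaner rendering of the paper's argument, and in the acyclic case note that the paper invokes the induction hypothesis for $\funvalue(\s, \overline\s)$ rather than claiming stabilisation in a single firing, since $\funvalue$ may itself be a user-defined function carrying internal $\repK$ state.
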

\begin{proof}
	The proof proceeds by induction on the syntax of expressions and programs. Let $\s$ be an expression in the fragment, then it can be:
	\begin{itemize}
		\item A variable $\xname_i$, so that $\applySubstitution{\s}{\substitution{\overline\xname}{\overline\dvalue}} = \dvalue_i$ is already self-stabilised.
		
		\item A value $\anyvalue$, so that $\applySubstitution{\s}{\substitution{\overline\xname}{\overline\dvalue}} = \anyvalue$ is already self-stabilised.
		
		\item A $\letK$-expression $\letK \; \xname = \s_1 \; \inK \; \s_2$. Fix an environment $\Envi$, in which expression $\s_1$ self-stabilise\TYtask{s} to $\dvalue$ after fire $t$. After $t$,  $\letK \; \xname = \s_1 \; \inK \; \s_2$ evaluates to the same value of the expression $\applySubstitution{\s_2}{\substitution{\xname}{\dvalue}}$ which is self-stabilising by inductive hypothesis.
		
		\item A functional application $\funvalue(\overline\s)$. Fix an environment $\Envi$, in which all expressions $\overline\s$ self-stabilise to $\overline\dvalue$ after fire $t$. After $t$, if $\funvalue$ is a built-in  function then $\funvalue(\overline\s)$ is already self-stabilised. Otherwise, if $\funvalue$ is a user-defined function then $\funvalue(\overline\s)$ evaluates to the same value of the expression $\applySubstitution{\body{\funvalue}}{\substitution{\args{\funvalue}}{\overline\dvalue}}$ which is self-stabilising by inductive hypothesis.

		\item A conditional $\s = \ifK (\s_1) \{\s_2\} \{\s_3\}$. Fix an environment $\Envi$, in which expression $\s_1$ self-stabilise\TYtask{s} to $\dvalue_\textit{guard}$. Let $\Envi_\truevalue$ be the sub-environment consisting of devices $\deviceId$ such that $\dvalue_\textit{guard}(\deviceId) = \truevalue$, and analogously $\Envi_\falsevalue$. Assume that $\s_2$ self-stabilises to $\dvalue_\truevalue$ in $\Envi_\truevalue$ and $\s_3$ to $\dvalue_\falsevalue$ in $\Envi_\falsevalue$. Since 
%
%%%%%%%%%%%%%%%%%%%%%%%%%%%%%
%by the restriction property (Section \ref{sec:properties}) 
%%%%%%%%%%%%%%%%%%%%%%%%%%%%%
%
 a conditional is computed in isolation in the above defined sub-environments, $\s$ self-stabilises to $\dvalue  = \dvalue_\truevalue \cup \dvalue_\falsevalue$.
		
		\item A neighbourhood field construction $\nbrK\{\s\}$. Fix an environment $\Envi$, in which expression $\s$ self-stabilise\TYtask{s} to $\dvalue$ after fire $t$. Then $\nbrK\{\s\}$ self-stabilises to the corresponding $\dvalue'$ after one more firing of each device, where $\dvalue'(\deviceId)$ is $\dvalue$ restricted to $\Topo(\deviceId)$.
		
		\item A converging $\repK$: $\s = \repK(\e)\{ (\xname) \toSymK{} \funvalue^\mathsf{C}(\nbrK\{\xname\}, \nbrK\{\s\}, \overline\e) \}$. Fix an environment $\Envi$ and a fair evolution of the network $\Cfg_0 \nettran{}{\deviceId_0}{} \Cfg_1 \nettran{}{\deviceId_1}{} \ldots$, and let $t$ be such that all subexpressions of $\s$ not containing $\xname$ have self-stabilised after $t$. Assume that $\s$ self-stabilises to $\dvalue$; we prove that $\s$ stabilises as well to the same $\dvalue$.
		
		Given any index $i \geq t$, let $d^i$ be the maximum distance $\xname - \dvalue(\deviceId^i)$ of $\xname$ from $\s$ realised by a device $\deviceId^i$ in the network. Let $t_0 = t$ and $t_{i+1}$ be the first firing of device $\deviceId^{t_i}$ after $t_i$. Since $\deviceId^{t_i}$ realises the maximum distance in the whole network $\Cfg_{t_i}$, no device firing between $t_i$ and $t_{i+1}$ can assume a value more distant than $d^{t_i}$ without violating the converging property. Thus $d^i$, $\deviceId^i$ remains the same in the whole interval from $t_i$ to $t_{i+1}$ (excluded).
		
		Finally, in fire $t_{i+1}$ device $\deviceId^{t_i}$ recomputes its value, necessarily obtaining a closer value to $\dvalue(\deviceId^{t_i})$ (by the converging property) thus forcing the overall maximal distance in the network to reduce: $d^{t_{i+1}} < d^{t_i}$. Since the set of possible values is finite, so are the possible distances and eventually the maximal distance $d^i$ will reach $0$.
		
		\item An acyclic $\repK$: $\s = \repK(\e)\{ (\xname) \toSymK{} \funvalue(\muxK(\nbrlt(\s_p), \nbrK\{\xname\}, \s), \overline\s) \}$. Fix an environment $\Envi$ and a fair evolution of the network $\Cfg_0 \nettran{}{\deviceId_0}{} \Cfg_1 \nettran{}{\deviceId_1}{} \ldots$, and let $t$ be such that all subexpressions of $\s$ not containing $\xname$ have self-stabilised after $t$.
		
		Let $t_0 \geq t$ be any fire of the device $\deviceId^0$ of minimal potential $\s_p$ in the network after $t$. Since $\deviceId^0$ is minimal, $\muxK(\nbrlt(\s_p), \nbrK\{\xname\}, \s)$ reduces to $\s$ and the whole $\s$ to $\funvalue(\s, \overline\s)$, which is self-stabilising (after some $t'_0 \geq t_0$) for inductive hypothesis.
		
		Let $t_1 \geq t'_0$ be any fire of the device $\deviceId^1$ of second minimal potential after $t'_0$. Then $\muxK(\nbrlt(\s_p), \nbrK\{\xname\}, \s)$ in $\deviceId^1$ only (possibly) depends on the value of the device of minimal potential, which is already self-stabilised. Thus by inductive hypothesis $\s$ self-stabilises also in $\deviceId^1$ after some index $t'_1 \geq t_1$. By repeating the same reasoning on all devices in order of increasing potential, we obtain a final $t'_n$ after which all devices have self-stabilised.
		
		\item A minimising $\repK$: this case is proved for closed expressions in Lemma \ref{lem:minimising_termination}, and its generalisation to open expressions is straightforward.
	\end{itemize}
\end{proof}

\begin{thm}[Restatement of Theorem \ref{thm:stabilisation} (Fragment Stabilisation)]
	Let $\s$ be a closed expression in the self-stabilising fragment, and assume that every built-in operator is self-stabilising. Then $\s$ is self-stabilising.
\end{thm}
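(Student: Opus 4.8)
The plan is to obtain Theorem~\ref{thm:stabilisation} as an immediate corollary of Lemma~\ref{lem:stabilisation}. Since $\s$ is closed we have $\FV{\s}=\emptyset$, so $\s$ coincides with $\applySubstitution{\s}{\substitution{\overline\xname}{\overline\dvalue}}$ when $\overline\xname$ and $\overline\dvalue$ are both taken to be the empty sequence $\emptyseq$. Instantiating Lemma~\ref{lem:stabilisation} at this empty substitution, together with the standing hypothesis that every built-in operator is self-stabilising, directly gives that $\s$ is self-stabilising. No further argument is required: the whole technical burden has already been discharged in Lemmas~\ref{lem:minimising_termination} and~\ref{lem:stabilisation}.

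What I would still check is that the notion of ``self-stabilising'' proved inside Lemma~\ref{lem:stabilisation} is literally the one in the definition of (self-)stabilisation, i.e.\ that every fair evolution not only stabilises but stabilises to a \emph{common} state, independent of the initial configuration $\Cfg_0$. This I would verify by inspecting each case of the lemma's proof: for the converging $\repK$ the limit is forced to equal the eventual value of the inner expression $\s$, hence is a function of $\Envi$ alone; for the acyclic $\repK$ the induction over devices of increasing potential pins the eventual value down layer by layer, again in terms of $\Envi$ only; and for the minimising $\repK$, Lemma~\ref{lem:minimising_termination} characterises the eventual value at each device as the minimal path-weight in $\s$ ending at that device, which is determined by the (already stabilised) environment and not by $\Cfg_0$. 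The remaining cases (variable, value, $\letK$, function application, $\ifK$, $\nbrK$) are compositional and preserve this independence. Hence the lemma delivers self-stabilisation in the required sense, and so does the theorem.

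The main obstacle, accordingly, is not located in the theorem itself but upstream: the delicate argument is Lemma~\ref{lem:minimising_termination}, which needs the well-foundedness of the convergence metric, the noetherianity of the order $\vartriangleleft$ driving $\funvalue^\mathsf{R}$, and the monotone-progressive behaviour of $\funvalue^\mathsf{MP}$ in order to run the ``process the devices in order of increasing weight'' induction. Once those ingredients are fixed, Theorem~\ref{thm:stabilisation} follows with essentially no additional work, and by the same inspection of the two lemmas' proofs one also reads off Theorem~\ref{thm:substitutability}, which records the precise limiting values. Finally, since the fragment is closed under function application, the statement extends verbatim from expressions to whole programs.
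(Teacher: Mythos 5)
Your proposal matches the paper's proof exactly: Theorem \ref{thm:stabilisation} is obtained as an immediate corollary of Lemma \ref{lem:stabilisation} by taking $\s$ closed (empty sequence of free variables and substituted fields), with all the technical work residing in Lemmas \ref{lem:minimising_termination} and \ref{lem:stabilisation}. The additional sanity checks you mention (that each case of the lemma pins the limit down as a function of the environment alone) are sound and consistent with the paper's argument, so no gap remains.
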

\begin{proof}
	Follows directly from Lemma \ref{lem:stabilisation} when $\s$ has no free variables.
\end{proof}

\begin{thm}[Restatement of Theorem \ref{thm:substitutability} (Substitutability)]
	The following three equivalences hold:
	\begin{itemize}
		\item Each $\repK$ in a self-stabilising fragment 
    self-stabilises to the same value
 %is result-equivalent 
under arbitrary substitution of the initial condition.
		\item The \emph{converging $\repK$} pattern   
self-stabilises to the same value of
%is result-equivalent to 
  the single expression $\s$ occurring in it.
		\item The \emph{minimising $\repK$} pattern 
  self-stabilises to the same value of
%is result-equivalent to 
the analogous pattern where $\funvalue^\mathsf{R}$ is the identity on its first argument.
	\end{itemize}
\end{thm}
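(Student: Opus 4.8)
The plan is to obtain all three equivalences as corollaries of the proofs of Lemma~\ref{lem:minimising_termination} and Lemma~\ref{lem:stabilisation}, each of which not merely establishes self-stabilisation of a $\repK$ pattern but also identifies the computational field to which it converges purely in terms of the eventual behaviour of that pattern's subexpressions. Since replacing the initial condition $\e$ by another closed expression, or replacing $\funvalue^\mathsf{R}$ by the identity on its first argument, keeps the expression inside the self-stabilising fragment, Theorem~\ref{thm:stabilisation} already guarantees that each modified expression is self-stabilising; so in every case only the identification of the limit remains to be checked.

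For the first bullet, I would first note, directly from rule \ruleNameSize{[E-REP]}, that the initial condition $\e$ supplies the value substituted for $\xname$ only on a device's \emph{first} firing (the branch $\deviceId \notin \domof{\Trees}$); on every later firing $\xname$ is replaced by the previous value of the whole $\repK$-expression, in which $\e$ plays no part. Hence, once every device has fired at least once, the evolution of the $\repK$ value---and a fortiori its stable value---is determined solely by the body and its (already self-stabilising) subexpressions. I would then make this precise by observing that the three inductive arguments of Lemma~\ref{lem:stabilisation} and Lemma~\ref{lem:minimising_termination} are structured exactly this way: they fix a round $t$ after which all subexpressions not containing $\xname$ have stabilised, and they characterise the limit (the minimal weight, or $\builtindenot{}{\s}$, or the value determined by the potential-induction) without ever mentioning $\e$. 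I expect this bullet to be the main obstacle, precisely because one wants a \emph{uniform} argument over three structurally different patterns: rather than setting up a fresh configuration-level (bi)simulation between the two transition systems---delicate, since the recorded value-trees differ in the $\e$-subtree every round---I would lean on the existing proofs, where independence from $\e$ is manifest.

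For the second bullet, I would specialise the converging-$\repK$ case of the proof of Lemma~\ref{lem:stabilisation}. With $\dvalue = \builtindenot{}{\s}$ the eventual behaviour of the inner self-stabilising expression, that argument shows that, after all subexpressions stabilise, the network-wide quantity $\max_{\deviceId} \dist(\xname(\deviceId), \dvalue(\deviceId))$ is non-increasing---since by the $\mathsf{C}$ property each recomputed value's distance to that device's own $\s$-value (which is $\fvaluealt(\deviceId)$ for $\fvaluealt = \nbrK\{\s\}$) is bounded by the maximum over that device's neighbours, itself $\le$ the global maximum---and strictly decreases each time the device currently realising the maximum fires. As the value set is finite the maximum reaches $0$, so the $\repK$ value converges pointwise to $\dvalue$; that is, the converging $\repK$ self-stabilises to the same field as $\s$. (The only care point here is reconciling the $\mathsf{C}$ property, stated with the maximum over a neighbourhood, with the global maximal-distance argument, which is harmless since each local maximum is $\le$ the global one.)

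For the third bullet, I would invoke Lemma~\ref{lem:minimising_termination} twice. It gives that the minimising $\repK$ self-stabilises, in each device $\deviceId$, to the minimal weight of a path ending in $\deviceId$, where the \emph{weight} of a path is defined using only $\funvalue^\mathsf{MP}$ and the eventual behaviours of $\s$ and $\overline\s$---the function $\funvalue^\mathsf{R}$ does not enter this definition (it only governs the transient, which is why the proof merely waits for $\funvalue^\mathsf{R}$ to select its first argument infinitely often, as guaranteed by noetherianity of $\vartriangleleft$). The identity on the first argument, $\funvalue(\lvalue_1, \lvalue_2, \overline\anyvalue) = \lvalue_1$, is raising with respect to any pair of partial orders (Example~\ref{ex:propR}), so the modified pattern is again a minimising $\repK$ in the fragment, and a second application of Lemma~\ref{lem:minimising_termination} yields the very same minimal-weight field; hence the two patterns share their eventual behaviour.
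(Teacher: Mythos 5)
Your proposal is correct and follows essentially the same route as the paper, whose entire proof of Theorem \ref{thm:substitutability} is to inspect the proofs of Lemmas \ref{lem:minimising_termination} and \ref{lem:stabilisation} and observe that the limits identified there (the minimal path weight, the field $\builtindenot{}{\s}$, the potential-ordered induction) never depend on the initial condition $\e$ nor, for the minimising pattern, on $\funvalue^\mathsf{R}$. You merely make that inspection explicit, which is exactly what the paper intends.
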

\begin{proof}
	Follows from inspecting the proof of Lemmas \ref{lem:minimising_termination} and \ref{lem:stabilisation}.
\end{proof}

%\appendixhead{VIROLI}

% Bibliography
\bibliographystyle{plain}
\bibliography{long}

\begin{thebibliography}{10}

\bibitem{AG93}
Anish Arora and Mohamed~G. Gouda.
\newblock Closure and convergence: a foundation of fault-tolerant computing.
\newblock {\em IEEE Transactions on Software Engineering}, 19:1015--1027, 1993.

\bibitem{Meld}
Michael~P. Ashley-Rollman, Seth~Copen Goldstein, Peter Lee, Todd~C. Mowry, and
  Padmanabhan Pillai.
\newblock Meld: A declarative approach to programming ensembles.
\newblock In {\em IEEE International Conference on Intelligent Robots and
  Systems (IROS '07)}, pages 2794--2800, 2007.

\bibitem{AV91}
Baruch Awerbuch and George Varghese.
\newblock Distributed program checking: a paradigm for building
  self-stabilizing distributed protocols.
\newblock In {\em FOCS91 Proceedings of the 31st Annual IEEE Symposium on
  Foundations of Computer Science}, pages 258--267, 1991.

\bibitem{Beal:FLEX}
Jacob Beal.
\newblock Flexible self-healing gradients.
\newblock In {\em Proceedings of the 2009 ACM Symposium on Applied Computing},
  SAC '09, pages 1197--1201, New York, NY, USA, 2009. ACM.

\bibitem{crf}
Jacob Beal, Jonathan Bachrach, Daniel Vickery, and Mark Tobenkin.
\newblock Fast self-healing gradients.
\newblock In {\em Proceedings of ACM SAC 2008}, pages 1969--1975. ACM, 2008.

\bibitem{SpatialIGI2013}
Jacob Beal, Stefan Dulman, Kyle Usbeck, Mirko Viroli, and Nikolaus Correll.
\newblock Organizing the aggregate: Languages for spatial computing.
\newblock In Marjan Mernik, editor, {\em Formal and Practical Aspects of
  Domain-Specific Languages: Recent Developments}, chapter~16, pages 436--501.
  IGI Global, 2013.
\newblock A longer version available at: {\tt http://arxiv.org/abs/1202.5509}.

\bibitem{BPV-COMPUTER2015}
Jacob Beal, Danilo Pianini, and Mirko Viroli.
\newblock Aggregate programming for the {Internet of Things}.
\newblock {\em IEEE Computer}, 48(9), 2015.

\bibitem{BV-FOCAS2014}
Jacob Beal and Mirko Viroli.
\newblock Building blocks for aggregate programming of self-organising
  applications.
\newblock In {\em 2nd FoCAS Workshop on Fundamentals of Collective Systems},
  pages 1--6. IEEE CS, to appear, 2014.

\bibitem{beal:SFM}
Jacob Beal and Mirko Viroli.
\newblock {\em Aggregate Programming: From Foundations to Applications}, pages
  233--260.
\newblock Springer International Publishing, Cham, 2016.

\bibitem{BVPD-SASO2016}
Jacob Beal, Mirko Viroli, Danilo Pianini, and Ferruccio Damiani.
\newblock Self-adaptation to device distribution changes.
\newblock In Giacomo Cabri, Gauthier Picard, and Niranjan Suri, editors, {\em
  10th {IEEE} International Conference on Self-Adaptive and Self-Organizing
  Systems, {SASO} 2016, Augsburg, Germany, September 12-16, 2016}, pages
  60--69, 2016.
\newblock Best paper of IEEE SASO 2016.

\bibitem{butera}
William Butera.
\newblock {\em Programming a Paintable Computer}.
\newblock PhD thesis, MIT, Cambridge, USA, 2002.

\bibitem{Casadei:PMLDC16}
Roberto Casadei and Mirko Viroli.
\newblock Towards aggregate programming in scala.
\newblock In {\em First Workshop on Programming Models and Languages for
  Distributed Computing}, PMLDC '16, pages 5:1--5:7, New York, NY, USA, 2016.
  ACM.

\bibitem{clement2003self}
Lauren Clement and Radhika Nagpal.
\newblock Self-assembly and self-repairing topologies.
\newblock In {\em Workshop on Adaptability in Multi-Agent Systems, RoboCup
  Australian Open}, 2003.

\bibitem{coorephd}
Daniel Coore.
\newblock {\em Botanical Computing: A Developmental Approach to Generating
  Inter connect Topologies on an Amorphous Computer}.
\newblock PhD thesis, MIT, Cambridge, MA, USA, 1999.

\bibitem{Curino05mobiledata}
Carlo Curino, Matteo Giani, Marco Giorgetta, Alessandro Giusti, Amy~L. Murphy,
  and Gian~Pietro Picco.
\newblock Mobile data collection in sensor networks: The tinylime middleware.
\newblock {\em Elsevier Pervasive and Mobile Computing Journal}, 4:446--469,
  2005.

\bibitem{Damas-Milner:POPL-1982}
Luis Damas and Robin Milner.
\newblock Principal type-schemes for functional programs.
\newblock In {\em Symposium on Principles of Programming Languages}, POPL '82,
  pages 207--212. ACM, 1982.

\bibitem{LMCS:selfsabilisation}
Ferruccio Damiani and Mirko Viroli.
\newblock Type-based self-stabilisation for computational fields.
\newblock {\em Logical Methods in Computer Science}, 11(4), 2015.

\bibitem{DVB-SCP2016}
Ferruccio Damiani, Mirko Viroli, and Jacob Beal.
\newblock A type-sound calculus of computational fields.
\newblock {\em Science of Computer Programming}, 117:17 -- 44, 2016.

\bibitem{forte2015}
Ferruccio Damiani, Mirko Viroli, Danilo Pianini, and Jacob Beal.
\newblock Code mobility meets self-organisation: A higher-order calculus of
  computational fields.
\newblock In Susanne Graf and Mahesh Viswanathan, editors, {\em Formal
  Techniques for Distributed Objects, Components, and Systems}, volume 9039 of
  {\em Lecture Notes in Computer Science}, pages 113--128. Springer
  International Publishing, 2015.

\bibitem{DBLP:conf/emsoft/DarulovaKMS13}
Eva Darulova, Viktor Kuncak, Rupak Majumdar, and Indranil Saha.
\newblock Synthesis of fixed-point programs.
\newblock In {\em Proceedings of the International Conference on Embedded
  Software, {EMSOFT} 2013, Montreal, QC, Canada, September 29 - Oct. 4, 2013},
  pages 22:1--22:10. {IEEE}, 2013.

\bibitem{DasguptaCDC16}
Soura Dasgupta and Jacob Beal.
\newblock A {Lyapunov} analysis for the robust stability of an adaptive
  {Bellman-Ford} algorithm.
\newblock In {\em Decision and Control (CDC), 2016 IEEE 55th Conference on},
  pages 7282--7287. IEEE, 2016.

\bibitem{dean2008mapreduce}
Jeffrey Dean and Sanjay Ghemawat.
\newblock Mapreduce: simplified data processing on large clusters.
\newblock {\em Communications of the ACM}, 51(1):107--113, 2008.

\bibitem{D73b}
EW~Dijkstra.
\newblock Ewd391 self-stabilization in spite of distributed control.
\newblock In {\em Selected Writings on Computing: A Personal Perspective},
  pages 41--46. Springer-Verlag, 1982.
\newblock EWD391's original date is 1973.

\bibitem{dolev}
Shlomi Dolev.
\newblock {\em Self-Stabilization}.
\newblock MIT Press, 2000.

\bibitem{Dolev:1997:SPD:866056}
Shlomi Dolev and Ted Herman.
\newblock Superstabilizing protocols for dynamic distributed systems.
\newblock {\em Chicago Journal of Theoretical Computer Science}, 1997.

\bibitem{SDEF}
Bradley~R. Engstrom and Peter~R. Cappello.
\newblock The sdef programming system.
\newblock {\em Journal of Parallel and Distributed Computing}, 7(2):201 -- 231,
  1989.

\bibitem{GiavittoMGS02}
Jean-Louis Giavitto, Christophe Godin, Olivier Michel, and Przemyslaw
  Prusinkiewicz.
\newblock Computational models for integrative and developmental biology.
\newblock Technical Report 72-2002, Univerite d'Evry, LaMI, 2002.

\bibitem{GiavittoMGS05}
Jean-Louis Giavitto, Olivier Michel, Julien Cohen, and Antoine Spicher.
\newblock Computations in space and space in computations.
\newblock In Jean-Pierre Banâtre, Pascal Fradet, Jean-Louis Giavitto, and
  Olivier Michel, editors, {\em Unconventional Programming Paradigms}, volume
  3566 of {\em Lecture Notes in Computer Science}, pages 137--152. Springer
  Berlin Heidelberg, Berlin, 2005.

\bibitem{GH91}
Mohamed~G. Gouda and Ted Herman.
\newblock Adaptive programming.
\newblock {\em IEEE Transactions on Software Engineering}, 17:911--921, 1991.

\bibitem{HP01}
Ted Herman and Imran Pirwani.
\newblock A composite stabilizing data structure.
\newblock In {\em WSS01 Proceedings of the Fifth International Workshop on
  Self-Stabilizing Systems, Springer LNCS:2194}, pages 167--182, 2001.

\bibitem{FJ}
Atsushi Igarashi, Benjamin~C. Pierce, and Philip Wadler.
\newblock {Featherweight Java}: A minimal core calculus for {Java} and {GJ}.
\newblock {\em ACM Transactions on Programming Languages and Systems}, 23(3),
  2001.

\bibitem{KC98}
Mehmet~H. Karaata and Pranay Chaudhuri.
\newblock A self-stabilizing algorithm for strong fairness.
\newblock {\em Computing}, 60:217--228, 1998.

\bibitem{kondacs}
Attila Kondacs.
\newblock Biologically-inspired self-assembly of 2d shapes, using
  global-to-local compilation.
\newblock In {\em International Joint Conference on Artificial Intelligence
  (IJCAI)}, 2003.

\bibitem{starlisp}
C.~Lasser, J.P. Massar, J.~Miney, and L.~Dayton.
\newblock {\em Starlisp Reference Manual}.
\newblock Thinking Machines Corporation, 1988.

\bibitem{DBLP:journals/corr/Lluch-LafuenteL16}
Alberto Lluch{-}Lafuente, Michele Loreti, and Ugo Montanari.
\newblock Asynchronous distributed execution of fixpoint-based computational
  fields.
\newblock {\em CoRR}, abs/1610.00253, 2016.

\bibitem{Madden:SIGOPS-2002}
Samuel Madden, Michael~J. Franklin, Joseph~M. Hellerstein, and Wei Hong.
\newblock Tag: A tiny aggregation service for ad-hoc sensor networks.
\newblock {\em SIGOPS Oper. Syst. Rev.}, 36(SI):131--146, December 2002.

\bibitem{tota}
Marco Mamei and Franco Zambonelli.
\newblock Programming pervasive and mobile computing applications: The tota
  approach.
\newblock {\em ACM Trans. on Software Engineering Methodologies}, 18(4):1--56,
  2009.

\bibitem{MirolloStrogatz90}
Renato~E. Mirollo and Steven~H. Strogatz.
\newblock Synchronization of pulse-coupled biological oscillators.
\newblock {\em SIAM Journal on Applied Mathematics}, 50(6):1645--1662, 1990.

\bibitem{MoECAS17}
Yuanqiu Mo, Jacob Beal, and Soura Dasgupta.
\newblock Error in self-stabilizing spanning-tree estimation of collective
  state.
\newblock In {\em Workshop eCAS in Self-Adaptive and Self-Organizing Systems
  Workshops (SASOW), 2017 IEEE International Conference on}. IEEE, Sept 2017.

\bibitem{nagpalphd}
Radhika Nagpal.
\newblock {\em Programmable Self-Assembly: Constructing Global Shape using
  Biologically-inspired Local Interactions and Origami Mathematics}.
\newblock PhD thesis, MIT, Cambridge, MA, USA, 2001.

\bibitem{regiment}
Ryan Newton and Matt Welsh.
\newblock Region streams: Functional macroprogramming for sensor networks.
\newblock In {\em Workshop on Data Management for Sensor Networks}, pages
  78--87, aug 2004.

\bibitem{OlfatiSaber07}
R.~Olfati-Saber, J.~A. Fax, and R.~M. Murray.
\newblock Consensus and cooperation in networked multi-agent systems.
\newblock {\em Proceedings of the IEEE}, 95(1):215--233, January 2007.

\bibitem{viroli:replicatedgossip}
Danilo Pianini, Jacob Beal, and Mirko Viroli.
\newblock Improving gossip dynamics through overlapping replicates.
\newblock In {\em Coordination Models and Languages - 18th {IFIP} {WG} 6.1
  International Conference, {COORDINATION} 2016, Held as Part of the 11th
  International Federated Conference on Distributed Computing Techniques,
  DisCoTec 2016, Heraklion, Crete, Greece, June 6-9, 2016, Proceedings}, volume
  9686 of {\em Lecture Notes in Computer Science}, pages 192--207. Springer,
  2016.

\bibitem{PianiniJOS2013}
Danilo Pianini, Sara Montagna, and Mirko Viroli.
\newblock Chemical-oriented simulation of computational systems with
  {ALCHEMIST}.
\newblock {\em J. Simulation}, 7(3):202--215, 2013.

\bibitem{Protelis15}
Danilo Pianini, Mirko Viroli, and Jacob Beal.
\newblock Protelis: Practical aggregate programming.
\newblock In {\em ACM Symposium on Applied Computing 2015}, pages 1846--1853,
  April 2015.

\bibitem{ReLaCS}
F~Raimbault and D~Lavenier.
\newblock Relacs for systolic programming.
\newblock In {\em Int'l Conf. on Application-Specific Array Processors}, pages
  132--135, October 1993.

\bibitem{S93c}
Marco Schneider.
\newblock Self-stabilization.
\newblock {\em ACM Computing Surveys}, 25:45--67, 1993.

\bibitem{valiant1990bsp}
Leslie~G Valiant.
\newblock A bridging model for parallel computation.
\newblock {\em Communications of the ACM}, 33(8):103--111, 1990.

\bibitem{VBDP-SASO2015}
Mirko Viroli, Jacob Beal, Ferruccio Damiani, and Danilo Pianini.
\newblock Efficient engineering of complex self-organising systems by
  self-stabilising fields.
\newblock In {\em Self-Adaptive and Self-Organizing Systems (SASO), 2015 IEEE
  9th International Conference on}, pages 81--90. IEEE, Sept 2015.

\bibitem{ProtoSemantics12}
Mirko Viroli, Jacob Beal, and Kyle Usbeck.
\newblock Operational semantics of {Proto}.
\newblock {\em Science of Computer Programming}, 78(6):633--656, June 2013.

\bibitem{VD-COORD2014-LNCS2014}
Mirko Viroli and Ferruccio Damiani.
\newblock A calculus of self-stabilising computational fields.
\newblock In {\em Coordination Languages and Models}, volume 8459 of {\em
  LNCS}, pages 163--178. Springer-Verlag, jun 2014.

\bibitem{VDB-FOCLASA-CIC2013}
Mirko Viroli, Ferruccio Damiani, and Jacob Beal.
\newblock A calculus of computational fields.
\newblock In {\em Advances in Service-Oriented and Cloud Computing}, volume 393
  of {\em Communications in Computer and Information Science}, pages 114--128.
  Springer Berlin Heidelberg, 2013.

\bibitem{VPMSZ-SCP2015}
Mirko Viroli, Danilo Pianini, Sara Montagna, Graeme Stevenson, and Franco
  Zambonelli.
\newblock A coordination model of pervasive service ecosystems.
\newblock {\em Science of Computer Programming}, 110:3 -- 22, 2015.

\bibitem{hood}
Kamin Whitehouse, Cory Sharp, Eric Brewer, and David Culler.
\newblock Hood: a neighborhood abstraction for sensor networks.
\newblock In {\em Proceedings of the 2nd international conference on Mobile
  systems, applications, and services}. ACM Press, 2004.

\bibitem{yamins}
Daniel Yamins.
\newblock {\em A Theory of Local-to-Global Algorithms for One-Dimensional
  Spatial Multi-Agent Systems}.
\newblock PhD thesis, Harvard, Cambridge, MA, USA, December 2007.

\bibitem{Yao02thecougar}
Yong Yao and Johannes Gehrke.
\newblock The cougar approach to in-network query processing in sensor
  networks.
\newblock {\em SIGMOD Record}, 31:2002, 2002.

\bibitem{ZV-JPCC2011}
Franco Zambonelli and Mirko Viroli.
\newblock A survey on nature-inspired metaphors for pervasive service
  ecosystems.
\newblock {\em International Journal of Pervasive Computing and
  Communications}, 2011.

\end{thebibliography}

%% History dates
%\received{February 2007}{March 2009}{June 2009}
%
%% Electronic Appendix
%\elecappendix
%
%\medskip
%
%\section{This is an example of Appendix section head}
%
%	Heck.

\end{document}